\newtheorem{theorem}{Theorem}
\newtheorem{proposition}[theorem]{Proposition}
\newtheorem{lemma}[theorem]{Lemma}
\newtheorem{fact}{Fact}
\newtheorem{stheorem}{Theorem}
\newtheorem{sproposition}[stheorem]{Proposition}
\newtheorem{scorollary}[stheorem]{Corollary}
\newtheorem{slemma}[stheorem]{Lemma}
\theoremstyle{definition}
\newtheorem*{remark}{Remark}
\newtheoremstyle{red}{}{}{\normalfont}{}{\color{red!80!black}\bfseries}{.}{ }{}
\theoremstyle{red}
\let\nc\newcommand
\nc{\NegW}{W_\tau}
\nc{\RW}{\Omega_\FF}
\renewcommand{\*}{\textup{*}}
\renewcommand{\bar}{\;\rule{0pt}{9.5pt}\right|\;}
\newcommand{\lset}{\left\{\left.}
\newcommand{\rset}{\right\}}
\nc{\lsetr}{\left\{\,}
\nc{\rsetr}{\right.\right\}}
\nc{\barr}{\;\rule{0pt}{9.5pt}\left|\;}
\DeclareMathOperator{\Tr}{Tr}
\DeclareMathOperator{\supp}{supp}
\newcommand{\norm}[2]{\left\lVert#1\right\rVert_{\,#2}}
\newcommand{\proj}[1]{\ket{#1}\!\bra{#1}}
\nc{\id}{\mathbbm{1}}
\newcommand{\F}{\mathcal{F}}
\renewcommand{\H}{\mathcal{H}}
\newcommand{\D}{\mathcal{D}}
\renewcommand{\O}{\mathcal{O}}
\newcommand{\RR}{\mathbb{R}}
\newcommand{\DD}{\mathbb{D}}
\newcommand{\OO}{\mathcal{O}}
\newcommand{\FF}{\mathcal{F}}
\nc{\Dmax}{D_{\max}}
\let\F\FF
\let\O\OO
\nc{\SEP}{\mathrm{SEP}}
\nc{\STAB}{\mathrm{STAB}}
\nc{\PPT}{\mathrm{PPT}}
\nc{\PPTP}{\mathrm{PPTP}}
\nc{\SEPP}{\mathrm{SEPP}}
\nc{\SP}{\mathrm{KP}}
\nc{\FP}{\mathrm{FP}}
\nc{\CPTP}{{\mathrm{CPTP}}}
\let\wt\widetilde
\DeclareMathOperator{\cone}{cone}
\nc{\Op}{\mathcal{O}}
\nc{\idc}{\mathrm{id}}
\nc{\ve}{\varepsilon}
\nc{\Omax}{\O_{\mathrm{max}}}
\let\textleq\relax
\let\textgeq\relax
\let\texteq\relax
\newcommand{\texteq}[1]{\stackrel{\mathclap{\scriptsize \mbox{#1}}}{=}}
\newcommand{\textleq}[1]{\stackrel{\mathclap{\scriptsize \mbox{#1}}}{\leq}}
\newcommand{\textgeq}[1]{\stackrel{\mathclap{\scriptsize \mbox{#1}}}{\geq}}
\newcommand{\floor}[1]{\left\lfloor #1 \right\rfloor}
\nc{\sminfty}{{\infty,\bullet}}
\renewenvironment{boxed}[1]%
  {\expandafter\ifstrequal\expandafter{#1}{orange}{\begin{tcolorbox}[colback=orange!5,colframe=orange!15,breakable,enhanced]}{\begin{tcolorbox}[colback=white,colframe=gray!10,breakable,enhanced]}}%
  {\end{tcolorbox}}
\nc{\regrob}{\DD_{s,\FF}^{\infty}}
\nc{\regrobs}{\DD_{s,\FF}^{\sminfty}}
\renewcommand\onecolumngrid{%
\do@columngrid{one}{\@ne}%
\def\set@footnotewidth{\onecolumngrid}
\def\footnoterule{\kern-6pt\hrule width 1.5in\kern6pt}%
}
\begin{document}

\title{Overcoming entropic limitations on asymptotic state transformations\texorpdfstring{\\}{ }through probabilistic protocols}
 
\author{Bartosz Regula}
\email{bartosz.regula@gmail.com}
\affiliation{Department of Physics, Graduate School of Science, The University of Tokyo, Bunkyo-ku, Tokyo 113-0033, Japan}

\author{Ludovico Lami}
\email{ludovico.lami@gmail.com}
\affiliation{Institut f\"{u}r Theoretische Physik und IQST, Universit\"{a}t Ulm, Albert-Einstein-Allee 11, D-89069 Ulm, Germany}

\author{Mark M. Wilde}
\email{wilde@cornell.edu}
\affiliation{School of Electrical and Computer Engineering, Cornell University, Ithaca, New York 14850, USA}
\affiliation{Hearne Institute for Theoretical Physics, Department of Physics and Astronomy, and Center for Computation and Technology, Louisiana State University, Baton Rouge, Louisiana 70803, USA}


\begin{abstract}
The quantum relative entropy is known to play a key role in determining the asymptotic convertibility of quantum states in general resource-theoretic settings, often constituting the unique monotone that is relevant in the asymptotic regime.
We show that this is no longer the case when one allows stochastic protocols that may only succeed with some probability, in which case the quantum relative entropy is insufficient to characterize the rates of asymptotic state transformations, and a new entropic quantity based on a regularization of the Hilbert projective metric comes into play. 
Such a scenario is motivated by a setting where the cost associated with transformations of quantum states, typically taken to be the number of copies of a given state, is instead identified with the size of the quantum memory needed to realize the protocol. 
Our approach allows for constructing transformation protocols that achieve strictly higher rates than those imposed by the relative entropy.
Focusing on the task of resource distillation, we give broadly applicable strong converse bounds on the asymptotic rates of probabilistic distillation protocols, and show them to be tight in relevant settings such as entanglement distillation with non-entangling operations. 
This generalizes and extends previously known limitations that only applied to deterministic protocols. Our methods are based on recent results for probabilistic one-shot transformations as well as a new asymptotic equipartition property for the projective relative entropy.
\end{abstract}

\maketitle


\section{Introduction}

Transformations of quantum states underlie the majority of quantum information processing protocols, and understanding their capabilities and limitations is one of the fundamental problems posed in quantum information science. The ultimate form of such transformations is often understood to be the limit of infinitely many independent and identically distributed (i.i.d.) copies of a given quantum state being coherently manipulated. Although a somewhat idealized scenario, such limits often enjoy simplified properties and have found a multitude of uses in the characterization of different quantum phenomena~\cite{hayashi_2006-2,wilde_2017}. One of the appeals of these approaches is that their asymptotic transformation rates are naturally described by various entropic quantities, giving an explicit operational meaning to measures such as the quantum relative entropy~\cite{hiai_1991,ogawa_2000,matsumoto_2010,buscemi_2019,wang_2019} or the regularized relative entropy of entanglement~\cite{bennett_1996-1,brandao_2010}.
There are, however, many assumptions hidden within these standard Shannon-theoretic approaches.

In computing the asymptotic rates of quantum state manipulation, only one quantity is relevant: 
how many copies of a given state $\rho$ need to be produced per copy of a desired state $\omega$ in order to realize the transformation $\rho \to \omega$.
This is conceptually appealing, but arguably not fully indicative of practical restrictions on state manipulation: this approach assumes that we can coherently manipulate any number of copies $\rho^{\otimes n}$, and indeed there is no cost associated with the size of the quantum memory needed to perform such a manipulation. Ideally, the `cost function'  associated with a transformation should take more parameters into account, reflecting also the difficulty in manipulating many quantum states simultaneously to realize multi-copy operations. 

Here we propose a framework motivated by the opposite point of view: instead of taking into account the number of copies of $\rho$ needed for the transformation, let us focus purely on the quantum memory cost --- that is, how many copies of $\rho$ need to be manipulated at once.
The biggest difference between this setting and conventional approaches is that now we want to avoid manipulating too many states at once, but it does not matter \emph{how many times} we do it. Practically, this setting becomes relevant in a situation where generating copies of $\rho$ is much less expensive than storing and processing them. This is in many respects the case today, as the best sources can generate more than $\sim\!10^5$ entangled photon pairs per second per $\SI{}{\mW}$ of power invested~\cite{steinlechner_12, couteau_2018}, while the best quantum processors cannot store and process more than a few tens of qubits. From the conceptual standpoint, our paradigm 
can be compared with algorithmic complexity classes such as \textsf{PSPACE}, which ignore the time needed to evaluate a program, but tightly constrain its space complexity.

This approach allows us to employ repeat-until-success probabilistic transformation protocols --- potentially more powerful than typically employed deterministic ones --- without incurring an additional cost. The limits of the power of such protocols in the asymptotic setting have not been explored yet, which is what this work aims to address.

\subsection{Results}

We develop a technical toolset allowing for an exact characterization of the asymptotic limitations of probabilistic transformations of quantum states. We first introduce general converse bounds that constrain the performance of all probabilistic transformation schemes, valid in general resource-theoretic settings and thus allowing for broad applicability. 
Our bounds can be understood as the ultimate limitations of probabilistic transformations --- no matter how successful one is in 
the effort to stochastically increase the performance of state manipulation protocols, the restrictions revealed here cannot be overcome. 
We show that the bounds can be tight in many relevant cases, and in particular in characterizing the distillation of resources such as quantum entanglement, as well as in all state transformations within the class of affine resource theories (including thermodynamics, coherence, and asymmetry).
Notably, we exhibit probabilistic protocols that achieve rates strictly larger than bounds based on the quantum relative entropy: this shows that standard Shannon-theoretic approaches are insufficient to characterize the limitations of probabilistic transformations, as the latter can achieve performance forbidden by conventional entropic restrictions.
An explicit example of such a behavior is provided by evaluating our bounds exactly for all isotropic states in entanglement theory.

On the technical side, our methods rely on the regularization of a quantum divergence based on the recently introduced projective robustness~\cite{regula_2022}, and one of our main contributions is to develop an asymptotic equipartition property for this quantity.

We begin with a discussion of the setting in Section~\ref{sec:prelim}. Our main results are divided into two parts: general upper bounds on the performance of probabilistic protocols, discussed in Section~\ref{sec:converse}, and the achievability results showing the tightness of those bounds, considered in Section~\ref{sec:achiev}. Explicit examples are then discussed in Section~\ref{sec:examples}. 
For simplicity, the detailed technical proofs are deferred to the \hyperref[app:projective_entr]{Appendix}.


\section{Preliminaries}\label{sec:prelim}

\subsection{Quantum resources}

Adopting the framework of quantum resource theories~\cite{chitambar_2019}, our task is to transform a given quantum state $\rho$ into another state $\omega$, using some restricted set of free operations $\OO$ allowed within the given physical setting. An important property of any such free operations is that the set of states that can be prepared with the given operations --- the free states $\FF$ --- remains invariant under them, i.e.\ $\sigma \in \F \Rightarrow \Lambda(\sigma) \in \FF$ for any channel $\Lambda \in \O$. We will take this as the only requirement that the free operations need to satisfy, which is often referred to as resource--non-generating operations. This will ensure that the limitations obtained in our work apply to other choices of physically relevant free operations, which are generally subsets of the resource--non-generating ones.

Above, we have implicitly assumed that both the input and output spaces of the map $\Lambda$ have associated sets of free states $\FF \subseteq \D(\H)$, with $\D(\H)$ denoting all density operators acting on the given Hilbert space $\H$. More generally, when discussing transformations between many copies of quantum states, we will assume that each set $\D(\H^{\otimes n})$ has its own associated set of free states $\FF_n$; we use $\FF$ to refer to the whole family $(\FF_n)_n$ for simplicity.
Throughout this work we assume that the underlying Hilbert space has finite dimension.

In order to ensure that the asymptotic quantities encountered throughout the manuscript are well defined, we follow~\cite{brandao_2010} in introducing a set of basic axioms that the given resource theory should satisfy. These assumptions are obeyed in virtually every theory of interest, and for simplicity we assume that all sets $\FF_n$ considered here satisfy them.

\begin{enumerate}[{Axiom }I.]
\item Each $\FF_n$ is convex and closed.
\item There exists a full-rank state $\sigma$ such that $\sigma^{\otimes n} \in \FF_n$ for all $n$.
\item The sets $\FF_n$ are closed under partial trace: if $\sigma \in \F_{n+1}$, then $\Tr_{k} \sigma \in \F_{n}$ for every $k \in \{1,\ldots,n+1\}$.
\item The sets $\FF_n$ are closed under tensor product: if $\sigma \in \F_n$ and $\sigma' \in \F_{m}$, then $\sigma \otimes \sigma' \in \F_{n+m}$.
\end{enumerate}

\begin{figure}[t]
\centering
\hspace{-0.05cm}\includegraphics[width=8.6cm]{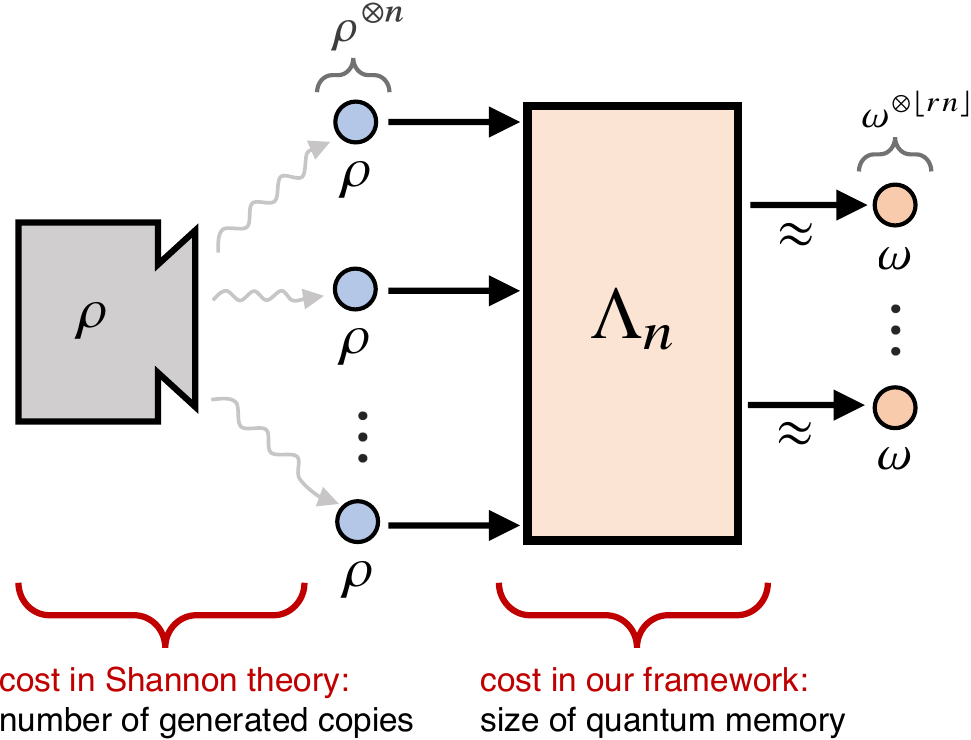}
\caption{\textbf{Asymptotic state transformations.} A general scheme for the conversion of a state $\rho$ into another state $\omega$ takes $n$ copies of $\rho$ and manipulates them with some protocol $\Lambda_n$ such that $\Lambda_n (\rho^{\otimes n}) \approx \omega^{\otimes \floor{rn}}$, with the conversion becoming exact in the limit $n \to \infty$. In conventional quantum Shannon theory, each copy of $\rho$ used incurs a cost; the rate $r$ then tells us how many copies of $\omega$ we can obtain per copy of $\rho$. In our setting, we instead consider the size of the manipulation protocol $\Lambda_n$ to be the costly parameter. If the protocols $\Lambda_n$ were deterministic, the two settings would be exactly the same; however, our approach allows us to take $\Lambda_n$ to be a probabilistic operation and repeat the protocol until it succeeds, at no extra cost.
}
\label{fig:plot}
\end{figure}

\subsection{Asymptotic transformation rates}

A deterministic transformation rate is given by
\begin{equation}\begin{aligned}
  &r(\rho \to \omega) \\&\coloneqq \sup \lsetr r \barr \lim_{n \to \infty}\, \inf_{\Lambda_n \in \OO \cap \mathrm{CPTP}} \norm{\Lambda_n(\rho^{\otimes n}) - \omega^{\otimes \floor{rn}}}{1} = 0 \rsetr,
\end{aligned}\end{equation}
(cf.\ Fig.~\ref{fig:plot}), where we have emphasized that the allowed free operations belong to the set of completely positive and trace-preserving maps (CPTP).
Recall, however, that our setting allows us to employ probabilistic operations --- which are certainly completely positive, but are only required to be trace non-increasing~\cite{davies_1970,ozawa_1984}. 
We will refer to a probabilistic map $\Lambda$ as free (resource non-generating) if it satisfies
\begin{equation}\begin{aligned}
  \sigma\in \FF \quad \Rightarrow \quad \frac{\Lambda(\sigma)}{\Tr \Lambda(\sigma)} \in \FF.
\end{aligned}\end{equation}
Let us then propose an alternative definition of an asymptotic transformation rate as
\begin{equation}\begin{aligned}
  &r_{\mathrm{prob}}(\rho \to \omega) \\&\coloneqq \sup \lsetr r \barr \lim_{n \to \infty} \inf_{\Lambda_n \in \OO} \norm{\frac{\Lambda_n(\rho^{\otimes n})}{\Tr \Lambda_n(\rho^{\otimes n})} - \omega^{\otimes \floor{rn}}}{1} = 0 \rsetr.
\end{aligned}\end{equation}
Just as the conventional rate $r$, the probabilistic rate $r_{\rm prob}$ is defined in the limit $n\to\infty$, but it is ultimately concerned with how many copies of $\omega$ we can obtain per each single copy of $\rho$ that we manipulate.

It is also of interest to study the \emph{strong converse rates} $r^\dagger(\rho \to \omega)$ and $r^\dagger_{\mathrm{prob}}(\rho \to \omega)$, which are defined analogously except that the error, instead of going to $0$, is merely constrained to not tend to $1$. Precisely,
\begin{equation}\begin{aligned}
  &r^\dagger_{\mathrm{prob}}(\rho \to \omega) \\
  &\coloneqq \sup \lsetr r\! \barr \liminf_{n \to \infty} \inf_{\Lambda_n \in \OO   } \frac12 \norm{\frac{\Lambda_n(\rho^{\otimes n})}{\Tr \Lambda_n(\rho^{\otimes n})} - \omega^{\otimes \floor{rn}}}{1} < 1 \rsetr\!,
\end{aligned}\end{equation}
and similarly for $r^\dagger$.
This gives a threshold for achievable protocols: attempting transformations at any rate higher than the strong converse would necessarily incur a very large (tending to $1$) error.

Similar rates were previously studied in the transformations of multipartite entangled pure states~\cite{chitambar_2008,yu_2014,vrana_2017}, in which context a surprising connection with algebraic complexity theory was identified. There, however, no error whatsoever was allowed in the transformation. Although this stricter requirement may be suitable for pure states, such a definition cannot be applied to general quantum systems: in the distillation from noisy mixed states, an asymptotically vanishing error \emph{must} be allowed for the transformations to be possible~\cite{fang_2020,regula_2022}, and --- as we will shortly see --- this error cannot vanish faster than exponentially.


\subsection{Quantum divergences}

Divergences (relative entropies), typically understood to be entropic distances between density matrices, are a commonly encountered concept in quantum theory~\cite{khatri_2020}. The most fundamental is certainly the quantum relative entropy $D(\rho\|\sigma) = \Tr \rho (\log \rho - \log \sigma)$~\cite{umegaki_1962} itself, where $\rho$ and $\sigma$ are density matrices. In resource-theoretic applications, it becomes important to study the optimized divergence $D_{\F}(\rho) \coloneqq \min_{\sigma \in \FF} D(\rho\|\sigma)$~\cite{vedral_1998, achievability}. Then, due to the potential non-additivity of this function~\cite{vollbrecht_2001}, in asymptotic settings it is the regularized relative entropy~\cite{donald_2002}
\begin{equation}\begin{aligned}
  D_\F^\infty(\rho) \coloneqq \lim_{n \to \infty} \frac1n D_{\F_n}(\rho^{\otimes n})
\end{aligned}\end{equation}
that finds operational applications.

A different divergence, one that finds use primarily in one-shot settings, is the max-relative entropy~\cite{datta_2009} defined as $D_{\max}(\rho\|\sigma) \coloneqq \inf \lset \log\lambda \bar \rho \leq \lambda \sigma \rset$. Defining the optimized max-relative entropy $D_{\max,\FF}$ as above, an important aspect of this quantity is that, after `smoothing' and regularizing, it actually yields the regularized relative entropy itself~\cite{datta_2009-2,brandao_2010}:\footnote{The curious reader might wonder whether this result is related to the generalized quantum Stein's lemma of~\cite{brandao_2010-1}, in whose proof some issues were recently identified~\cite{berta_2022}. Fortunately, the asymptotic equipartition property of $D_{\max}$ that we employ here is independent of that result, as can be seen both in the proof found in~\cite{brandao_2010} and the independent proof in~\cite{datta_2009-2}.}
\begin{equation}\begin{aligned}\label{eq:dmax_regularization}
  \lim_{\ve \to 0}\, \lim_{n\to\infty}\, \min_{\frac12 \norm{\rho'-\rho^{\otimes n}}{1}\leq \ve} \frac{1}{n} D_{\max,\FF_n} (\rho') = D_\F^\infty(\rho),
\end{aligned}\end{equation}
where $\rho'$ is constrained to be a density matrix.

Another well-known one-shot divergence is the min-relative entropy~\cite{datta_2009}, given by $D_{\min}(\psi\|\sigma) = - \log \braket{\psi|\sigma|\psi}$ for a pure state $\psi = \proj{\psi}$.

\section{General limitations on probabilistic transformations}\label{sec:converse}

\subsection{Projective relative entropy}

Our approach will require the study of a different type of divergence, which we dub the \emph{projective relative entropy}:
\begin{equation}\begin{aligned}
  \DD_{\Omega} (\rho \| \sigma) \coloneqq D_{\max}(\rho \| \sigma) + D_{\max} (\sigma \| \rho).
\end{aligned}\end{equation}
This is also known as the Hilbert projective metric between $\rho$ and $\sigma$ with respect to the positive semidefinite cone~\cite{bushell_1973,reeb_2011}. The notation $\DD$ is used here to avoid confusion with quantities based on the standard relative entropy $D$.

The optimized variant of this quantity,
\begin{equation}\begin{aligned}
  \DD_{\Omega,\FF} (\rho) \coloneqq \min_{\sigma \in \FF} \DD_{\Omega} (\rho\|\sigma),
\end{aligned}\end{equation}
which we refer to as the \emph{projective relative entropy of a resource},
is directly related to the projective robustness introduced in~\cite{regula_2022,regula_2021-4} and used to characterize one-shot transformations in probabilistic settings. See Appendix~\ref{app:projective_entr} for a discussion of the properties of $\DD_{\Omega,\FF}$. Since we are interested in asymptotic state manipulation, let us define the regularization
\begin{equation}\begin{aligned}
  \DD_{\Omega,\FF}^\infty (\rho) \coloneqq \lim_{n \to \infty} \frac1n \DD_{\Omega,\FF_n}\!\left(\rho^{\otimes n}\right).
\end{aligned}\end{equation}

As with the max-relative entropy, it is also natural to expect the \emph{smoothed} regularization of $\DD_\Omega$ to come into play; that is, the quantity
\begin{equation}\begin{aligned}
  \DD_{\Omega,\FF}^\sminfty (\rho) \coloneqq \lim_{\ve \to 0}\, \lim_{n\to\infty}\, \min_{\frac12 \norm{\rho'-\rho^{\otimes n}}{1}\leq \ve} \frac{1}{n} \DD_{\Omega,\FF_n} (\rho').
\end{aligned}\end{equation}
We show that this simply equals the regularized relative entropy with respect to the set~$\FF$ (see Appendix~\ref{app:aep}).

\begin{lemma}[Asymptotic equipartition property for the projective relative entropy] \label{lem:omega_reg}
In every convex resource theory with free states $\FF$, it holds that
\begin{equation}\begin{aligned}
  \DD_{\Omega,\FF}^\sminfty (\rho) = D_{\FF}^\infty(\rho).
\end{aligned}\end{equation}
\end{lemma}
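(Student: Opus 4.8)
The plan is to establish the two inequalities $\DD_{\Omega,\FF}^\sminfty(\rho) \ge D_\FF^\infty(\rho)$ and $\DD_{\Omega,\FF}^\sminfty(\rho) \le D_\FF^\infty(\rho)$ separately, taking the smoothed regularization of the max-relative entropy in Eq.~\eqref{eq:dmax_regularization} as the main external input. The lower bound is essentially free: since $D_{\max}(\sigma\|\rho')\ge 0$ for any two density matrices, one has $\DD_\Omega(\rho'\|\sigma) = D_{\max}(\rho'\|\sigma)+D_{\max}(\sigma\|\rho') \ge D_{\max}(\rho'\|\sigma)$, and minimizing over $\sigma\in\FF$ gives $\DD_{\Omega,\FF}(\rho')\ge D_{\max,\FF}(\rho')$ pointwise. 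This survives the smoothing minimization, the $1/n$ normalization, and both limits, so $\DD_{\Omega,\FF}^\sminfty(\rho)\ge D_\FF^\infty(\rho)$ follows directly from Eq.~\eqref{eq:dmax_regularization}.

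The real content is the upper bound, and the hard part is controlling the extra reverse term $D_{\max}(\sigma\|\rho')$, which is absent from the ordinary max-relative entropy and which can a priori blow up (or diverge) if $\rho'$ is poorly conditioned on the support of $\sigma$. For a smoothing level $\ve$ and block length $n$, I would take a density matrix $\rho'_n$ with $\tfrac12\|\rho'_n-\rho^{\otimes n}\|_1\le\ve$ attaining the smoothed optimum for $D_{\max,\FF}$, together with its free optimizer $\sigma_n\in\FF$, so that $\tfrac1n D_{\max}(\rho'_n\|\sigma_n)$ tends (as $n\to\infty$, then $\ve\to0$) to $D_\FF^\infty(\rho)$. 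The key idea is to perturb $\rho'_n$ \emph{towards its own optimizer}: set $\rho''_n := (1-\delta)\rho'_n + \delta\sigma_n$ for a fixed $\delta\in(0,1)$. This single choice controls all three relevant quantities at once. First, $\tfrac12\|\rho''_n-\rho^{\otimes n}\|_1\le\ve+\delta$ by the triangle inequality, so $\rho''_n$ is admissible at smoothing radius $\ve+\delta$. Second, writing $\mu:=2^{D_{\max}(\rho'_n\|\sigma_n)}\ge 1$, the inequality $\rho'_n\le\mu\sigma_n$ gives $\rho''_n\le[(1-\delta)\mu+\delta]\sigma_n\le\mu\sigma_n$, hence $D_{\max}(\rho''_n\|\sigma_n)\le D_{\max}(\rho'_n\|\sigma_n)$. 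Third — and this is what resolves the obstacle — the admixed component $\delta\sigma_n$ guarantees $\sigma_n\le\tfrac1\delta(\delta\sigma_n)\le\tfrac1\delta\rho''_n$, so the reverse term obeys the dimension-independent bound $D_{\max}(\sigma_n\|\rho''_n)\le\log(1/\delta)$.

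Combining the three estimates, and using $\sigma_n$ as a feasible free state in the definition of $\DD_{\Omega,\FF}(\rho''_n)$, I would obtain
\begin{equation}\begin{aligned}
\min_{\frac12\|\rho'-\rho^{\otimes n}\|_1\le\ve+\delta}\frac1n\DD_{\Omega,\FF}(\rho') \le \frac1n D_{\max}(\rho'_n\|\sigma_n) + \frac1n\log\frac1\delta .
\end{aligned}\end{equation}
The crucial observation is that the reverse term costs only $\tfrac1n\log(1/\delta)$, which vanishes as $n\to\infty$ for any fixed $\delta$. Thus in the limit $n\to\infty$ the right-hand side reproduces exactly the smoothed max-relative entropy at radius $\ve$, and the projective relative entropy pays nothing asymptotically for its reverse contribution.

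It then remains to assemble the nested limits. Writing $H(\eta)$ and $H_{\max}(\eta)$ for the inner limits $\lim_{n\to\infty}\min_{(\cdot)\le\eta}\tfrac1n\DD_{\Omega,\FF}$ and $\lim_{n\to\infty}\min_{(\cdot)\le\eta}\tfrac1n D_{\max,\FF}$ respectively, the displayed estimate gives $H(\ve+\delta)\le H_{\max}(\ve)$. Choosing $\ve=\delta=\eta/2$ and using that $H$ is non-increasing in the smoothing radius, I would conclude $\lim_{\eta\to0}H(\eta)\le\lim_{\ve\to0}H_{\max}(\ve)=D_\FF^\infty(\rho)$ by Eq.~\eqref{eq:dmax_regularization}, which is the desired upper bound. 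I expect the only delicate points to be precisely this bookkeeping of the limit order — verifying that the extra $\delta$-smoothing and the $o(1)$ correction $\tfrac1n\log(1/\delta)$ do not obstruct the reduction to the known result — and this is handled cleanly by the monotonicity of the inner minimization in the radius, so no genuinely new asymptotic input beyond Eq.~\eqref{eq:dmax_regularization} should be required.
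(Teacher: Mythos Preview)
Your proposal is correct and follows essentially the same approach as the paper: both arguments take the near-optimizer $\rho'_n$ for the smoothed $D_{\max,\FF}$ together with its free minimizer $\sigma_n$, mix in a small amount of $\sigma_n$ to form a new state (the paper's $\omega_n = \frac{2^{-\eta}\sigma_n + \rho'_n}{1+2^{-\eta}}$ is exactly your $\rho''_n=(1-\delta)\rho'_n+\delta\sigma_n$ with $\delta=\ve/2$), and use this to bound the reverse term by a constant $\log(1/\delta)$ that vanishes after dividing by $n$. The only cosmetic difference is that you keep $\delta$ as a separate parameter and reconcile the smoothing radii at the end via monotonicity, whereas the paper ties the mixing weight to $\ve$ from the outset.
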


That is, although $D_{\max}$ and $\DD_\Omega$ are very differently-behaved quantities, they both give rise to the same quantity asymptotically. 
The above result will greatly simplify the asymptotic bounds on probabilistic state transformations and allow for direct comparisons with the deterministic case.


\subsection{Converse bound}

 We now use the regularized projective relative entropy to establish a general converse bound on all state transformations in the probabilistic setting.

\begin{proposition}\label{prop:converse}
For all states $\rho$ and $\omega$, the following  holds:
\begin{equation}\begin{aligned}\label{eq:converse}
  r_{\rm{prob}}(\rho \to \omega) \leq \frac{\DD^\infty_{\Omega,\FF}(\rho)}{D^\infty_{\FF}(\omega)}.
\end{aligned}\end{equation}
\end{proposition}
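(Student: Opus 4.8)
The plan is to run a converse argument built on the monotonicity of the projective relative entropy under free probabilistic operations, converting the approximate output of the protocol into the denominator $D^\infty_\FF(\omega)$ by means of the smoothed regularization and Lemma~\ref{lem:omega_reg}. Fix any rate $r$ with $0<r<r_{\mathrm{prob}}(\rho\to\omega)$; the case $r=0$ is trivial. By definition of the rate there is a sequence of free probabilistic maps $\Lambda_n\in\OO$ whose normalized outputs $\omega_n\coloneqq\Lambda_n(\rho^{\otimes n})/\Tr\Lambda_n(\rho^{\otimes n})$ obey $\tfrac12\norm{\omega_n-\omega^{\otimes\floor{rn}}}{1}\le\ve_n$ with $\ve_n\to0$. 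We may assume $D^\infty_\FF(\omega)>0$ and $\DD^\infty_{\Omega,\FF}(\rho)<\infty$, since otherwise the right-hand side of \eqref{eq:converse} is $+\infty$ and there is nothing to prove; it then suffices to establish $\DD^\infty_{\Omega,\FF}(\rho)\ge r\,D^\infty_\FF(\omega)$.

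First I would invoke monotonicity. The projective relative entropy of a resource $\DD_{\Omega,\FF}$ is directly related to the projective robustness of \cite{regula_2022} and inherits from it monotonicity under free probabilistic operations; moreover, since $D_{\max}(c\rho\|\sigma)+D_{\max}(\sigma\|c\rho)=D_{\max}(\rho\|\sigma)+D_{\max}(\sigma\|\rho)$ for every $c>0$, the metric $\DD_\Omega(\cdot\|\sigma)$ is invariant under rescaling of its first argument, so normalization of the output is immaterial. Hence
\begin{equation}
  \DD_{\Omega,\FF}(\rho^{\otimes n})\ \ge\ \DD_{\Omega,\FF}\!\left(\Lambda_n(\rho^{\otimes n})\right)\ =\ \DD_{\Omega,\FF}(\omega_n).
\end{equation}

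Next I would feed the approximate output into the smoothing. Since $\omega_n$ lies in the $\ve_n$-ball around $\omega^{\otimes\floor{rn}}$, it is feasible for the minimization defining the smoothed quantity, whence
\begin{equation}
  \DD_{\Omega,\FF}(\omega_n)\ \ge\ \min_{\frac12\norm{\omega'-\omega^{\otimes\floor{rn}}}{1}\le\ve_n}\DD_{\Omega,\FF}(\omega').
\end{equation}
Combining the two displays, dividing by $n$ and writing $\tfrac1n=\tfrac{\floor{rn}}{n}\cdot\tfrac{1}{\floor{rn}}$, the left-hand side tends to $\DD^\infty_{\Omega,\FF}(\rho)$ (the regularization exists by subadditivity of $\DD_{\Omega,\FF}$ under tensor products) while $\floor{rn}/n\to r$. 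It then remains to show that $\tfrac{1}{\floor{rn}}$ times the ball-minimum above converges to $\DD^\sminfty_{\Omega,\FF}(\omega)$, which by Lemma~\ref{lem:omega_reg} equals $D^\infty_\FF(\omega)$; rearranging gives $r\le\DD^\infty_{\Omega,\FF}(\rho)/D^\infty_\FF(\omega)$, and letting $r\uparrow r_{\mathrm{prob}}(\rho\to\omega)$ finishes the proof.

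The step I expect to be the main obstacle is precisely this last identification, because the protocol couples the smoothing radius to the block length and forces a single diagonal limit in which $\ve_n\to0$ as $n\to\infty$, whereas $\DD^\sminfty_{\Omega,\FF}$ is defined by the iterated limit in which $\ve\to0$ only after $n\to\infty$. I would resolve this using monotonicity of the ball-minimum in the radius: for any fixed $\ve>0$ and all $n$ large enough that $\ve_n\le\ve$, the $\ve_n$-ball around $\omega^{\otimes\floor{rn}}$ is contained in the $\ve$-ball, so the minimum of $\DD_{\Omega,\FF}$ over the former is at least the minimum over the latter. Taking $\liminf_n$ along the block lengths $\floor{rn}$ (a subsequence that still realizes the defining limit of the smoothed regularization at fixed $\ve$) bounds the diagonal limit below by the fixed-$\ve$ regularization, and then sending $\ve\to0$ recovers $\DD^\sminfty_{\Omega,\FF}(\omega)$. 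The remaining ingredients are routine: the maps $\Lambda_n$ exist for every $r<r_{\mathrm{prob}}$ directly from the definition of the rate, and the monotonicity of $\DD_{\Omega,\FF}$ under probabilistic maps is imported from \cite{regula_2022}.
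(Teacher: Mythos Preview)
Your proposal is correct and follows essentially the same route as the paper's proof: monotonicity of $\DD_{\Omega,\FF}$ under free probabilistic maps, then passing from the diagonal limit (with $\ve_n\to0$ coupled to $n$) to the iterated limit defining $\DD^\sminfty_{\Omega,\FF}(\omega)$ via the monotonicity-in-$\ve$ of the ball minimum, and finally invoking Lemma~\ref{lem:omega_reg} together with the factor $\floor{rn}/n\to r$. The paper carries out exactly these steps (its Eq.~\eqref{eq:weak_additivity_thingy} is your $\tfrac1n=\tfrac{\floor{rn}}{n}\cdot\tfrac{1}{\floor{rn}}$ rewriting), though you are more explicit than the paper about why the diagonal-to-iterated limit exchange is justified.
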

See Appendix~\ref{app:converse} for  a proof.

This should be compared with the well-known upper bound on \emph{deterministic} transformation rates given by~\cite{horodecki_2001,horodecki_2013-3}
\begin{equation}\begin{aligned}\label{eq:converse_deterministic}
  r(\rho \to \omega) \leq \frac{D^\infty_{\FF}(\rho)}{D^\infty_{\FF}(\omega)}.
\end{aligned}\end{equation}
An interesting aspect is that both bounds have the regularized relative entropy $D^\infty_{\FF}(\omega)$ in the denominator, but the numerator is very different --- the probabilistic bound in \eqref{eq:converse} features the regularization of the projective relative entropy, without any smoothing. It is not difficult to find examples of states such that $\DD^\infty_{\Omega,\FF}(\rho) > D^\infty_{\FF}(\rho)$, meaning that the probabilistic upper bound can be strictly larger. As we will see, this is in fact the best bound possible, as it can be achieved exactly in many cases.

A word of caution is necessary here. It may be the case that the quantity $\DD_{\Omega,\FF}(\rho)$ actually diverges to infinity~\cite{regula_2021-4}, e.g.\ when $\rho$ is a pure state. However, the problem is avoided for the more practically relevant, high-rank quantum states, in which case \eqref{eq:converse} is typically a well-defined and finite bound. 

\subsection{Improved bound for distillation}

We can obtain a number of improvements to our results when the task of \emph{distillation} (purification) is considered. Here, the target state is chosen to be a pure state $\psi$, as is often the case in practical state transformations where one aims to purify a noisy system. We first obtain the following improvement over Proposition~\ref{prop:converse}, the proof of which can be found in Appendix~\ref{app:converse_dist}.

\begin{proposition}\label{prop:dist_bound_rate}
Every sequence $(\Lambda_n)_n$ of distillation protocols satisfies the following trade-off relation between its rate $r$ and transformation errors $\ve_n \coloneqq \frac12 \norm{\Lambda_n(\rho^{\otimes n}) - \psi^{\otimes \floor{rn}}}{1}$:
\begin{equation}\begin{aligned}
  r \leq \frac{\DD_{\Omega,\FF}^\infty(\rho)}{D_{\min,\FF}^\infty(\psi)} - \frac{\limsup_{n\to\infty} \frac1n \log  \ve_n^{-1} }{D_{\min,\FF}^\infty(\psi)}.
\end{aligned}
\label{eq:tradeodd-bnd}\end{equation}
In particular, the strong converse rate satisfies
\begin{equation}\begin{aligned}\label{eq:dist_strong_converse_bound}
  r^{\dagger}_{\mathrm{prob}}(\rho \to \psi) \leq \frac{\DD_{\Omega,\FF}^\infty(\rho)}{D_{\min,\FF}^\infty(\psi)}.
\end{aligned}\end{equation}
\end{proposition}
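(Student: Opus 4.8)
The plan is to reduce this asymptotic statement to a single one-shot inequality and then regularize. Two structural properties of the projective relative entropy make this possible. First, $\DD_\Omega$ is monotone under every completely positive trace-non-increasing map, since it is a sum of two $D_{\max}$ terms and $D_{\max}$ is monotone even under positive maps. Second, it is invariant under independent rescaling of its arguments, $\DD_\Omega(a\rho\|b\sigma)=\DD_\Omega(\rho\|\sigma)$ for $a,b>0$, because the offsets $\pm\log(a/b)$ picked up by the two $D_{\max}$ terms cancel. Fix a protocol $\Lambda_n\in\OO$, let $\sigma_n\in\FF$ attain $\DD_{\Omega,\FF}(\rho^{\otimes n})=\DD_\Omega(\rho^{\otimes n}\|\sigma_n)$, and write $k\coloneqq\floor{rn}$, with $\hat\rho_n\coloneqq\Lambda_n(\rho^{\otimes n})/\Tr\Lambda_n(\rho^{\otimes n})$ the normalised output and $\tilde\sigma_n\coloneqq\Lambda_n(\sigma_n)/\Tr\Lambda_n(\sigma_n)\in\FF$ (free because $\Lambda_n$ is free). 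Monotonicity followed by scale invariance then yields
\begin{equation}\begin{aligned}
\DD_{\Omega,\FF}(\rho^{\otimes n}) \geq \DD_\Omega\big(\Lambda_n(\rho^{\otimes n}) \,\big\|\, \Lambda_n(\sigma_n)\big) = \DD_\Omega(\hat\rho_n\|\tilde\sigma_n).
\end{aligned}\end{equation}

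Next I would extract the two required contributions from the two halves of $\DD_\Omega(\hat\rho_n\|\tilde\sigma_n)=D_{\max}(\hat\rho_n\|\tilde\sigma_n)+D_{\max}(\tilde\sigma_n\|\hat\rho_n)$ by testing the defining operator inequalities against $\proj{\psi^{\otimes k}}$ and its complement $\id-\proj{\psi^{\otimes k}}$. From $\hat\rho_n\leq\lambda\,\tilde\sigma_n$ with $\lambda=e^{D_{\max}(\hat\rho_n\|\tilde\sigma_n)}$, sandwiching with $\ket{\psi^{\otimes k}}$ and using the Fuchs--van de Graaf bound $\braket{\psi^{\otimes k}|\hat\rho_n|\psi^{\otimes k}}\geq(1-\ve_n)^2$ gives the distillation term,
\begin{equation}\begin{aligned}
D_{\max}(\hat\rho_n\|\tilde\sigma_n) \geq D_{\min}(\psi^{\otimes k}\|\tilde\sigma_n) + 2\log(1-\ve_n) \geq D_{\min,\FF}(\psi^{\otimes k}) + 2\log(1-\ve_n).
\end{aligned}\end{equation}
Symmetrically, from $\tilde\sigma_n\leq\mu\,\hat\rho_n$ with $\mu=e^{D_{\max}(\tilde\sigma_n\|\hat\rho_n)}$, testing against $\id-\proj{\psi^{\otimes k}}$ and using $\Tr[(\id-\proj{\psi^{\otimes k}})\hat\rho_n]\leq 2\ve_n$ produces the error term,
\begin{equation}\begin{aligned}
D_{\max}(\tilde\sigma_n\|\hat\rho_n) \geq \log\frac{1-\braket{\psi^{\otimes k}|\tilde\sigma_n|\psi^{\otimes k}}}{2\ve_n} \geq \log\ve_n^{-1} - \log 2 + \log\!\big(1-e^{-D_{\min,\FF}(\psi)}\big),
\end{aligned}\end{equation}
where the last step uses $\braket{\psi^{\otimes k}|\tilde\sigma_n|\psi^{\otimes k}}\leq e^{-D_{\min,\FF}(\psi^{\otimes k})}\leq e^{-D_{\min,\FF}(\psi)}<1$ uniformly in $n$, which holds since discarding copies is free (so $D_{\min,\FF}(\psi^{\otimes k})\geq D_{\min,\FF}(\psi)$) and $\psi$ is resourceful.

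Adding the two displays gives the clean one-shot bound $\DD_{\Omega,\FF}(\rho^{\otimes n})\geq D_{\min,\FF}(\psi^{\otimes k})+\log\ve_n^{-1}+2\log(1-\ve_n)-c$ for an $n$-independent constant $c$. I would then divide by $n$ and let $n\to\infty$: the left side converges to $\DD^\infty_{\Omega,\FF}(\rho)$, the term $\tfrac1n D_{\min,\FF}(\psi^{\otimes k})=\tfrac{k}{n}\cdot\tfrac1k D_{\min,\FF}(\psi^{\otimes k})$ converges to $r\,D^\infty_{\min,\FF}(\psi)$ since $k/n\to r$, and $c/n\to0$. For a genuine distillation sequence $\ve_n\to0$, so $2\log(1-\ve_n)\to0$ as well; taking the $\limsup$ of the surviving error term and rearranging yields exactly \eqref{eq:tradeodd-bnd}. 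The strong converse \eqref{eq:dist_strong_converse_bound} then follows by restricting, as in the definition of $r^\dagger_{\mathrm{prob}}$, to any sequence whose error stays bounded away from $1$: along such a subsequence $1-\ve_n$ is bounded below, so $2\log(1-\ve_n)$ is $O(1)$, the nonnegative term $\log\ve_n^{-1}$ may simply be dropped, and one obtains $r\leq\DD^\infty_{\Omega,\FF}(\rho)/D^\infty_{\min,\FF}(\psi)$.

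The main obstacle is the second paragraph: arranging for both the distillation term $D_{\min,\FF}(\psi^{\otimes k})$ and the exponential error term $\log\ve_n^{-1}$ to emerge simultaneously and additively out of the single quantity $\DD_\Omega$. This is precisely what the two-sided nature of the projective relative entropy buys us, with $D_{\max}(\hat\rho_n\|\tilde\sigma_n)$ reading off the overlap of the output with the target and $D_{\max}(\tilde\sigma_n\|\hat\rho_n)$ reading off its weight on the orthogonal complement, which is controlled by the error. The delicate points are verifying that $\braket{\psi^{\otimes k}|\tilde\sigma_n|\psi^{\otimes k}}$ stays bounded away from $1$ uniformly in $n$ (otherwise $\log(1-\braket{\cdots})$ could diverge) and confirming that all residual terms are $O(1)$ and hence vanish under the $1/n$ normalisation; the term $2\log(1-\ve_n)$ is the one to watch, being harmless exactly when $\ve_n\to0$ (for the trade-off) or when $\ve_n$ is bounded away from $1$ (for the strong converse).
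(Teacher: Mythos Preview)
Your proof is correct and follows essentially the same route as the paper's: both rest on the one-shot inequality $\DD_{\Omega,\FF}(\rho^{\otimes n}) \geq D_{\min,\FF}(\psi^{\otimes k}) + \log\ve_n^{-1} + O(1)$ and then regularize. The paper invokes this inequality as a black box (\cite[Thm.~9]{regula_2021-4}), whereas you re-derive it from scratch by splitting $\DD_\Omega(\hat\rho_n\|\tilde\sigma_n)$ into its two $D_{\max}$ halves and testing the first against $\proj{\psi^{\otimes k}}$ and the second against $\id-\proj{\psi^{\otimes k}}$ --- which is precisely how that cited theorem is proved, so the difference is presentational rather than substantive.
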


Here, $D_{\min,\FF}^\infty (\psi) \coloneqq \lim_{n \to \infty} \frac1n D_{\min,\FF_n}\!\left(\psi^{\otimes n}\right)$. The bound in \eqref{eq:tradeodd-bnd} allows one to understand exactly the rates achievable with a given error sequence $(\ve_n)_n$. One important consequence is that, if $\ve_n$ goes to zero too quickly (faster than exponentially), then one \emph{cannot} distill at any non-zero rate; in other words, the result shows that for every distillation protocol, including the most general probabilistic ones, the errors must satisfy~$\ve_n = 2^{-O(n)}$.


\section{Achievability results}\label{sec:achiev}

\subsection{Affine resource theories}

Our general converse bound gives a universal constraint: no matter how many copies of states we have at our disposal, and no matter how many times we repeat a given protocol, the bound of Proposition~\ref{prop:converse} cannot be exceeded. In order to understand the tightness of this result, it is then of interest to investigate when the bound can be actually achieved, giving an exact expression for the asymptotic transformation rate between any two states. 
This is the case in the class of \emph{affine} resource theories~\cite{gour_2017,regula_2020}, defined as such that the set of free states $\FF$ is the intersection of some affine subspace of Hermitian operators with the set of all density matrices. This class includes, for instance, the theories of thermodynamics (athermality)~\cite{brandao_2013}, coherence~\cite{baumgratz_2014}, asymmetry~\cite{gour_2008}, or imaginarity~\cite{wu_2021}. 

\begin{proposition}\label{prop:achievability}
Consider any affine resource theory. Then, for all states $\rho$ and  $\omega$, the transformation rate under resource--non-generating operations $\OO$ satisfies 
\begin{equation}\begin{aligned}
  r_{\rm{prob}}(\rho \to \omega) = \frac{ \DD^\infty_{\Omega,\FF}(\rho)}{D_{\FF}^\infty(\omega)}.
\end{aligned}\end{equation}
\end{proposition}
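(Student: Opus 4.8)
The converse direction is exactly Proposition~\ref{prop:converse}, so it remains to establish the matching lower bound $r_{\mathrm{prob}}(\rho\to\omega)\ge \DD^\infty_{\Omega,\FF}(\rho)/D_\FF^\infty(\omega)$ by exhibiting explicit protocols. The plan is to reduce asymptotic achievability to a single one-shot transformation, supplemented by the two regularization statements already at our disposal. The key one-shot input is the characterization of exact probabilistic transformations afforded by the projective robustness~\cite{regula_2022,regula_2021-4}: in an affine resource theory, a free probabilistic operation converting a state $\xi$ into a state $\zeta$ with strictly positive success probability exists whenever $\DD_{\Omega,\FF}(\zeta)\le \DD_{\Omega,\FF}(\xi)$. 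Since our notion of probabilistic rate conditions on success and permits repeat-until-success at no extra cost, a positive success probability is all that we require.

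First I would smooth the target. For a rate $r$ and block length $n$, set $m=\floor{rn}$ and, for a tolerance $\ve>0$, choose a density operator $\omega_m'$ with $\tfrac12\norm{\omega_m'-\omega^{\otimes m}}{1}\le\ve$ that nearly minimizes $\DD_{\Omega,\FF}$ over the $\ve$-ball around $\omega^{\otimes m}$. Applying the one-shot result with $\xi=\rho^{\otimes n}$ and $\zeta=\omega_m'$, the exact transformation $\rho^{\otimes n}\to\omega_m'$ is realizable by a free probabilistic map with positive probability as soon as
\begin{equation}\begin{aligned}
  \DD_{\Omega,\FF}(\omega_m')\le \DD_{\Omega,\FF}\!\left(\rho^{\otimes n}\right),
\end{aligned}\end{equation}
and the normalized output is then within trace distance $2\ve$ of $\omega^{\otimes m}$. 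Thus the entire problem collapses to verifying this single scalar inequality asymptotically.

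Next I would insert the two asymptotic identities. By the definition of the regularization, $\tfrac1n\DD_{\Omega,\FF}(\rho^{\otimes n})\to\DD^\infty_{\Omega,\FF}(\rho)$, so the right-hand side grows like $n\,\DD^\infty_{\Omega,\FF}(\rho)$. For the left-hand side, the smoothed and regularized projective relative entropy of $\omega$ is exactly $D_\FF^\infty(\omega)$ by Lemma~\ref{lem:omega_reg}, so for small $\ve$ and large $m$ the near-optimal smoothed target satisfies $\tfrac1m\DD_{\Omega,\FF}(\omega_m')\to D_\FF^\infty(\omega)$. Consequently, for any $r$ with $r\,D_\FF^\infty(\omega)<\DD^\infty_{\Omega,\FF}(\rho)$ one can pick a margin $\delta>0$ so that $\DD_{\Omega,\FF}(\omega_m')\le m\bigl(D_\FF^\infty(\omega)+\delta\bigr)\le n\bigl(\DD^\infty_{\Omega,\FF}(\rho)-\delta\bigr)\le\DD_{\Omega,\FF}(\rho^{\otimes n})$ for all sufficiently large $n$; the displayed inequality then holds and the protocol succeeds with vanishing error. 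Taking the supremum over such $r$ yields $r_{\mathrm{prob}}(\rho\to\omega)\ge\DD^\infty_{\Omega,\FF}(\rho)/D_\FF^\infty(\omega)$, which together with Proposition~\ref{prop:converse} gives equality.

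The main obstacle is twofold. First, and most importantly, one must justify the one-shot sufficiency: that a decrease of the projective relative entropy actually guarantees the existence of a free probabilistic operation realizing the transformation. This is precisely where affineness of $\FF$ is indispensable --- it is what permits the transforming map to be constructed explicitly and certified resource--non-generating, whereas in a general theory $\DD_{\Omega,\FF}$ is only monotone and hence yields the converse alone. Second, because Lemma~\ref{lem:omega_reg} supplies the target identity only through the iterated limit $\lim_{\ve\to0}\lim_{m\to\infty}$, one must run a diagonal argument, letting $\ve=\ve_n\to0$ slowly as $n\to\infty$ while keeping $\tfrac1m\DD_{\Omega,\FF}(\omega_m')$ within $\delta$ of $D_\FF^\infty(\omega)$, so that the transformation error $\ve_n$ vanishes and the per-copy cost still converges to $D_\FF^\infty(\omega)$.
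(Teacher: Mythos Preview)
Your proposal is correct and follows essentially the same route as the paper's proof: smooth the target using Lemma~\ref{lem:omega_reg}, compare with $\DD_{\Omega,\FF}(\rho^{\otimes n})\ge n\,\DD^\infty_{\Omega,\FF}(\rho)$ (subadditivity), and invoke the one-shot criterion for affine theories from~\cite{regula_2021-4} to realize the transformation. One minor point: the cited one-shot result in general only guarantees the transformation up to an arbitrarily small additional error $\zeta>0$ (i.e.\ as a limit of free probabilistic maps), not exactly; this is harmless here since you can absorb $\zeta_n\to0$ into the vanishing error, but your phrasing ``the exact transformation is realizable'' slightly overstates it.
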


See Appendix~\ref{app:achiev_aff} for a proof. This result is more general than known results in the characterization of deterministic state conversion. While the relative entropy upper bound in Eq.~\eqref{eq:converse_deterministic} is tight in some theories (e.g., athermality~\cite{faist_2019} or coherence~\cite{chitambar_2018}), there is no known result which shows that bound to be tight in general classes of resources.\footnote{We remark that the work~\cite{brandao_2015} claimed to show that the relative entropy bound is tight in almost all resource theories. However, due to issues in the proof of the underlying work~\cite{brandao_2010-1}, the result is not known to be true~\cite{berta_2022}. Furthermore, the framework of~\cite{brandao_2015} employs, instead of free operations $\OO$, a class of operations that is only \emph{approximately} free, and may actually create large amounts of resources in certain cases~\cite{lami_2021-1}.}

\subsection{General resource theories}

Going beyond affine resource theories requires a slightly different approach. To this end, consider the \emph{standard robustness} of a given resource~\cite{vidal_1999}, $R_{s,\FF} (\rho) \coloneqq \inf \lset \lambda \bar \rho + \lambda \sigma \propto \sigma' \in \FF,\; \sigma \in \FF \rset$, 
together with its regularized variant
\begin{equation}\begin{aligned}
  \regrobs (\rho) \coloneqq \lim_{\ve \to 0}\, \limsup_{n\to\infty}\, \min_{\frac12 \norm{\rho'-\rho^{\otimes n}}{1}\leq \ve} \frac{1}{n} \log \left( 1 + R_{s,\FF_n}(\rho') \right)\!.
\end{aligned}\end{equation}
Contrary to the case of $D_{\max,\FF}$ in Eq.~\eqref{eq:dmax_regularization}, we do not know how to  
express this quantity with an alternative formula that does not involve a smoothing. Nevertheless, it can be used to establish a general achievable transformation rate.

\begin{proposition}\label{prop:achievability_Rs}
Consider any resource theory such that $R_{s,\FF}(\rho)<\infty$ for all states. Then, for all states $\rho$ and  $\omega$, the transformation rate under resource--non-generating operations $\OO$ satisfies
\begin{equation}
    r_{\rm{prob}}(\rho \to \omega) \geq  \frac{\DD^\infty_{\Omega,\FF}(\rho)}{\regrobs(\omega)}.
\end{equation}
\end{proposition}

Note that the achievable rate in Proposition~\ref{prop:achievability_Rs} does not always match the converse bound in Proposition~\ref{prop:converse}, as recently an example of a state satisfying $\regrobs (\rho) > D^\infty_\F(\rho)$ was found~\cite{lami_2021-1}. However, we will see that the two quantities can be equal in some relevant cases.


\subsection{Distillation}

The task of distillation is a special case for which, in many theories of interest and even for some non-affine theories, we can evaluate the optimal asymptotic rate exactly because our upper (Proposition~\ref{prop:converse} or~\ref{prop:dist_bound_rate}) and lower (Proposition~\ref{prop:achievability} or~\ref{prop:achievability_Rs}) bounds actually coincide. 
This is because the target state of distillation $\psi$ is often chosen to be a maximally resourceful state (e.g., a Bell state $\Phi_2$ in entanglement theory), whose properties allow for a simplified quantification of its resources.

Notably, we can exactly evaluate the rate of probabilistic entanglement distillation under non-entangling operations from any state:
\begin{equation}\begin{aligned}
  r_{\mathrm{prob}}(\rho \to \Phi_2) = r^\dagger_{\mathrm{prob}}(\rho \to \Phi_2) =  \DD_{\Omega,\SEP}^\infty(\rho),
\end{aligned}\end{equation}
where $\SEP$ denotes the set of separable states. This can be considered surprising in light of the fact that, for deterministic transformations, the exact rate of distillation under non-entangling operations is not known~\cite{brandao_2010,berta_2022}.

A similar result can be derived for more general resource theories, and we present a detailed technical discussion of this property in Appendix~\ref{app:ach_dist}.

\subsection{On the probability of success in the asymptotic limit}

Many conventional distillation protocols also incorporate probabilistic elements --- including even the earliest schemes for entanglement distillation~\cite{bennett_1996-1} --- but in such settings these `probabilistic' rates have been shown to be equivalent to standard deterministic ones~\cite{rains_1999}. What is the difference between such protocols and the transformations considered here?

To gain some insight into why the standard relative entropy $D^\infty_{\FF}(\rho)$ is insufficient to characterize the asymptotic rates of probabilistic transformations as considered in our work, let us look into the optimal probability of success in such protocols. By construction, transformation rates in our framework do not depend on the probability of success of the protocols. This means that our definition of a probabilistic rate $r_{\rm prob}$ allows for a sequence of protocols $(\Lambda_n)_n$ such that the probability of success vanishes in the asymptotic limit, that is, $\lim_{n\to\infty} \Tr \Lambda_n(\rho^{\otimes n}) = 0$. This is because we are effectively disregarding the exact probability, so the rate is not affected by whether $\Tr \Lambda_n(\rho^{\otimes n})$ becomes arbitrarily small.

This is precisely the aspect that distinguishes our approach from conventional asymptotic transformations. To see this, we define a variant of a probabilistic rate where protocols with vanishing success probability are not allowed:
\begin{equation}\begin{aligned}
  &r_{\mathrm{prob} > 0}(\rho \to \omega) \\
  &\coloneqq \sup \bigg\{ \,r \;\bigg|\; (\Lambda_n)_n \in \OO,\; \liminf_{n \to \infty}\, \Tr \Lambda_n(\rho^{\otimes n}) > 0,\\
  &\hphantom{\coloneqq \sup \bigg\{ r \;\bigg|\;} \lim_{n \to \infty}  \frac12 \norm{\frac{\Lambda_n(\rho^{\otimes n})}{\Tr \Lambda_n(\rho^{\otimes n})} - \omega^{\otimes \floor{rn}}}{1} = 0 \,\bigg\}.
\end{aligned}\end{equation}
Our asymptotic equipartition property of Lemma~\ref{lem:omega_reg_full} then allows us to show that the deterministic upper bound based on the relative entropy $D_\FF^\infty$ cannot be exceeded by such protocols.

\begin{proposition}\label{prop:vanishing_success}
The rate of any transformation with a probability of success that does not asymptotically vanish satisfies
\begin{equation}\begin{aligned}\label{eq:vanishing_success}
  r_{\rm{prob} >0}(\rho \to \omega) \leq \frac{D^\infty_\FF(\rho)}{D_\FF^\infty(\omega)}.
\end{aligned}\end{equation}
\end{proposition}

This result shows that a vanishing probability of success is, in many cases, \emph{required} to gain an advantage over probabilistic protocols in the asymptotic limit. That is because the rate $\frac{D^\infty_\FF(\rho)}{D_\FF^\infty(\omega)}$ is achievable deterministically in many known settings, and indeed it has been conjectured to be achievable in general resource theories under so-called asymptotically resource--non-generating operations~\cite{brandao_2010,brandao_2015,berta_2022}. Whenever this is the case, we have that the two rates are equal, $r_{\rm{prob} >0}(\rho \to \omega) = r(\rho \to \omega)$, while the rate $r_{\rm{prob}}(\rho \to \omega)$ considered in our work is typically strictly larger (see e.g.\ the upcoming Section~\ref{sec:iso}).

Remarkably, standard techniques leveraging the asymptotic continuity and `not-too-convexity' of $D_\FF$~\cite{winter_2016} do not suffice to prove Proposition~\ref{prop:vanishing_success} --- they only succeed in establishing the weaker statement that the \emph{deterministic} rate is upper bounded by the right-hand side of~\eqref{eq:vanishing_success}. Our proof, instead, makes essential use of the asymptotic equipartition property for $D_{\Omega,\FF}$ established in Lemma~\ref{lem:omega_reg}. For further details on this point, see Appendix~\ref{sec:app_vanishing}.


\begin{figure}[t]
\centering
\hspace{-0.05cm}\includegraphics[width=8cm]{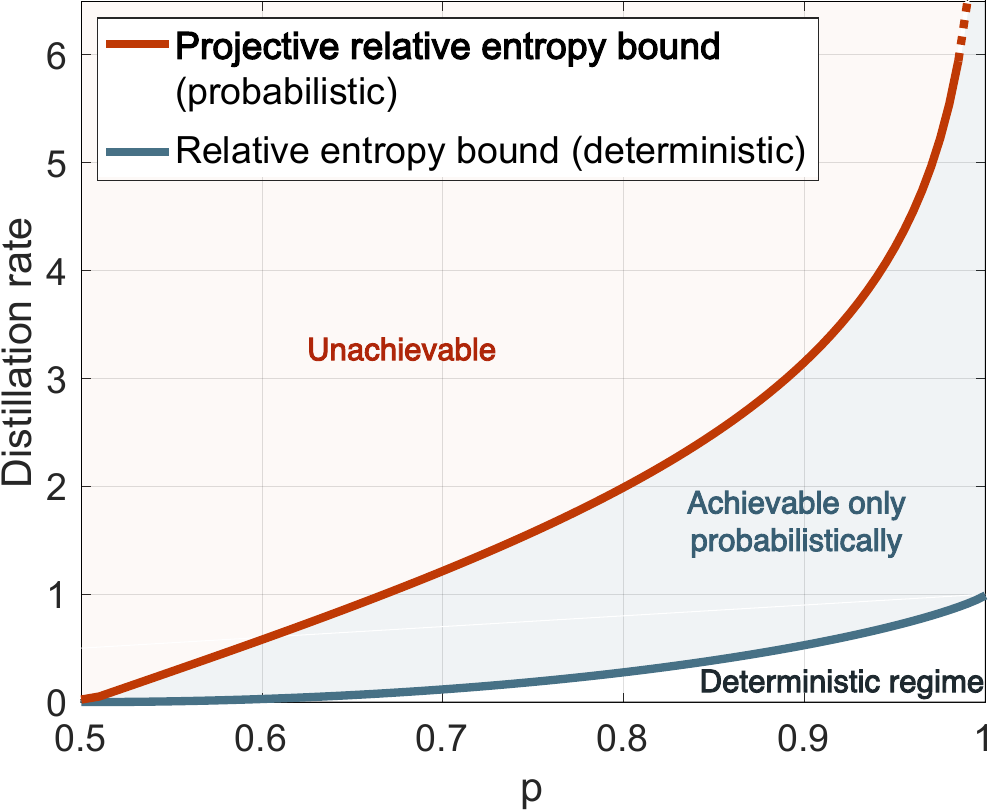}
\caption{\textbf{Entanglement distillation from two-qubit isotropic states.} We plot the most general upper bound on the rate of deterministic entanglement distillation under all non-entangling protocols, the regularized relative entropy $D_\SEP^\infty(\rho_p)$, and compare it with the exact achievable rate of probabilistic entanglement distillation under non-entangling protocols, namely, the regularised projective relative entropy $\DD_{\Omega,\SEP}^\infty(\rho_p)$. It can be seen that, for all values of $p > 0.5$, allowing non-deterministic protocols leads to significantly higher distillation rates. We also observe that the probabilistic rate becomes unbounded as $p \to 1$, which is consistent with the fact that all pure states can be probabilistically interconverted with non-entangling operations~\cite{contreras-tejada_2019,regula_2021-4}. The local dimension is chosen to be $d=2$.
}
\label{fig:rates}
\end{figure}

\section{Examples}\label{sec:examples}

\subsection{Isotropic entangled states}\label{sec:iso}

Isotropic states are representative examples of entangled states that enjoy simplified entanglement properties due to their strong symmetry~\cite{horodecki_1999-1}.
For a local dimension $d$, they are defined as
\begin{equation}\begin{aligned}
  \rho_p \coloneqq p \Phi_d + (1-p) \frac{\id - \Phi_d}{d^2-1}
\end{aligned}\end{equation}
where $\Phi_d$ is a two-qudit maximally entangled state.
The set of free states in this resource theory is given by all separable states, $\F = \SEP$. (One could alternatively consider the resource theory where $\F$ is given by all states with a positive partial transpose; the results below remain unchanged.)

We can use our results to exactly evaluate the rate of probabilistic entanglement distillation: 
assuming that $p\geq 1/d$ (as otherwise $\rho_p$ is separable~\cite{horodecki_1999-1}), we find that
\begin{equation*}\begin{aligned}
  r_{\rm prob}(\rho_p \!\to\! \Phi_2) = \DD^\infty_{\Omega,\SEP} (\rho_p) = \DD_{\Omega,\SEP} (\rho_p) = \log \frac{p(d\! -\! 1)}{1\! -\! p},
\end{aligned}\end{equation*}
while any rate achievable under deterministic protocols satisfies~\cite{hayashi_2006-2,brandao_2010,rains_1999-1}
\begin{equation}\begin{aligned}
  r^{\dagger}(\rho_p\! \!\to\! \Phi_2) &\leq D^\infty_{\SEP} (\rho_p) \\
  & = p \log d + (1 - p) \log \frac{d}{d\!-\!1} -  h_2(p),
\end{aligned}\end{equation}
where $h_2$ is the binary entropy function, and this bound is conjectured to be tight~\cite{brandao_2010,berta_2022}. The gap between the two quantities is depicted in Fig.~\ref{fig:rates}, showing that probabilistic asymptotic protocols exhibit prominently higher rates than purely deterministic ones.


\subsection{Dichotomies and distinguishability}

The case of transformations of pairs of quantum states, i.e.\ finding a channel which satisfies $\Lambda(\rho_1) = \omega_1$ and $\Lambda(\rho_2)=\omega_2$, has been studied under the name of quantum dichotomies~\cite{buscemi_2019}  and underlies the resource theory of asymmetric distinguishability~\cite{matsumoto_2010,wang_2019}. Asymptotic transformations here are studied in the sense that the transformation $\rho_1^{\otimes n} \to \omega_1^{\otimes \floor{rn}}$ may be realized approximately (as in the definition of the asymptotic rate), but $\rho_2^{\otimes n} \to \omega_2^{\otimes \floor{rn}}$ must always be exact. It was then shown that the deterministic rate of such transformations is given by $D(\rho_1\|\rho_2) / D(\omega_1\|\omega_2)$~\cite{buscemi_2019,wang_2019}. We get an analogous result also in our case: the probabilistic conversion rate is exactly $\DD_\Omega(\rho_1 \| \rho_2) / D(\omega_1 \| \omega_2)$, which is typically strictly larger than the deterministic rate. A special case of this task has been further studied in~\cite{regula_2022-4} and interpreted therein as a postselected variant of quantum hypothesis testing.

This result can also be applied to resource theories with only a single free state, such as the resource theory of athermality with Gibbs-preserving operations or the resource theory of purity~\cite{gour_2015}.


\section{Discussion}

We introduced a framework for the asymptotic manipulation of quantum states with probabilistic protocols and established comprehensive methods for its characterization. We specifically used a class of resource monotones based on the projective relative entropy $\DD_{\Omega}$ to establish general upper and lower bounds for transformation rates, showing them to be tight and exactly computable in relevant cases.

There are two main facts revealed by the results of our work. \textit{A priori}, it is not even obvious if probabilistic rates are constrained at all, and even if so, what these constraints may be --- if we are allowing one to repeat the transformation protocols an unbounded number of times, could it not be feasible that \emph{every} transformation becomes eventually possible? We showed that not to be the case, revealing general limitations that every probabilistic state transformation protocol is subject to. 
On the other hand, we saw that probabilistic protocols can (and generally do) outperform deterministic ones, showing that the framework considered in our work does exceed the capabilities of standard quantum Shannon theory.


%
\begin{acknowledgments}


We thank Harry Buhrman for making us aware of a connection between probabilistic transformation rates and algebraic complexity theory. B.R.\ was supported by the Japan Society for the Promotion of Science (JSPS) KAKENHI Grant No.\ 21F21015 and the JSPS Postdoctoral Fellowship for Research in Japan. L.L.\ acknowledges support from the Alexander von Humboldt Foundation. M.M.W.~acknowledges support from the NSF under grant no.~1907615.  We thank the Hearne Institute for Theoretical Physics at Louisiana State University for welcoming B.R.\ and L.L.\ during the visit that led to this work. B.R.\ also gratefully acknowledges the hospitality of the Institute of Theoretical Physics at Ulm University.
\end{acknowledgments}

\clearpage
\appendix
\newgeometry{left=1.2in,right=1.2in,top=.7in,bottom=1in}

\onecolumngrid

\setlength{\extrarowheight}{4pt}
\begin{table}[h]
\hspace*{-50pt}
\begin{tabular}{@{}p{2cm}p{8cm}p{6cm}p{2.5cm}@{}}
\toprule
 \textbf{Quantity} & \textbf{Name} &  \textbf{Definition}     & Remarks  \\[4pt] \toprule\\[-30pt]
$D_\FF(\rho)$ & relative entropy of resource & $\min_{\sigma \in \FF} \,D(\rho\|\sigma) $\newline  \hspace*{-8pt}$=\min_{\sigma \in \FF} \,\Tr \rho (\log\rho-\log\sigma)$ 
& \\ \colrule
$D_{\max,\FF}(\rho)$ & max-relative entropy of resource & $\min_{\sigma \in \FF} \, D_{\max}(\rho\|\sigma)$ \newline  \hspace*{-10pt} $= \min_{\sigma \in \FF,\, \lambda \in \RR_+} \lset \log \lambda \bar \rho \leq \lambda \sigma \rset$ \\[4pt] \colrule
$D_{\min,\FF}(\rho)$ & min-relative entropy of resource & $\min_{\sigma \in \FF} \, D_{\min}(\rho\|\sigma)$ \newline  \hspace*{-10pt} $= \min_{\sigma \in \FF} 
\left[ -\log \left(\Tr \Pi_\rho \sigma\right) \right]$  \\[4pt] \colrule
$\DD_{\Omega,\FF}(\rho)$ & projective relative entropy of resource & $\min_{\sigma \in \FF} \, \DD_{\Omega}(\rho\|\sigma)$ \newline  \hspace*{-10pt} $=\min_{\sigma \in \FF} \big[ D_{\max}(\rho\|\sigma) + D_{\max}(\sigma \| \rho) \big]$    \\[1pt] \colrule
$\DD_{s,\FF}(\rho)$  & logarithmic standard robustness of resource & $\log \left( 1 + R_{s,\FF} (\rho) \right)$ \newline  \hspace*{-9pt} $= \inf_{\sigma, \sigma' \in \FF,\,\lambda \in \RR_+} \lset \log \lambda \bar \rho = \lambda \sigma - (\lambda - 1) \sigma' \rset$ \\[2pt] \colrule\\[-24pt]
$D^\infty_\FF(\rho)$ & regularised relative entropy of resource & $\displaystyle\lim_{n\to\infty} \frac1n D_{\FF_n}\left(\rho^{\otimes n}\right)$ & \\[-4pt] \colrule\\[-24pt]
$\DD^\infty_{\Omega,\FF}(\rho)$ & regularised projective relative entropy of resource & $\displaystyle \lim_{n\to\infty} \frac1n \DD_{\Omega,\FF_n}\left(\rho^{\otimes n}\right)$ & \\[-5pt] \colrule\\[-24pt]
$D_{\max,\FF}^{\sminfty}(\rho)$ & smoothed regularised max-relative entropy &  $\displaystyle \lim_{\ve \to 0}\, \lim_{n\to\infty}\, \min_{\frac12 \norm{\rho'-\rho^{\otimes n}}{1}\leq \ve} \,\frac{1}{n} \, D_{\max,\FF_n} (\rho')$  & $= D_\F^\infty(\rho)\,$~\cite{datta_2009-2,brandao_2010} \\[10pt]\colrule\\[-24pt]
$\DD_{\Omega,\FF}^{\sminfty}(\rho)$ & smoothed regularised projective relative entropy &  $\displaystyle \lim_{\ve \to 0}\, \lim_{n\to\infty}\, \min_{\frac12 \norm{\rho'-\rho^{\otimes n}}{1}\leq \ve} \,\frac{1}{n}\, \DD_{\Omega,\FF_n} (\rho')$  & $= D_\F^\infty(\rho)\,$ [Lem.~\ref{lem:omega_reg}] \\[10pt]\colrule\\[-24pt]
$\DD_{s,\FF}^{\sminfty}(\rho)$ & smoothed regularised standard robustness &  $\displaystyle \lim_{\ve \to 0}\, \limsup_{n\to\infty}\, \min_{\frac12 \norm{\rho'-\rho^{\otimes n}}{1}\leq \ve} \,\frac{1}{n} \,\DD_{s,\FF_n} (\rho')$  & $\neq {D}_\F^\infty(\rho)$\newline in general~\cite{lami_2021-1} \\\bottomrule
\end{tabular}

\caption{\textbf{Summary of the definitions of the different divergences studied in this work.} We remark that the minima over $\FF$ are achieved due to the lower semicontinuity of the involved quantities and the assumed compactness of $\FF$. The exception is $\DD_{s,\FF}$, which may be unbounded depending on the resource theory. Similarly, lower semicontinuity ensures that the minimisation over the $\ve$-ball in the definitions of $D_{\max,\FF}^{\sminfty}$, $\DD_{\Omega,\FF}^{\sminfty}$, and $\DD_{s,\FF}^{\sminfty}$ is achieved in all cases.
}
\label{table}
\end{table}



\renewcommand{\thestheorem}{A\arabic{stheorem}}

\begin{center}
\vspace*{\baselineskip}
{\textbf{\large APPENDIX}} \\
\end{center}

\section{Notation and basic properties}
\label{app:projective_entr}

\subsection{Projective relative entropy}

In Table~\ref{table} we review the definitions of the various divergences and their regularised forms.

We will sometimes use the notation $\Omega_\FF$ for the non-logarithmic variant of the projective relative entropy $\DD_{\Omega,\FF}$, namely $\Omega_\FF (\rho) = 2^{\DD_{\Omega,\FF}(\rho)}$.

 We recall some of the most important properties of $\DD_{\Omega,\FF}$~\cite{regula_2021-4}:
\begin{enumerate}[(i)]
\item It is monotonic under any probabilistic transformation protocol: for all $\Lambda \in \OO$, $\DD_{\Omega,\FF}\left(\frac{\Lambda(\rho)}{\Tr \Lambda(\rho)}\right)\leq \DD_{\Omega,\FF}(\rho)$.
\item It is faithful, i.e.\ $\DD_{\Omega,\FF}(\rho) = 0$ iff $\rho \in \FF$.
\item It is invariant under scaling, i.e.\ $\DD_{\Omega,\FF}(\lambda \rho) = \DD_{\Omega,\FF}(\rho)$ for all $\lambda > 0$.
\item It may diverge: 
It is finite, i.e., $\DD_{\Omega,\FF}(\rho) < \infty$, if and only if there exists a free state $\sigma \in \FF$ such that $\supp \rho = \supp \sigma$.
\item It is lower semicontinuous, quasiconvex\footnote{We remind the reader that a function $f:C\to \mathbb{R}\cup\{+\infty\}$ defined on a convex set $C$ is said to be quasiconvex if $f\left(tx+(1-t)y\right) \leq \max\{f(x),f(y)\}$ holds for all $x,y\in C$ and $t\in [0,1]$.}, and sub-additive under tensor products.
\item It can be written as a convex optimisation problem:
\begin{equation}\begin{aligned}
  \Omega_\FF(\rho)  &= \inf \lset \gamma \in \RR \bar \rho \leq \wt\sigma \leq \gamma \rho,\; \wt\sigma \in \cone(\FF) \rset\\
  &= \sup \lset \Tr A \rho \bar \Tr B \rho = 1 ,\; B - A \in \cone(\FF)\*,\; A, B \geq 0 \rset\\
  &= \sup \lset \frac{\Tr A \rho}{\Tr B \rho} \bar \frac{\Tr A \sigma}{\Tr B \sigma} \leq 1 \; \forall \sigma \in \FF,\; A, B \geq 0\rset,
\end{aligned}\end{equation}
where $\cone(\FF)$ denotes the convex cone induced by the set of free states $\FF$ and $\cone(\FF)\*$ is its dual cone~\cite{boyd_2004}.
\end{enumerate}

Let us briefly comment on property (iv), that is, the fact that $\DD_{\Omega,\FF}$ might take infinite values. This potential issue is always avoided for highly mixed states --- indeed, Axiom II ensures that there exists a full-rank free state, so $\DD_{\Omega,\FF}(\rho) < \infty$ for all full-rank $\rho$. Extended variants of this property can be shown in specific resource theories; for example, in the theory of entanglement, we can show the following.
\begin{slemma}
Consider the resource theory of bipartite entanglement with local systems of dimensions $d_A$ and $d_B$. Then, every state $\rho$ whose rank satisfies $\operatorname{rank}(\rho) \geq d_A d_B - 1$ is such that $\DD_{\Omega,\SEP}(\rho) < \infty$.
\end{slemma}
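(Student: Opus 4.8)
The plan is to reduce the claim, via property~(iv) above, to a statement about supports: since $\DD_{\Omega,\SEP}(\rho)<\infty$ exactly when there is a separable state $\sigma$ with $\supp\sigma=\supp\rho$, it suffices to construct, for every $\rho$ with $\operatorname{rank}(\rho)\geq d_Ad_B-1$, a separable state supported precisely on $\supp\rho$. When $\rho$ is full rank this is immediate, since the maximally mixed state $\id/(d_Ad_B)$ is separable and full rank. The only remaining case is $\operatorname{rank}(\rho)=d_Ad_B-1$, where $\dim\ker\rho=1$ and hence $\supp\rho=\ket{\phi}^{\perp}$ is the hyperplane orthogonal to a single unit vector $\ket{\phi}$ spanning $\ker\rho$.

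For this case I would show that the product vectors orthogonal to $\ket{\phi}$ already span the whole hyperplane $\ket{\phi}^{\perp}$. Granting this, one selects $d_Ad_B-1$ linearly independent product vectors $\{\ket{a_i}\otimes\ket{b_i}\}$ forming a basis of $\ket{\phi}^{\perp}$ and sets $\sigma\propto\sum_i\proj{a_i}\otimes\proj{b_i}$; this $\sigma$ is manifestly separable, and with positive weights its support equals $\operatorname{span}\{\ket{a_i}\otimes\ket{b_i}\}=\ket{\phi}^{\perp}=\supp\rho$, which finishes the proof.

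The heart of the matter is therefore the spanning claim, which I would split according to the Schmidt rank of $\ket{\phi}$. Writing $\Phi$ for the $d_A\times d_B$ coefficient matrix of $\ket{\phi}$ (so that its Schmidt rank equals $\operatorname{rank}(\Phi)=\operatorname{rank}(\overline{\Phi})$), a product vector $\ket{a}\otimes\ket{b}$ is orthogonal to $\ket{\phi}$ precisely when the bilinear form $a^{T}\overline{\Phi}\,b$ vanishes. If $\ket{\phi}=\ket{u}\otimes\ket{v}$ has Schmidt rank one, the orthogonal product vectors are exactly those with $\ket{a}\perp\ket{u}$ or $\ket{b}\perp\ket{v}$, and a direct dimension count gives $\dim\big[(\ket{u}^{\perp}\otimes\H_B)+(\H_A\otimes\ket{v}^{\perp})\big]=(d_A-1)d_B+d_A(d_B-1)-(d_A-1)(d_B-1)=d_Ad_B-1$; since this space is contained in $\ket{\phi}^{\perp}$, it equals it.

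The genuinely nontrivial case, which I expect to be the main obstacle, is when $\ket{\phi}$ is entangled, i.e.\ $\operatorname{rank}(\Phi)\geq 2$. Here I would argue that any $\ket{\psi}$ orthogonal to every product vector lying in $\ket{\phi}^{\perp}$ is forced to be proportional to $\ket{\phi}$; since $\ket{\phi}$ is trivially such a vector, the product vectors then span a hyperplane, which can only be $\ket{\phi}^{\perp}$ itself. Concretely, the condition reads: the form $a^{T}\overline{\Psi}\,b$ vanishes whenever $a^{T}\overline{\Phi}\,b$ does, where $\Psi$ is the coefficient matrix of $\ket{\psi}$. For fixed $b$ with $\overline{\Phi}\,b\neq 0$ this is an inclusion of kernels of two linear functionals of $a$, which forces $\overline{\Psi}\,b=\lambda(b)\,\overline{\Phi}\,b$; choosing, via $\operatorname{rank}(\Phi)\geq 2$, vectors $b_1,b_2$ with $\overline{\Phi}\,b_1,\overline{\Phi}\,b_2$ linearly independent then pins $\lambda$ to a single constant $c$, whence $\overline{\Psi}=c\,\overline{\Phi}$ and $\ket{\psi}\propto\ket{\phi}$. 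Equivalently, one may invoke the fact that $a^{T}\overline{\Phi}\,b$ is an irreducible polynomial exactly when $\operatorname{rank}(\Phi)\geq 2$ and conclude via the Nullstellensatz that $a^{T}\overline{\Psi}\,b$ is a scalar multiple of it. The care needed lies in treating the degenerate directions $b\in\ker\overline{\Phi}$ and confirming that $\lambda(b)$ is globally constant, after which proportionality extends to all of $\H_B$ by linearity.
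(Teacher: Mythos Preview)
Your proof is correct, but it takes a genuinely different route from the paper's. The paper simply observes that the projector $\Pi_\rho = \id - \proj{\phi}$ onto the support of a corank-one $\rho$ is itself a separable operator (citing Gurvits--Barnum), and then uses the sandwich $\lambda_{\min}(\rho)\,\Pi_\rho \leq \rho \leq \lambda_{\max}(\rho)\,\Pi_\rho$ to read off $\DD_{\Omega,\SEP}(\rho)\leq \log\lambda_{\max}(\rho)-\log\lambda_{\min}(\rho)<\infty$ directly. That is a two-line argument once the separability of $\id-\proj{\phi}$ is taken as known.

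Your approach instead reproves, from first principles, the weaker statement that \emph{some} separable state has support exactly $\ket{\phi}^\perp$, by showing that the product vectors in $\ket{\phi}^\perp$ span it. The case split by Schmidt rank and the bilinear-form argument for $\operatorname{rank}\Phi\geq 2$ are all sound (your handling of $b\in\ker\overline{\Phi}$ is immediate, since then $a^T\overline{\Phi}b=0$ for all $a$ forces $a^T\overline{\Psi}b=0$ for all $a$; and the constancy of $\lambda(b)$ across all $b\notin\ker\overline{\Phi}$ follows by comparing each such $b$ to one of the two reference vectors with independent image). What your route buys is self-containment: you never invoke an external separability result. What the paper's route buys is brevity and an explicit quantitative bound on $\DD_{\Omega,\SEP}(\rho)$ in terms of the condition number of $\rho$ on its support, which your construction does not automatically yield.
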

\begin{proof}
The full-rank case follows directly from Axiom II, so assume that $\operatorname{rank}(\rho) = d_A d_B - 1$. 
The projection onto the support of $\rho$ is then given by $\Pi_\rho = \id - \proj{\phi}$ for some pure $\ket\phi$. Every such $\Pi_\rho$ is known to be a separable operator~\cite{gurvits_2002}. But $\lambda_{\min}(\rho) \Pi_\rho \leq \rho \leq \lambda_{\max}(\rho) \Pi_\rho$ where $\lambda_{\min/\max}$ denotes the smallest/largest non-zero eigenvalue, meaning that $\DD_{\Omega,\SEP}(\rho) \leq \log \lambda_{\max}(\rho) - \log \lambda_{\min}(\rho) < \infty$.
\end{proof}

For lower-rank states, it appears difficult to obtain general statements, and the verification of the finiteness of $\DD_{\Omega,\FF}$ needs to be performed on a state-by-state basis. Nevertheless, the results of this work can be readily applied to any state that can be shown to have a finite value of $\DD_{\Omega,\FF}$.

Note also that Axioms III-IV guarantee that $\DD_{\Omega,\FF}(\rho) < \infty \iff \DD_{\Omega,\FF}^\infty(\rho) < \infty$.

\subsection{Regularisation}

We often use some well-known facts about asymptotic regularisations of functions, which we collect below.

\begin{fact}[Fekete's lemma~\cite{fekete_1923}]
  Let $f$ be a weakly subaditive function of density matrices, that is, one such that $f(\rho^{\otimes m} \otimes \rho^{\otimes n}) \leq f(\rho^{\otimes m}) + f(\rho^{\otimes n})$ for all states $\rho$ and for all $m,n \in \mathbb{N}$. Then, the regularisation $f^\infty(\rho) \coloneqq \lim_{n\to\infty} \frac1n f(\rho^{\otimes n})$
  exists and equals $\inf_{n} \frac1n f(\rho^{\otimes n})$.
\end{fact}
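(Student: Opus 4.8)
The plan is to fix a state $\rho$ and reduce the statement to a purely combinatorial fact about the real sequence $a_n \coloneqq f(\rho^{\otimes n})$, which by hypothesis satisfies the subadditivity relation $a_{m+n} \leq a_m + a_n$ for all $m,n \in \mathbb{N}$. Setting $L \coloneqq \inf_n a_n/n$, the goal becomes to show that $\lim_{n\to\infty} a_n/n$ exists and equals $L$. One inequality is immediate: since $a_n/n \geq L$ for every $n$ by definition of the infimum, we have $\liminf_{n\to\infty} a_n/n \geq L$. It therefore suffices to establish the matching upper bound $\limsup_{n\to\infty} a_n/n \leq L$.

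First I would fix an arbitrary $m \in \mathbb{N}$ and perform Euclidean division of a large index $n$ by $m$, writing $n = qm + r$ with $0 \leq r < m$. Iterating the subadditivity relation $q$ times gives $a_{qm} \leq q\, a_m$, and one further application to absorb the remainder yields $a_n = a_{qm+r} \leq q\, a_m + a_r$, where for the case $r = 0$ the remainder term is simply dropped. Dividing by $n$ gives
\begin{equation}\begin{aligned}
  \frac{a_n}{n} \leq \frac{q}{n}\, a_m + \frac{a_r}{n}.
\end{aligned}\end{equation}
As $n \to \infty$ with $m$ held fixed, one has $q/n \to 1/m$, while the remainder term is controlled because $r$ ranges only over the finite set $\{0,1,\dots,m-1\}$, so that $a_r$ stays bounded and $a_r/n \to 0$. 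Taking the limit superior therefore produces $\limsup_{n\to\infty} a_n/n \leq a_m/m$.

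Since $m$ was arbitrary, I would then take the infimum over $m$ to obtain $\limsup_{n\to\infty} a_n/n \leq \inf_m a_m/m = L$. Combined with the trivial lower bound, this yields $\limsup_{n\to\infty} a_n/n \leq L \leq \liminf_{n\to\infty} a_n/n$, so the limit exists and equals $L = \inf_n a_n/n$, as claimed.

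The main technical point to watch is the uniform control of the remainder contributions $a_r/n$: the argument relies on the fact that, for fixed $m$, the finitely many values $a_0, a_1, \dots, a_{m-1}$ are bounded, so that their contribution becomes negligible after division by $n \to \infty$. A minor case distinction is needed to handle $r = 0$ cleanly, and one should note that in the resource-theoretic applications of interest $f$ is nonnegative, whence $L \in [0,\infty)$ is finite and no separate treatment of the degenerate value $L = -\infty$ is required.
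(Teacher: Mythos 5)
Your proof is correct and is precisely the classical Euclidean-division argument for Fekete's lemma; the paper itself states this as a Fact by citation to Fekete's original work without reproducing a proof, and your argument is exactly the standard one that citation refers to, including the proper reduction of the tensor-power hypothesis to subadditivity of $a_n = f(\rho^{\otimes n})$ and the correct handling of the $r=0$ case and the boundedness of the finitely many remainder values. Your closing remark on nonnegativity is also apt, since the divergences to which the paper applies this fact are nonnegative (with infinite values handled separately in the paper via Axioms III--IV), so the degenerate case $L = -\infty$ indeed never arises.
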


\begin{fact}[Weak additivity of regularisation~\cite{donald_2002}]
  Let $f$ be a function such that the regularisation $f^\infty$ exists. Then, $f^\infty$ is weakly additive, that is, $f^{\infty}(\rho^{\otimes n}) = n f^\infty (\rho)$.
\end{fact}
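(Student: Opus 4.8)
The plan is to reduce the claim to the elementary fact that every subsequence of a convergent sequence converges to the same limit. By definition, $f^\infty(\rho^{\otimes n}) = \lim_{m\to\infty} \frac{1}{m} f\!\left((\rho^{\otimes n})^{\otimes m}\right)$, and the first move is simply to collapse the nested tensor powers using $(\rho^{\otimes n})^{\otimes m} = \rho^{\otimes nm}$, so that the quantity of interest becomes $\lim_{m\to\infty} \frac{1}{m} f(\rho^{\otimes nm})$.

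Next I would factor out $n$ to relate this to the defining sequence of $f^\infty(\rho)$: writing $\frac{1}{m} = n \cdot \frac{1}{nm}$ gives $\frac{1}{m} f(\rho^{\otimes nm}) = n \cdot \frac{1}{nm} f(\rho^{\otimes nm})$. The key observation is that, setting $k = nm$, the sequence $\frac{1}{nm} f(\rho^{\otimes nm})$ is precisely the subsequence of $\left(\frac{1}{k} f(\rho^{\otimes k})\right)_k$ indexed by multiples of $n$.

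The crux is then to invoke the standing hypothesis that the full limit $f^\infty(\rho) = \lim_{k\to\infty} \frac{1}{k} f(\rho^{\otimes k})$ exists. Since this limit exists, every subsequence — in particular the one along $k = nm$ — converges to the same value $f^\infty(\rho)$. Multiplying by the constant factor $n$ and passing to the limit in $m$ therefore yields $f^\infty(\rho^{\otimes n}) = n f^\infty(\rho)$, as claimed.

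The only point requiring care — and the sole place where the hypothesis is genuinely used — is that the \emph{existence} of the limit (rather than merely a $\liminf$ or $\limsup$) is what licenses the subsequence argument; without it one could only conclude an inequality, and the weak additivity could fail. Everything else is routine bookkeeping of indices, so I do not anticipate any substantive obstacle here.
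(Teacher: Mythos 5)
Your proof is correct and is precisely the standard subsequence argument behind this fact, which the paper itself does not prove but simply cites from the literature; the collapse $(\rho^{\otimes n})^{\otimes m} = \rho^{\otimes nm}$ followed by convergence of the subsequence along $k = nm$ is exactly how it is established there. One minor remark: the existence of the limit defining $f^\infty(\rho^{\otimes n})$ is itself a conclusion of your subsequence argument rather than a hypothesis, which your write-up implicitly but correctly handles.
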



\section{Asymptotic equipartition property}\label{app:aep}

{%
\renewcommand{\thetheorem}{\getrefnumber{lem:omega_reg}}
\begin{boxed}{white}
\begin{lemma}
\label{lem:omega_reg_full}
For every sequence of sets $(\FF_n)_n$ satisfying Axioms I--IV, the smoothed regularisation of the projective relative entropy is simply the regularised relative entropy of the given resource. That is,
\begin{equation}\begin{aligned}
  \DD_{\Omega,\FF}^\sminfty (\rho) = D_{\FF}^\infty(\rho)
\end{aligned}\end{equation}
where we recall that
\begin{equation}\begin{aligned}
  \DD_{\Omega,\FF}^\sminfty (\rho) = \lim_{\ve \to 0}\, \lim_{n\to\infty}\, \min_{\frac12 \norm{\rho'-\rho^{\otimes n}}{1}\leq \ve} \frac{1}{n} \DD_{\Omega,\FF_n} (\rho')
\end{aligned}\end{equation}
with the minimization over normalised quantum states $\rho'$.
\end{lemma}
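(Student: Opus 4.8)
The plan is to sandwich $\DD_{\Omega,\FF}^\sminfty(\rho)$ between two applications of the already-established smoothed max-relative entropy identity \eqref{eq:dmax_regularization}, namely $D_{\max,\FF}^\sminfty(\rho)=D_\FF^\infty(\rho)$, so that no fresh regularisation has to be computed from scratch. The entire argument then rests on an elementary pointwise inequality between $\DD_{\Omega,\FF}$ and $D_{\max,\FF}$ for the lower bound, and a single mixing construction to tame the ``reverse'' direction of the Hilbert metric for the upper bound.

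For the lower bound I would first observe that, for any two density matrices, $D_{\max}(\sigma\|\rho)\ge 0$: if $\sigma\le\lambda\rho$ then taking traces gives $1=\Tr\sigma\le\lambda$, so $\log\lambda\ge 0$. Hence at every point $\rho'$ one has $\DD_{\Omega,\FF}(\rho')=\min_{\sigma\in\FF}\big[D_{\max}(\rho'\|\sigma)+D_{\max}(\sigma\|\rho')\big]\ge D_{\max,\FF}(\rho')$. Applying this inside the smoothing ball and dividing by $n$ gives $\min_{\frac12\norm{\rho'-\rho^{\otimes n}}{1}\le\ve}\frac1n\DD_{\Omega,\FF_n}(\rho')\ge\min_{\frac12\norm{\rho'-\rho^{\otimes n}}{1}\le\ve}\frac1n D_{\max,\FF_n}(\rho')$, and taking $n\to\infty$ then $\ve\to0$ yields $\DD_{\Omega,\FF}^\sminfty(\rho)\ge D_{\max,\FF}^\sminfty(\rho)=D_\FF^\infty(\rho)$ by \eqref{eq:dmax_regularization}.

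The substantial direction is the upper bound, and the difficulty is exactly the term $D_{\max}(\sigma\|\rho')$, which is absent from the ordinary max-relative entropy and which can be arbitrarily large, or even infinite, when the chosen free $\sigma$ is not dominated by $\rho'$. The idea is to neutralise it by mixing. Fix $\ve$ and let $\tilde\rho$ nearly optimise the $(\ve/2)$-smoothed $D_{\max,\FF_n}$, with optimal free state $\sigma^\ast\in\FF_n$, so that $\tilde\rho\le 2^{R}\sigma^\ast$ where $R=D_{\max,\FF_n}(\tilde\rho)\ge 0$. I would then set $\hat\rho\coloneqq(1-\mu)\tilde\rho+\mu\sigma^\ast$ with $\mu=\ve/2$. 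For the forward direction, $\hat\rho\le 2^R\sigma^\ast$ is preserved since $2^R\ge 1$, giving $D_{\max}(\hat\rho\|\sigma^\ast)\le R$; for the reverse direction, $\hat\rho\ge\mu\sigma^\ast$ immediately yields $\sigma^\ast\le\mu^{-1}\hat\rho$, hence $D_{\max}(\sigma^\ast\|\hat\rho)\le\log(1/\mu)$, a \emph{constant} independent of $n$. Thus $\DD_{\Omega,\FF_n}(\hat\rho)\le R+\log(1/\mu)$, while $\frac12\norm{\hat\rho-\rho^{\otimes n}}{1}\le\mu+\ve/2=\ve$ keeps $\hat\rho$ in the smoothing ball.

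Dividing by $n$, the constant $\log(1/\mu)$ contributes $\frac1n\log(2/\ve)\to 0$, so $\lim_{n\to\infty}\min_{\ve\text{-ball}}\frac1n\DD_{\Omega,\FF_n}\le\lim_{n\to\infty}\min_{(\ve/2)\text{-ball}}\frac1n D_{\max,\FF_n}$; letting $\ve\to0$ and invoking \eqref{eq:dmax_regularization} once more gives $\DD_{\Omega,\FF}^\sminfty(\rho)\le D_\FF^\infty(\rho)$. Combined with the lower bound, this forces the double limit to exist and equal $D_\FF^\infty(\rho)$. The only points needing care are (a) that $D_{\max,\FF_n}(\tilde\rho)$ is finite so that $\sigma^\ast$ exists, which holds because Axiom~II supplies a full-rank free state and one may take $\tilde\rho$ full-rank within the ball, and (b) the bookkeeping of the order of limits, handled by the $\ve/2$ versus $\ve$ split above. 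The main obstacle is conceptual rather than technical: recognising that the extra reverse term, which makes $\DD_\Omega$ behave so differently from $D_{\max}$ in the one-shot regime, becomes asymptotically free once one is permitted to perturb the state by a vanishing amount.
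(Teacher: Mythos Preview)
Your proof is correct and follows essentially the same approach as the paper. Both arguments establish the lower bound via the pointwise inequality $\DD_{\Omega,\FF}\ge D_{\max,\FF}$, and both handle the upper bound by mixing the near-optimal smoothed state with its optimal free state $\sigma^\ast$ so as to force $\sigma^\ast\le \mu^{-1}\hat\rho$ and thereby reduce the reverse $D_{\max}$ term to an $n$-independent constant; indeed, the paper's choice $\omega_n=\frac{2^{-\eta}\sigma_n+\rho'_n}{1+2^{-\eta}}$ with $\eta=\log(2\ve^{-1}-1)$ is exactly your $\hat\rho=(1-\mu)\tilde\rho+\mu\sigma^\ast$ with $\mu=\ve/2$, just parametrised differently.
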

\end{boxed}
}%
\begin{proof}
It is already known that, for every sequence of sets $(\FF_n)_n$ satisfying Axioms I--IV, the smoothed regularisation of $D_{\max}$ is precisely the regularised relative entropy $D_{\FF}^\infty$~\cite{datta_2009-2,brandao_2010}%
; specifically,
\begin{equation}\begin{aligned}\label{eq:dmax_reg}
  \lim_{\ve \to 0}\, \lim_{n\to\infty}\, \min_{\frac12 \norm{\rho'-\rho^{\otimes n}}{1}\leq \ve}  \frac1n D_{\max,\FF_n}\left(\rho' \right) = D_{\FF}^\infty(\rho).
\end{aligned}\end{equation}
Since $\DD_{\Omega,\FF}(\rho)\geq D_{\max,\FF}(\rho)$ by definition, this immediately gives that 
\begin{equation}\begin{aligned}
  \DD_{\Omega,\FF}^\sminfty (\rho) \geq D_{\FF}^\infty(\rho),
\end{aligned}\end{equation}
and so it suffices to show the opposite inequality. 
We will proceed to show that the term $D_{\max} (\sigma \| \rho)$ that distinguished $\DD_{\Omega}$ from $D_{\max}$ becomes asymptotically irrelevant, and the two divergences must converge to the same value.

To this end,  
fix $\delta > 0$; for all sufficiently small $\ve > 0$, 
there exists a sequence of states $(\rho'_n)_n$ with $\frac12 \norm{\rho'_n-\rho^{\otimes n}}{1}\leq \frac\ve2$ which satisfies
\begin{equation}\begin{aligned}
  \left| \lim_{n\to\infty}\, \frac1n \min_{\sigma \in \FF_n} D_{\max}\left(\rho'_n \| \sigma \right)  - D_{\FF}^\infty(\rho) \right| \leq \delta.
\end{aligned}\end{equation}
For every such choice of $(\rho'_n)_n$, let $(\sigma_n)_n$ be a sequence of states such that $\min_{\sigma \in \FF_n} D_{\max}\left(\rho'_n \| \sigma \right) = D_{\max}(\rho'_n \| \sigma_n) \eqqcolon \lambda_n \; \forall n$. Define
\begin{equation}\begin{aligned}
  \omega_n \coloneqq \frac{2^{-\eta} \sigma_n + \rho'_n}{1+2^{-\eta}},
\end{aligned}\end{equation}
where we fix $\eta \coloneqq \log(2 \ve^{-1}-1)$, so that
\begin{equation}\begin{aligned}
  \frac12 \norm{\omega_n - \rho'_n}{1} &= \frac12 \norm{ \frac{2^{-\eta} \sigma_n + \rho'_n}{1+2^{-\eta}} - \frac{(1+2^{-\eta}) \rho'_n}{1+2^{-\eta}}}{1}\\
  &= \frac12 \frac{2^{-\eta}}{1+2^{-\eta}} \norm{\sigma_n - \rho'_n}{1}\\
  &\leq \frac{2^{-\eta}}{1+2^{-\eta}}\\
  &= \frac\ve2.
\end{aligned}\end{equation}
By construction, we have that
\begin{equation}\begin{aligned}
  \omega_n &\leq \frac{2^{-\eta} \sigma_n + 2^{D_{\max}(\rho'_n \| \sigma_n)} \sigma_n}{1+2^{-\eta}}\\
  &= \frac{2^{-\eta} + 2^{\lambda_n}}{1+2^{-\eta}} \sigma_n
\end{aligned}\end{equation}
and we additionally observe that
\begin{equation}\begin{aligned}
  \sigma_n &\leq \sigma_n + 2^{\eta} \rho'_n\\
  &= (1+2^{-\eta}) \frac{\sigma_n + 2^{\eta} \rho'_n}{1+2^{-\eta}}\\
  &= (2^\eta+1) \, \omega_n.
\end{aligned}\end{equation}
Thus
\begin{equation}\begin{aligned}
  \frac1n \DD_{\Omega} (\omega_n \| \sigma_n) &\leq \frac1n \log( 2^{-\eta} + 2^{\lambda_n} ) - \frac1n \log( 1+2^{-\eta} ) + \frac1n \log (2^\eta+1)\\
  &\leq \frac1n \log( 1+2^{-\eta} ) + \frac1n \lambda_n - \frac1n \log( 1+2^{-\eta} ) + \frac1n \log (2^\eta+1)\\
  &= \frac1n \lambda_n - \frac1n \log \frac\ve2.
\end{aligned}\end{equation}
Taking the limit then gives
\begin{equation}\begin{aligned}
  \lim_{n\to\infty} \frac1n \DD_{\Omega} (\omega_n \| \sigma_n) \leq \lim_{n\to\infty} \frac1n \lambda_n \leq D_{\FF}^\infty(\rho) + \delta.
\end{aligned}\end{equation}
Altogether we have shown that,
having fixed $\delta>0$, for all sufficiently small $\ve>0$ we can find a sequence of states $(\omega_n)_n$ such that
\begin{equation}\begin{aligned}
  \frac12\norm{\omega_n - \rho^{\otimes n}}{1} \leq \frac12\norm{\omega_n - \rho'_n}{1} + \frac12\norm{\rho'_n- \rho^{\otimes n}}{1} \leq \ve
\end{aligned}\end{equation}
and
\begin{equation}\begin{aligned}
  \lim_{n\to\infty} \frac1n \min_{\sigma \in \FF_n} \DD_{\Omega}\left(\omega_n \| \sigma \right) \leq D_{\FF}^\infty(\rho) + \delta.
\end{aligned}\end{equation}
Since this holds for all sufficiently small $\ve>0$, we can take the limit and conclude that
\begin{equation}\begin{aligned}
  \lim_{\ve\to0} \lim_{n\to\infty} \min_{\frac12 \norm{\rho'-\rho^{\otimes n}}{1}\leq \ve}  \frac1n \DD_{\Omega,\FF_n}\left(\rho'\right) \leq D_{\FF}^\infty(\rho) + \delta
\end{aligned}\end{equation}
for all $\delta>0$, which implies the claim once one takes $\delta\to 0$.
\end{proof}

A stronger (`strong converse') variant of the above result can be shown for the case when the set $\FF$ consists of a single state, with $\FF_n = \{ \sigma^{\otimes n} \}$. Such a case can be encountered e.g.\ in the transformations of quantum dichotomies or in the resource theories of athermality and purity.

\begin{boxed}{white}
\begin{slemma}\label{lem:omega_reg_singleton}
For all $\ve \in (0,1)$,
\begin{equation}\begin{aligned}
   \lim_{n\to\infty}\, \min_{\frac12 \norm{\rho'-\rho^{\otimes n}}{1}\leq \ve} \frac{1}{n} \DD_{\Omega} (\rho' \| \sigma^{\otimes n}) = D(\rho\|\sigma).
\end{aligned}\end{equation}
\end{slemma}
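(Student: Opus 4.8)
The plan is to establish the two inequalities $\ge$ and $\le$ separately and combine them. The one external ingredient I would invoke is the strong converse form of the quantum Stein's lemma, which for a fixed product reference state may be phrased as the statement that, for every $\ve\in(0,1)$,
\[
  \lim_{n\to\infty}\,\min_{\frac12 \norm{\rho'-\rho^{\otimes n}}{1}\leq \ve}\,\frac{1}{n}\, D_{\max}(\rho'\|\sigma^{\otimes n}) = D(\rho\|\sigma).
\]
The point distinguishing this singleton case from Lemma~\ref{lem:omega_reg_full} is that here the right-hand side is $D(\rho\|\sigma)$ already at \emph{every fixed} $\ve\in(0,1)$, with no need for the limit $\ve\to0$: the $\limsup\le$ half is the achievability part of Stein's lemma~\cite{hiai_1991,datta_2009-2}, while the $\liminf\ge$ half is the Ogawa--Nagaoka strong converse~\cite{ogawa_2000}.

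For the lower bound I would use the defining inequality $\DD_{\Omega}(\rho'\|\sigma^{\otimes n})\ge D_{\max}(\rho'\|\sigma^{\otimes n})$, minimise both sides over the $\ve$-ball around $\rho^{\otimes n}$, divide by $n$, and take $n\to\infty$; the strong-converse half of the displayed identity then yields $\liminf_{n}\frac1n\min_{\rho'}\DD_{\Omega}(\rho'\|\sigma^{\otimes n})\ge D(\rho\|\sigma)$.

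For the upper bound I would replay the mixing construction of Lemma~\ref{lem:omega_reg_full}, which simplifies considerably because the free set is the singleton $\{\sigma^{\otimes n}\}$ and the ``optimal free state'' is therefore always $\sigma^{\otimes n}$ itself. Concretely, I would take a sequence $(\rho'_n)_n$ with $\frac12\norm{\rho'_n-\rho^{\otimes n}}{1}\le\ve/2$ attaining the minimum that defines the smoothed $D_{\max}$ at error $\ve/2$, set $\lambda_n\coloneqq D_{\max}(\rho'_n\|\sigma^{\otimes n})$, and define $\omega_n\coloneqq(2^{-\eta}\sigma^{\otimes n}+\rho'_n)/(1+2^{-\eta})$ with $\eta\coloneqq\log(2\ve^{-1}-1)$. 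The identical two-sided sandwich $\omega_n\le\frac{2^{-\eta}+2^{\lambda_n}}{1+2^{-\eta}}\sigma^{\otimes n}$ and $\sigma^{\otimes n}\le(2^\eta+1)\,\omega_n$ gives $\frac12\norm{\omega_n-\rho'_n}{1}\le\ve/2$, hence $\frac12\norm{\omega_n-\rho^{\otimes n}}{1}\le\ve$, together with
\[
  \tfrac1n \DD_{\Omega}(\omega_n\|\sigma^{\otimes n}) \le \tfrac1n\lambda_n - \tfrac1n\log\tfrac\ve2.
\]
Taking $n\to\infty$ at fixed $\ve$ kills the correction $-\frac1n\log\frac\ve2$ and leaves $\limsup_n\frac1n\min_{\rho'}\DD_{\Omega}(\rho'\|\sigma^{\otimes n})\le D(\rho\|\sigma)$ by the achievability half. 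Combining the two directions shows the limit exists and equals $D(\rho\|\sigma)$.

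The main obstacle is entirely concentrated in the fixed-$\ve$ strong-converse identity for the smoothed max-relative entropy; once that is granted, the rest is a routine adaptation of the preceding proof, with the singleton structure both removing the optimisation over $\sigma_n$ and, decisively, removing the outer $\ve\to0$ limit that was unavoidable in the general case.
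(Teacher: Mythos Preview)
Your proof is correct and follows essentially the same strategy as the paper: both use the fixed-$\ve$ strong converse for the smoothed $D_{\max}$ (Eq.~\eqref{eq:s20}) to get the lower bound via $\DD_\Omega\ge D_{\max}$, and both rely on the mixing construction for the upper bound. The only difference is cosmetic: the paper obtains the upper bound in one line by invoking the already-proven Lemma~\ref{lem:omega_reg_full} together with monotonicity in $\ve$ (smaller $\ve$ can only increase the smoothed quantity), whereas you re-run the mixing argument explicitly at fixed $\ve$; since that mixing argument is precisely the content of Lemma~\ref{lem:omega_reg_full}, the two routes are the same in substance.
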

\end{boxed}
\begin{proof}
The crucial difference now is that
\begin{equation}\begin{aligned}\label{eq:s20}
 \lim_{n\to\infty}\, \min_{\frac12 \norm{\rho'-\rho^{\otimes n}}{1}\leq \ve}  \frac1n D_{\max}\left(\rho' \| \sigma^{\otimes n} \right) = D(\rho\|\sigma)
\end{aligned}\end{equation}
holds for all $\ve\in(0,1)$~\cite{tomamichel_2013,datta_2013-1}, which is directly related to the strong converse property of quantum hypothesis testing. The claim then follows from the simple chain of inequalities:
\begin{equation}\begin{aligned}
D(\rho\|\sigma) &= \lim_{n\to\infty}\, \min_{\frac12 \norm{\rho'-\rho^{\otimes n}}{1}\leq \ve} \frac{1}{n} D_{\max} (\rho' \| \sigma^{\otimes n}) \\
&\leq \lim_{n\to\infty}\, \min_{\frac12 \norm{\rho'-\rho^{\otimes n}}{1}\leq \ve} \frac{1}{n} \DD_{\Omega} (\rho' \| \sigma^{\otimes n}) \\
&\textleq{(i)} \lim_{\ve\to 0} \lim_{n\to\infty}\, \min_{\frac12 \norm{\rho'-\rho^{\otimes n}}{1}\leq \ve} \frac{1}{n} \DD_{\Omega} (\rho' \| \sigma^{\otimes n}) \\
&\texteq{(ii)} D(\rho\|\sigma)\, .
\end{aligned}\end{equation}
Here, (i)~descends from the observation that making $\ve$ smaller can only increase the minimum over the $\ve$-ball, and (ii)~comes from Lemma~\ref{lem:omega_reg_full}.
\end{proof}

\bigskip

\section{General converse}\label{app:converse}

{%
\renewcommand{\thetheorem}{\getrefnumber{prop:converse}}
\begin{boxed}{white}
\begin{proposition}
\label{prop:converse_full}
For all states $\rho$ and $\omega$ such that $D^\infty_{\FF}(\omega) > 0$,
\begin{equation}\begin{aligned}
  r_{\rm{prob}}(\rho \to \omega) \leq \frac{\DD^\infty_{\Omega,\FF}(\rho)}{\DD^\sminfty_{\Omega,\FF}(\omega)} = \frac{\DD^\infty_{\Omega,\FF}(\rho)}{D^\infty_{\FF}(\omega)}.
\end{aligned}\end{equation}
If the set $\FF$ consists of a single state, then the above is actually a strong converse bound; specifically, if $\FF_n = \{\sigma^{\otimes n}\}$ in the input space and $\FF_n = \{\sigma'^{\otimes n}\}$ in the output space, then we get
\begin{equation}\begin{aligned}
  r^\dagger_{\rm{prob}}(\rho \to \omega) \leq \frac{\DD_{\Omega}(\rho \| \sigma)}{D(\omega \| \sigma')}.
\end{aligned}\end{equation}
\end{proposition}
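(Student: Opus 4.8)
The plan is to run the textbook monotone-based converse, taking the projective relative entropy $\DD_{\Omega,\FF}$ as the monotone, and then to upgrade the one-shot data-processing inequality to an asymptotic statement using the two equipartition results already at hand: Lemma~\ref{lem:omega_reg} for the weak converse and Lemma~\ref{lem:omega_reg_singleton} for the single-state strong converse.

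\emph{Weak converse.} Fix any $r < r_{\rm prob}(\rho\to\omega)$ (the bound is trivial for $r=0$, so assume $r>0$) and choose free operations $\Lambda_n\in\OO_n$ whose normalised outputs $\tilde\omega_n \coloneqq \Lambda_n(\rho^{\otimes n})/\Tr\Lambda_n(\rho^{\otimes n})$ obey $\ve_n \coloneqq \frac12\norm{\tilde\omega_n - \omega^{\otimes\floor{rn}}}{1}\to 0$. Monotonicity of $\DD_{\Omega,\FF}$ under probabilistic free protocols [property~(i)] yields
\[
  \tfrac1n \DD_{\Omega,\FF}(\tilde\omega_n) \le \tfrac1n \DD_{\Omega,\FF}(\rho^{\otimes n}) \xrightarrow{n\to\infty} \DD^\infty_{\Omega,\FF}(\rho).
\]
To treat the left-hand side I would set $m\coloneqq\floor{rn}$, write $\tfrac1n = \tfrac{m}{n}\cdot\tfrac1m$, and, since $\tilde\omega_n$ lies in the $\ve_n$-ball around $\omega^{\otimes m}$, lower-bound $\DD_{\Omega,\FF}(\tilde\omega_n)$ by $\min_{\frac12\norm{\rho'-\omega^{\otimes m}}{1}\le\ve_n}\DD_{\Omega,\FF}(\rho')$. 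For every fixed $\ve>0$ one has $\ve_n\le\ve$ eventually, and shrinking the smoothing ball only raises the minimum; combining this with $m/n\to r$ gives $\liminf_n \tfrac1n\DD_{\Omega,\FF}(\tilde\omega_n)\ge r\,\DD^\sminfty_{\Omega,\FF}(\omega)$, which equals $r\,D^\infty_\FF(\omega)$ by Lemma~\ref{lem:omega_reg}. Hence $\DD^\infty_{\Omega,\FF}(\rho)\ge r\,D^\infty_\FF(\omega)$, and letting $r\uparrow r_{\rm prob}(\rho\to\omega)$ and dividing by $D^\infty_\FF(\omega)>0$ proves the first inequality.

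\emph{Strong converse, single free state.} Here $\DD_\Omega(\cdot\|\sigma^{\otimes n})$ is additive under tensor powers, so $\DD_{\Omega,\FF}(\rho^{\otimes n}) = n\DD_\Omega(\rho\|\sigma)$ exactly and no regularisation is needed in the numerator. The decisive ingredient is that Lemma~\ref{lem:omega_reg_singleton} holds for \emph{every} fixed $\ve\in(0,1)$, not merely as $\ve\to0$. I would therefore take any strong-converse-achievable rate $r$, extract a subsequence $(n_k)$ along which the optimal error stays below some $\ve\in(0,1)$, and run the same chain with $m_k\coloneqq\floor{rn_k}$:
\[
  \tfrac1{m_k}\min_{\frac12\norm{\rho'-\omega^{\otimes m_k}}{1}\le\ve}\DD_\Omega(\rho'\|\sigma'^{\otimes m_k}) \le \tfrac1{m_k}\DD_\Omega(\tilde\omega_{n_k}\|\sigma'^{\otimes m_k}) \le \tfrac{n_k}{m_k}\DD_\Omega(\rho\|\sigma).
\]
By Lemma~\ref{lem:omega_reg_singleton} the left-hand side tends to $D(\omega\|\sigma')$, while $n_k/m_k\to 1/r$, so $r\le\DD_\Omega(\rho\|\sigma)/D(\omega\|\sigma')$; taking the supremum over achievable $r$ gives the strong-converse bound.

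The main obstacle is the bookkeeping of the smoothing limit. In the weak converse one must convert the single vanishing sequence $(\ve_n)$ into the ``fixed $\ve$, then $\ve\to0$'' smoothing that appears in the definition of $\DD_{\Omega,\FF}^\sminfty$; this is legitimate precisely because smaller balls enlarge the minimum and $\ve_n\to0$. In the strong-converse case the crux is conceptual rather than computational: it is exactly the $\ve$-uniformity of Lemma~\ref{lem:omega_reg_singleton} --- inherited from the strong converse property of quantum hypothesis testing --- that lets the error merely be bounded away from $1$ instead of being forced to $0$.
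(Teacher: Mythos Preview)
Your proposal is correct and follows essentially the same route as the paper: apply monotonicity of $\DD_{\Omega,\FF}$ under probabilistic free operations, lower-bound the output by the smoothed minimum over the $\ve$-ball, convert the single vanishing sequence $(\ve_n)$ into the ``fixed $\ve$, then $\ve\to 0$'' limit via monotonicity of the minimum in $\ve$, and invoke Lemma~\ref{lem:omega_reg} (respectively Lemma~\ref{lem:omega_reg_singleton} for the single-state strong converse). Your handling of the strong-converse case via a subsequence along which the error stays below some fixed $\ve<1$ is in fact slightly more explicit than the paper's own write-up, which simply remarks that Lemma~\ref{lem:omega_reg_singleton} applies for any $\ve\in(0,1)$.
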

\end{boxed}
}%
\begin{proof}
Suppose that $r$ is an achievable rate; that is, there exists a sequence of protocols $(\Lambda_n)_n \in \OO$ such that $\frac{\Lambda_n(\rho^{\otimes n})}{\Tr \Lambda_n(\rho^{\otimes n})} = \tau_n$ with $\frac12 \norm{\tau_n - \omega^{\otimes \floor{rn}}}{1} \eqqcolon \ve_n \to 0$ as $n \to \infty$. Using the monotonicity of $\DD_{\Omega,\FF}$ under every free probabilistic protocol, we get
\begin{equation}\begin{aligned}
  \DD_{\Omega,\FF_n}\left(\rho^{\otimes n}\right) &\geq  \DD_{\Omega,\FF_{\floor{rn}}}(\tau_n)\\
  &\geq \min_{\frac12 \norm{\omega'-\omega^{\otimes \floor{rn} }}{1}\leq \ve_n} \DD_{\Omega,{\FF_{\floor{rn}}}} \left(\omega'\right).
\end{aligned}\end{equation}
Then
\begin{equation}\begin{aligned}
  \lim_{n \to \infty} \frac1n \DD_{\Omega,\FF}\left(\rho^{\otimes n}\right) &\geq \lim_{n \to \infty} \min_{\frac12 \norm{\omega'-\omega^{\otimes \floor{rn} }}{1}\leq \ve_n} \frac1n \DD_{\Omega,\FF_{\floor{rn}}}\! \left(\omega'\right)\\
  &\geq \lim_{\ve \to 0} \lim_{n \to \infty} \min_{\frac12 \norm{\omega'-\omega^{\otimes \floor{rn} }}{1}\leq \ve} \frac1n \DD_{\Omega,\FF_{\floor{rn}}}\! \left(\omega'\right)\\
  &= r\, D^\infty_\FF(\omega),
\end{aligned}\end{equation}
where in the last line we used Lemma~\ref{lem:omega_reg_full} and the consequent fact that 
\begin{equation}\begin{aligned}\label{eq:weak_additivity_thingy}
  D^\infty_\FF(\omega) &= \lim_{\ve \to 0} \lim_{n\to\infty} \min_{\frac12\norm{\omega' - \omega^{\otimes \floor{rn}}}{1} \leq \ve} \frac{1}{\floor{rn}} \DD_{\Omega,\FF_{\floor{rn}}}\!(\omega')\\
  &= \lim_{\ve \to 0} \lim_{n\to\infty} \frac{n}{\floor{rn}} \min_{\frac12\norm{\omega' - \omega^{\otimes \floor{rn}}}{1} \leq \ve} \frac{1}{n}  \DD_{\Omega,\FF_{\floor{rn}}}\!(\omega')\\
  &= \frac{1}{r} \, \lim_{\ve\to0} \lim_{n\to\infty} \min_{\frac12\norm{\omega' - \omega^{\otimes \floor{rn}}}{1} \leq \ve} \frac{1}{n}  \DD_{\Omega,\FF_{\floor{rn}}}\!(\omega').
\end{aligned}\end{equation}

If $\FF_n = \{\sigma^{\otimes n}\}$, then we no longer need to ensure that $\ve_n \to 0$, as it suffices to take any error $\ve_n \to \ve' < 1$ and invoke Lemma~\ref{lem:omega_reg_singleton}.
\end{proof}

Here we remark that the assumption $D^\infty_{\FF}(\omega) > 0$ might seem to be trivially satisfied, but it is not always the case --- there can indeed exist resource theories where the regularised relative entropy vanishes for some $\omega \notin \FF$, e.g.\ the theory of asymmetry~\cite{gour_2009}. Nevertheless, it is true that $D^\infty_\FF$ can be ensured to be non-zero for all resourceful states in the majority of practically relevant theories, e.g.\ entanglement~\cite{piani_2009-1}, making the assumption always satisfied.


\section{Strong converse for distillation}\label{app:converse_dist}

The result below concerns the case when the target state of the transformation is pure.

We will use the $D_{\min}$ relative entropy to the free states, for which we note that
\begin{equation}
    D^\infty_{\min,\FF}(\psi) = \lim_{n\to\infty} \frac1n \log \max_{\sigma \in \FF_n} \braket{\psi^{\otimes n}|\sigma|\psi^{\otimes n}}^{-1}.
\end{equation}

{%
\renewcommand{\thetheorem}{\getrefnumber{prop:dist_bound_rate}}
\begin{boxed}{white}
\begin{proposition}
\label{prop:dist_bound_rate_full}
Consider a pure target state $\omega = \psi \notin \FF$. 
Every physical sequence of distillation protocols $(\Lambda_n)_n$ satisfies the following trade-off relation between its rate $r$ and transformation errors $\ve_n \coloneqq \frac12 \norm{\Lambda_n(\rho^{\otimes n}) - \psi^{\otimes \floor{rn}}}{1}$:
\begin{equation}\begin{aligned}
  r + \frac{\limsup_{n\to\infty} \frac1n \log \left( \ve_n^{-1} - 1 \right)}{D_{\min,\FF}^\infty(\psi)} \leq \frac{\DD_{\Omega,\FF}^\infty(\rho)}{D_{\min,\FF}^\infty(\psi)}.
\end{aligned}\end{equation}

In particular,
\begin{equation}\begin{aligned}
  r^\dagger_{\mathrm{prob}}(\rho \to \psi) \leq \frac{\DD_{\Omega,\FF}^\infty(\rho)}{D_{\min,\FF}^\infty(\psi)}.
\end{aligned}\end{equation}
\end{proposition}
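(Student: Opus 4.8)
The plan is to combine the monotonicity of $\DD_{\Omega,\FF}$ under free probabilistic protocols (property (i)) with a quantitative one-shot estimate, showing that any state with large overlap with a pure target must have large projective relative entropy. Writing $\tau_n \coloneqq \Lambda_n(\rho^{\otimes n})/\Tr\Lambda_n(\rho^{\otimes n})$ for the normalised output, property (i) applied to $\Lambda_n$ (which maps the $n$-copy input space to the $\floor{rn}$-copy output space) gives directly $\DD_{\Omega,\FF_n}(\rho^{\otimes n}) \geq \DD_{\Omega,\FF_{\floor{rn}}}(\tau_n)$. The whole problem therefore reduces to lower bounding $\DD_{\Omega,\FF_{\floor{rn}}}(\tau_n)$ in terms of $D_{\min,\FF}(\psi^{\otimes\floor{rn}})$ and the error $\ve_n$. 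I assume throughout that $D_{\min,\FF}^\infty(\psi)>0$, which is needed both for the bound to be meaningful and for the correction terms below to vanish.

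The heart of the argument is the following one-shot estimate, which I would establish first: if $\psi$ is pure and $\frac12\norm{\tau-\psi}{1}\leq\ve$, then writing $q\coloneqq\braket{\psi|\sigma|\psi}$ and $\delta\coloneqq 1-\braket{\psi|\tau|\psi}$, the two halves of the projective relative entropy split cleanly for every $\sigma\in\FF$. Testing the defining inequality $\tau\leq 2^{D_{\max}(\tau\|\sigma)}\sigma$ against $\psi$ yields $D_{\max}(\tau\|\sigma)\geq -\log q+\log(1-\delta)=D_{\min}(\psi\|\sigma)+\log(1-\delta)$, while testing $\sigma\leq 2^{D_{\max}(\sigma\|\tau)}\tau$ against the complementary projector $\id-\psi$ yields $D_{\max}(\sigma\|\tau)\geq\log\frac{1-q}{\delta}$. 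Adding these and minimising over $\sigma\in\FF$ — using that $\log\frac{1-q}{q}$ is decreasing in $q$ and that $q\leq 2^{-D_{\min,\FF}(\psi)}\eqqcolon c$ — produces
\[
  \DD_{\Omega,\FF}(\tau)\geq D_{\min,\FF}(\psi)+\log(1-c)+\log\tfrac{1-\delta}{\delta}.
\]
Finally, the Helstrom variational form $\frac12\norm{\tau-\psi}{1}=\max_{0\leq M\leq\id}\Tr M(\tau-\psi)$ with $M=\id-\psi$ gives $\delta\leq\ve$, and since $\log\frac{1-\delta}{\delta}$ decreases in $\delta$ one may replace $\delta$ by $\ve$, obtaining $\DD_{\Omega,\FF}(\tau)\geq D_{\min,\FF}(\psi)+\log(1-c)+\log(\ve^{-1}-1)$.

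With this in hand the asymptotic statement is routine. Substituting $\psi\to\psi^{\otimes\floor{rn}}$, $\tau\to\tau_n$, $\ve\to\ve_n$, and $c\to c_n\coloneqq 2^{-D_{\min,\FF_{\floor{rn}}}(\psi^{\otimes\floor{rn}})}$, and chaining with monotonicity, gives for each $n$
\[
  \tfrac1n\DD_{\Omega,\FF_n}(\rho^{\otimes n})\geq\tfrac1n D_{\min,\FF_{\floor{rn}}}(\psi^{\otimes\floor{rn}})+\tfrac1n\log(1-c_n)+\tfrac1n\log(\ve_n^{-1}-1).
\]
As $n\to\infty$ the left side tends to $\DD_{\Omega,\FF}^\infty(\rho)$; the first right-hand term equals $\frac{\floor{rn}}{n}\cdot\frac{1}{\floor{rn}}D_{\min,\FF_{\floor{rn}}}(\psi^{\otimes\floor{rn}})\to r\,D_{\min,\FF}^\infty(\psi)$; and since $D_{\min,\FF}^\infty(\psi)>0$ forces $c_n\to0$, the correction $\frac1n\log(1-c_n)\to0$. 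Taking the $\limsup$ of the last term then yields the trade-off relation, which rearranges to the stated inequality after dividing by $D_{\min,\FF}^\infty(\psi)$. The strong converse follows as the case where the errors do not tend to $1$: along a subsequence with $\ve_{n_k}\leq\ve'<1$ the term $\frac{1}{n_k}\log(\ve_{n_k}^{-1}-1)$ is bounded below by a vanishing sequence, so its limiting contribution is non-negative, leaving $r\leq\DD_{\Omega,\FF}^\infty(\rho)/D_{\min,\FF}^\infty(\psi)$. I expect the main obstacle to be the one-shot splitting: the key realisation is that the usually-problematic second term $D_{\max}(\sigma\|\tau)$ of the projective relative entropy — the very term that makes $\DD_{\Omega,\FF}$ diverge on pure states — is precisely what encodes the distillation error through the complementary projector $\id-\psi$, thereby converting that divergence into a finite, quantitative error-rate trade-off.
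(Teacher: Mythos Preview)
Your proof is correct and follows essentially the same route as the paper's: a one-shot lower bound on $\DD_{\Omega,\FF}$ of the (output of the) protocol in terms of $D_{\min,\FF}(\psi)$ and the fidelity error, followed by the same asymptotic bookkeeping. The only difference is that the paper imports the one-shot bound as a black box from~\cite[Thm.~9]{regula_2021-4}, whereas you re-derive it from scratch by testing the defining operator inequalities of $\DD_\Omega(\tau\|\sigma)$ against $\psi$ and $\id-\psi$; the resulting inequality $\Omega_\FF(\tau)\geq\frac{(1-\delta)(1-c)}{\delta\,c}$ is identical, so your argument is simply a more self-contained version of the paper's.
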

\end{boxed}
}%

Let us remark that, unless the given protocol trivialises with $\lim_{n\to\infty} \ve_n = 1$, then it holds that $\limsup_{n\to\infty} \frac1n \log \left( \ve_n^{-1} - 1 \right) = \limsup_{n\to\infty} \frac1n \log \ve_n^{-1}$.

\begin{proof}
If $\DD^\infty_{\Omega,\FF}(\rho) = \infty$, then the result is trivial, so assume otherwise. 
Assume now that there exists a sequence of protocols $(\Lambda_n)_n \in \OO$ such that $\frac{\Lambda_n(\rho^{\otimes n})}{\Tr \Lambda_n(\rho^{\otimes n})} = \tau_n$ with error $\ve_n \coloneqq \frac12 \norm{\tau_n - \psi^{\otimes \floor{rn} }}{1}$. We use $\delta_n \coloneqq 1-\braket{\psi^{\otimes \floor{rn}} | \tau_n | \psi^{\otimes \floor{rn}}}$ to denote the error in fidelity rather than trace distance.

The result of~\cite[Thm.~9]{regula_2021-4} tells us that for each $n$ we necessarily have that
\begin{equation}\begin{aligned}
  \Omega_{\FF_n}(\rho^{\otimes n}) \geq \frac{ (1-\delta_n) (1- F_{\FF_{\floor{rn}}}(\psi^{\otimes \floor{rn}})) }{ \delta_n\,  F_{\FF_{\floor{rn}}}(\psi^{\otimes \floor{rn}}) },
\end{aligned}\end{equation}
where we have defined
\begin{equation}
    F_\FF(\psi) \coloneqq 2^{-D_{\min,\FF}(\psi)} = \max_{\sigma \in \FF} \braket{\psi|\sigma|\psi}
\end{equation}
for simplicity.
Equivalently, we have that
\begin{equation}\begin{aligned}\label{eq:dist_omega_ineq}
  \left( F_{\FF_{\floor{rn}}}(\psi^{\otimes \floor{rn}})^{-1} - 1 \right) &\leq \left( \delta_n^{-1} - 1 \right)^{-1} \Omega_{\FF_n}(\rho^{\otimes n})\\
&\leq \left( \ve_n^{-1} - 1 \right)^{-1} \Omega_{\FF_n}(\rho^{\otimes n}),
\end{aligned}\end{equation}
where $\delta_n \leq \ve_n$ is a consequence of the tighter Fuchs--van de Graaf inequality $1 - F(\rho, \psi) \leq \frac12 \norm{\rho - \psi}{1}$. 
Taking the logarithm of the above and  dividing by $n$ gives
\begin{equation}\begin{aligned}
  \liminf_{n\to\infty} \frac{1}{n} \log \left( F_{\FF_{\floor{rn}}}(\psi^{\otimes \floor{rn}})^{-1} - 1 \right) &\leq \liminf_{n\to\infty} \left[ \frac1n \DD_{\Omega,\FF}(\rho^{\otimes n}) - \frac1n \log \left( \ve_n^{-1} - 1 \right) \right]\\
&\leq \limsup_{n\to\infty} \left[ \frac1n \DD_{\Omega,\FF}(\rho^{\otimes n}) \right] - \limsup_{n\to\infty} \frac1n \log \left( \ve_n^{-1} - 1 \right).
\end{aligned}\end{equation}
Here, in the second line we used the fact that $\liminf_{n\to\infty} (a_n-b_n) \leq \limsup_{n\to\infty} a_n - \limsup_{n\to\infty} b_n$, as one sees immediately by picking a sub-sequence $(a_{n_k}-b_{n_k})_k$ with the property that $\lim_{k\to\infty} b_{n_k} = \limsup_{n\to\infty} b_n$. Now, using the fact that $D^\infty_{\min,\FF}(\psi)$ is 
well defined (due to the sub-additivity of $D_{\min,\FF}$) and weakly additive, we have that the left-hand side reduces to
\begin{equation}\begin{aligned}
\liminf_{n\to\infty} \frac{1}{n} \log \left( F_{\FF_{\floor{rn}}}(\psi^{\otimes \floor{rn}})^{-1} - 1 \right) &= \lim_{n\to\infty} \frac{1}{n} \log  F_{\FF_{\floor{rn}}}(\psi^{\otimes \floor{rn}})^{-1} \\
&= r \, D_{\min,\FF}^\infty(\psi),
\end{aligned}\end{equation}
which concludes the first part of the proof.

To obtain a strong converse bound, we see that assuming that $\liminf_{n \to\infty} \ve_n \in [0,1)$ entails that $\limsup \ve_n^{-1} > 1$, and hence
\begin{equation}\begin{aligned}
  r \, D_{\min,\FF}^\infty(\psi)\leq \limsup_{n\to\infty} \frac1n \DD_{\Omega,\FF_n}(\rho^{\otimes n})
\end{aligned}\end{equation}
as was to be shown.
\end{proof}

Note that in Proposition~\ref{prop:dist_bound_rate_full} we did not need to assume that $\limsup_{n\to\infty} \frac1n \log \left( \ve_n^{-1} - 1 \right) < \infty$: this is guaranteed by \eqref{eq:dist_omega_ineq} coupled with the submultiplicativity of $\Omega_\FF$, which together ensure that $\ve_n^{-1}$ is upper bounded by 
\begin{equation}\begin{aligned}
\ve_n^{-1} \leq \left[ \Omega_\FF(\rho) \, 2^{-r \, D^\infty_{\min}(\psi)} \right]^n + 1
\end{aligned}\end{equation}
which grows exponentially in $n$. This also leads to a general insight about how fast the errors can decay in general distillation protocols, which we formalise as follows.

\begin{scorollary}
If $\RW(\rho) < \infty$ (in particular for every full-rank $\rho$), then there does not exist any distillation protocol $\rho \to \psi$ such that the error decreases faster than exponentially, even in the probabilistic setting. Specifically, $\ve_n = 2^{-O(n)}$ for every physical distillation protocol.
\end{scorollary}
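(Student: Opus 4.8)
The plan is to read the corollary off directly from the exponential upper bound on $\ve_n^{-1}$ already established in the remark preceding it. That remark derives, by combining the one-shot inequality \eqref{eq:dist_omega_ineq} with the submultiplicativity of $\Omega_\FF$ (equivalently, the sub-additivity of $\DD_{\Omega,\FF}$, property (v), giving $\Omega_{\FF_n}(\rho^{\otimes n}) \leq \Omega_\FF(\rho)^n$) and the weak additivity of $D_{\min,\FF}^\infty$ to lower bound the fidelity term $F_{\FF_{\floor{rn}}}(\psi^{\otimes \floor{rn}})^{-1}$, the bound
\begin{equation}\begin{aligned}
  \ve_n^{-1} \leq K^n + 1, \qquad K \coloneqq \Omega_\FF(\rho)\, 2^{-r\, D_{\min,\FF}^\infty(\psi)},
\end{aligned}\end{equation}
valid for every $n$. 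The crucial point I would stress is that the hypothesis $\RW(\rho) = \Omega_\FF(\rho) < \infty$ makes $K$ a finite positive constant, independent of $n$.

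From here I would finish with a short two-case argument. If $K \leq 1$, then $\ve_n^{-1} \leq K^n + 1 \leq 2$, so $\ve_n \geq \tfrac12$ for all $n$; the protocol then fails to distill and trivially satisfies $\ve_n \geq 2^{-n}$. If instead $K > 1$, then $K^n \geq 1$, so $\ve_n^{-1} \leq 2K^n$, which rearranges to $\ve_n \geq \tfrac12 K^{-n} = 2^{-(n\log_2 K + 1)}$. In either case there is a constant $c \geq 0$ with $\ve_n \geq 2^{-cn}$ for all $n \geq 1$, which is exactly the assertion $\ve_n = 2^{-O(n)}$: no distillation protocol, deterministic or probabilistic, can drive the error to zero faster than exponentially.

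To close the parenthetical claim that the hypothesis holds for every full-rank $\rho$, I would invoke property (iv) of $\DD_{\Omega,\FF}$ together with Axiom II: the latter supplies a full-rank free state $\sigma$, so any full-rank $\rho$ satisfies $\supp\rho = \supp\sigma$ and hence $\DD_{\Omega,\FF}(\rho) < \infty$, i.e.\ $\Omega_\FF(\rho) < \infty$. Since the substantive estimates are already in place, I do not anticipate a genuine obstacle; the only point requiring minor care is that the bound $\ve_n^{-1} \leq K^n + 1$ must be used for \emph{all} $n$ rather than merely in the $n\to\infty$ limit exploited for the rate bounds of Proposition~\ref{prop:dist_bound_rate_full}, so that the degenerate regime $K \leq 1$ is covered and the constant $c$ carries no hidden $n$-dependence.
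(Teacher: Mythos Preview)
Your proposal is correct and follows essentially the same approach as the paper: the corollary is stated immediately after the remark deriving $\ve_n^{-1} \leq \left[\Omega_\FF(\rho)\,2^{-r D_{\min,\FF}^\infty(\psi)}\right]^n + 1$, and the paper treats that bound as the proof. You simply make explicit the two-case split in $K$ and the full-rank argument via Axiom~II and property~(iv), both of which the paper leaves to the reader.
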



\section{Achievability for affine theories}\label{app:achiev_aff}

Recall that we distinguish two types of resource theories:
\begin{enumerate}[(i)]
\item \emph{Affine resources}, that is, those for which the set of free states $\FF$ contains all states in the affine hull $\operatorname{aff}(\FF)$ (smallest affine subspace that contains $\FF$), i.e.\ $\FF = \D(\H) \cap \operatorname{aff}(\FF)$. Note that every such $\FF$ will have an empty interior as a subset of $\D(\H)$, since for any set with a non-empty interior, $\operatorname{aff}(\FF)$ would simply be the whole space of Hermitian operators.
\item \emph{Full-dimensional resources}, that is, those for which $\FF$ has a non-empty interior as a subset of $\D(\H)$. Equivalently, these are the resources for which $R_{s,\FF}(\rho) < \infty$ for every state.
\end{enumerate}

{%
\renewcommand{\thetheorem}{\getrefnumber{prop:achievability}}
\begin{boxed}{white}
\begin{proposition}
\label{prop:achievability_full}
Consider any affine resource theory satisfying Axioms I--IV. Then, for all states $\rho$ and $\omega$ such that $D_\FF^\infty(\omega) > 0$, the transformation rate under resource--non-generating operations $\OO$ satisfies
\begin{equation}\begin{aligned}
  r_{\rm{prob}}(\rho \to \omega) = \frac{ \DD^\infty_{\Omega,\FF}(\rho)}{D_{\FF}^\infty(\omega)}.
\end{aligned}\end{equation}
When $\FF$ consists of a single state, then
\begin{equation}\begin{aligned}
  r_{\rm{prob}}(\rho \to \omega) = r^\dagger_{\rm{prob}}(\rho \to \omega) = \frac{ \DD_\Omega(\rho \| \sigma)}{D(\omega \| \sigma')}.
\end{aligned}\end{equation}
\end{proposition}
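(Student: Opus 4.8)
The converse is essentially already proven: Lemma~\ref{lem:omega_reg_full} identifies $\DD^\sminfty_{\Omega,\FF}(\omega)$ with $D^\infty_\FF(\omega)$, so the bound $r_{\rm prob}(\rho\to\omega)\le \DD^\infty_{\Omega,\FF}(\rho)/D^\infty_\FF(\omega)$ is exactly Proposition~\ref{prop:converse_full}, and in the single-state case its strong-converse refinement bounds $r^\dagger_{\rm prob}$ the same way. Hence the entire task is achievability: for every $r<\DD^\infty_{\Omega,\FF}(\rho)/D^\infty_\FF(\omega)$ I must exhibit a sequence of free probabilistic protocols converting $\rho^{\otimes n}$ into a state asymptotically indistinguishable from $\omega^{\otimes\floor{rn}}$. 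I would stress at the outset that the route of Proposition~\ref{prop:achievability_Rs} is unavailable here: an affine $\FF$ has empty interior, so $R_{s,\FF}\equiv\infty$ and $\regrobs(\omega)=\infty$, making that lower bound vacuous. The affine structure must therefore be exploited directly.

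The plan is a two-ingredient construction. \emph{First}, smooth the target: applying the asymptotic equipartition property (Lemma~\ref{lem:omega_reg_full}) to $\omega$ and extracting, by a diagonal argument through the nested limits $\lim_{\ve\to0}\lim_{m\to\infty}$, a sequence of states $\omega'_m$ with $\tfrac12\norm{\omega'_m-\omega^{\otimes m}}{1}\to0$ and $\tfrac1m\DD_{\Omega,\FF_m}(\omega'_m)\to D^\infty_\FF(\omega)$. Because the smoothing in that lemma mixes in a free state, each $\omega'_m$ has support containing that of a free state, so $\DD_{\Omega,\FF_m}(\omega'_m)<\infty$. \emph{Second}, invoke the one-shot characterization of probabilistic convertibility by the projective robustness~\cite{regula_2022,regula_2021-4}: in an affine theory there is a resource--non-generating probabilistic map sending $\rho^{\otimes n}$ exactly to $\omega'_m$ whenever $\DD_{\Omega,\FF_m}(\omega'_m)\le \DD_{\Omega,\FF_n}(\rho^{\otimes n})$. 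Fekete's lemma gives $\DD_{\Omega,\FF_n}(\rho^{\otimes n})\ge n\DD^\infty_{\Omega,\FF}(\rho)$, so it suffices to ensure $\DD_{\Omega,\FF_m}(\omega'_m)\le n\DD^\infty_{\Omega,\FF}(\rho)$; with $m=\floor{rn}$ the left-hand side is $\approx \floor{rn}\,D^\infty_\FF(\omega)$, and the inequality holds for all large $n$ exactly when $r<\DD^\infty_{\Omega,\FF}(\rho)/D^\infty_\FF(\omega)$. The protocol then outputs $\omega'_m$, which is $\ve$-close to $\omega^{\otimes\floor{rn}}$ with $\ve\to0$, so every such $r$ is achievable.

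For the single-state case $\FF_n=\{\sigma^{\otimes n}\}$ two simplifications occur. The input measure is additive, $\DD_{\Omega,\FF_n}(\rho^{\otimes n})=n\,\DD_\Omega(\rho\|\sigma)$, and the target smoothing can be run through Lemma~\ref{lem:omega_reg_singleton}, which holds for \emph{every} fixed $\ve\in(0,1)$ and yields $\tfrac1m\DD_\Omega(\omega'_m\|\sigma'^{\otimes m})\to D(\omega\|\sigma')$. Letting $\ve\to0$ along the construction achieves error going to $0$ at the rate $\DD_\Omega(\rho\|\sigma)/D(\omega\|\sigma')$, so $r_{\rm prob}\ge\DD_\Omega(\rho\|\sigma)/D(\omega\|\sigma')$; since $r_{\rm prob}\le r^\dagger_{\rm prob}$ and the strong-converse part of Proposition~\ref{prop:converse_full} gives the reverse bound on $r^\dagger_{\rm prob}$, all three quantities coincide.

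The main obstacle is the sufficiency half of the one-shot step: actually constructing the resource--non-generating probabilistic map once the inequality $\DD_{\Omega,\FF_m}(\omega'_m)\le\DD_{\Omega,\FF_n}(\rho^{\otimes n})$ is met. This is precisely where affineness is indispensable---monotonicity of $\DD_{\Omega,\FF}$ only supplies the necessary direction, and for full-dimensional theories one is forced onto the weaker $\regrobs$ route---so I would take particular care to confirm that the filtering operation built from the optimal $\Omega$-witnesses survives the passage to differing input and output systems $\FF_n\to\FF_m$ and to the smoothed target $\omega'_m$.
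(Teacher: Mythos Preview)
Your proposal is correct and follows essentially the same route as the paper: smooth the target via Lemma~\ref{lem:omega_reg_full}, bound the input side from below via subadditivity (Fekete), and invoke the affine one-shot sufficiency theorem of~\cite{regula_2021-4} to produce the free probabilistic map. The only cosmetic differences are that the paper explicitly disposes of the edge case $\DD^\infty_{\Omega,\FF}(\rho)=\infty$ and is careful to note that the one-shot result in~\cite{regula_2021-4} guarantees conversion only up to an arbitrarily small auxiliary error $\zeta$ (possibly via a sequence of maps) rather than exactly---both points are easily absorbed into your outline.
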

\end{boxed}
}%

\begin{proof}
If $\DD^\infty_{\Omega,\FF}(\rho) = \infty$, then also $\DD_{\Omega,\FF}(\rho) = \infty$, which means that $\rho$ can be converted into \emph{any} other state probabilistically~\cite{regula_2021-4}, making the transformation rate unbounded. 
We can thus assume that $\DD^\infty_{\Omega,\FF}(\rho) < \infty$ in what follows.

Fix an arbitrary $\delta>0$ and consider the rate
\begin{equation}\begin{aligned}
  r = \frac{\DD_{\Omega,\FF}^\infty(\rho)}{D^\infty_\FF(\omega)}\frac{1}{1+\delta}.
\end{aligned}\end{equation}
Let $(\omega_{n})_n$ be a sequence of states such that, for sufficiently small $\ve>0$ and for $n$ large enough, it holds that
\begin{equation}\begin{aligned}
  \frac12 \norm{\omega_n - \omega^{\otimes \floor{r n}}}{1} \leq \ve
\end{aligned}\end{equation}
and
\begin{equation}\begin{aligned}\label{eq:large_enough_n}
\frac{1}{n} \DD_{\Omega,\FF_{\floor{rn}}}(\omega_{n}) &\leq (1+\delta) \lim_{\ve \to 0}\, \lim_{n\to\infty}\, \min_{\frac12\norm{\omega' - \omega^{\otimes \floor{rn}}}{1} \leq \ve}  \frac{1}{n} \DD_{\Omega,\FF_{\floor{rn}}} \left( \omega' \right)\\
&= (1+\delta) \, r D_{\FF}^\infty(\omega)\\
&= \DD_{\Omega,\FF}^\infty(\rho),
\end{aligned}\end{equation}
where in the second line we used Lemma~\ref{lem:omega_reg_full} and Eq.~\eqref{eq:weak_additivity_thingy}, and in the third line we used the definition of $r$.

But since
\begin{equation}\begin{aligned}
  \DD_{\Omega,\FF}^\infty(\rho) \leq \frac1n \DD_{\Omega,\FF_n}(\rho^{\otimes n})
\end{aligned}\end{equation}
due to the subadditivity of $\DD_{\Omega,\FF}$, we have that in fact
\begin{equation}\begin{aligned}\label{eq:sufficient_cond_transf}
  \DD_{\Omega,\FF_{\floor{rn}}}(\omega_{n}) \leq \DD_{\Omega,\FF_n}(\rho^{\otimes n})
\end{aligned}\end{equation}
for all $n$ such that \eqref{eq:large_enough_n} holds. 
As $\DD_{\Omega,\FF}$ is the unique monotone that completely determines the existence of probabilistic transformations in all affine theories~\cite[Theorem~5]{regula_2021-4}, what this entails is that $\rho^{\otimes n}$ can be transformed into $\omega_n$ by a probabilistic resource--non-generating transformation or a sequence thereof; specifically,
\begin{equation}\begin{aligned}\label{eq:sufficient_cond_transf2}
\forall \zeta>0, \; \exists \Lambda \in \OO \;\, \text{s.t.} \;\, \frac12 \norm{\frac{\Lambda(\rho^{\otimes n})}{\Tr\Lambda(\rho^{\otimes n}) } - \omega_n}{1} \leq \zeta.
\end{aligned}\end{equation}

We have thus established the existence of a protocol that, for all sufficiently large $n$, takes $n$ copies of $\rho$ to a state $\omega_n$ that approximates $\omega^{\otimes \floor{rn}}$ arbitrarily closely. Since this holds for every rate $r$ satisfying
\begin{equation}\begin{aligned}
  r < \frac{\DD_{\Omega,\FF}^\infty(\rho)}{D^\infty_\FF(\omega)},
\end{aligned}\end{equation}
taking the supremum over all such rates concludes the proof.
\end{proof}


\section{Achievability for non-affine theories}\label{app:achiev_nonaff}

{%
\renewcommand{\thetheorem}{\getrefnumber{prop:achievability_Rs}}
\begin{boxed}{white}
\begin{proposition}
\label{prop:achievability_Rs_full}
Consider any resource theory satisfying Axioms I--IV such that $R_{s,\FF} (\rho) < \infty$ for all states. Then, for all states $\rho$ and $\omega$ such that $\regrobs(\omega) > 0$, the transformation rate under resource--non-generating operations $\OO$ satisfies
\begin{equation}\begin{aligned}
  r_{\rm{prob}}(\rho \to \omega) \geq \frac{ \DD^\infty_{\Omega,\FF}(\rho)}{\regrobs(\omega)},
\end{aligned}\end{equation}
where
\begin{equation}\begin{aligned}
  \regrobs (\omega) \coloneqq \lim_{\ve \to 0}\, \limsup_{n\to\infty}\, \min_{\frac12 \norm{\omega'-\omega^{\otimes n}}{1}\leq \ve} \frac{1}{n} \log \left( 1 + R_{s,\FF} (\omega') \right).
\end{aligned}\end{equation}
\end{proposition}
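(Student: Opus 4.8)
The plan is to prove the lower bound by explicitly constructing a sequence of probabilistic protocols, following the same skeleton as the proof of Proposition~\ref{prop:achievability_full} but replacing the role of the complete monotone $\DD_{\Omega,\FF}$ (available only in affine theories) with a one-shot sufficient condition that compares the projective robustness of the input against the standard robustness of the output. First I would dispose of the degenerate case $\DD^\infty_{\Omega,\FF}(\rho)=\infty$, in which $\DD_{\Omega,\FF}(\rho)=\infty$ and $\rho$ can be converted into an arbitrary target at unbounded rate~\cite{regula_2021-4}; so I assume $\DD^\infty_{\Omega,\FF}(\rho)<\infty$ throughout.

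The core ingredient is the one-shot achievability result of~\cite{regula_2021-4}: whenever $\Omega_{\FF}(\rho)\geq 1+R_{s,\FF}(\omega)$, equivalently $\DD_{\Omega,\FF}(\rho)\geq \DD_{s,\FF}(\omega)$, the input $\rho$ can be driven arbitrarily close to $\omega$ by a free probabilistic protocol (up to possible constant corrections that are immaterial after regularisation). The argument then reduces to verifying this one-shot comparison at the level of $n$ copies for a suitably smoothed target. Fixing $\delta>0$, I would set the rate
\begin{equation}
  r=\frac{\DD^\infty_{\Omega,\FF}(\rho)}{\regrobs(\omega)}\,\frac{1}{1+\delta},
\end{equation}
and, using the definition of $\regrobs(\omega)$ together with the weak-additivity manipulation of Eq.~\eqref{eq:weak_additivity_thingy}, select for all small $\ve>0$ and all sufficiently large $n$ a smoothed state $\omega_n$ with $\frac12\norm{\omega_n-\omega^{\otimes\floor{rn}}}{1}\leq\ve$ and $\frac1n\DD_{s,\FF_{\floor{rn}}}(\omega_n)\leq (1+\delta')\,r\,\regrobs(\omega)$ for some $\delta'<\delta$. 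On the input side, subadditivity of $\DD_{\Omega,\FF}$ gives $\frac1n\DD_{\Omega,\FF_n}(\rho^{\otimes n})\geq \DD^\infty_{\Omega,\FF}(\rho)$ for every $n$, so by the choice of $r$ the one-shot condition $\DD_{\Omega,\FF_n}(\rho^{\otimes n})\geq \DD_{s,\FF_{\floor{rn}}}(\omega_n)$ holds for all large $n$. Applying the cited result at each such $n$ yields protocols taking $\rho^{\otimes n}$ to within arbitrary precision of $\omega_n$, hence by the triangle inequality to within $\ve+o(1)$ of $\omega^{\otimes\floor{rn}}$; letting the precision, then $\ve$, then $\delta$ tend to zero and taking the supremum completes the proof.

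The main obstacle I anticipate is the correct handling of the one-shot comparison: in the non-affine setting $\DD_{\Omega,\FF}$ is no longer the unique complete monotone, so one cannot invoke Theorem~5 of~\cite{regula_2021-4} as in the affine case, and must instead rely on the sufficient condition involving $R_{s,\FF}$, being careful that any additive or multiplicative constants in its one-shot form vanish under the $\frac1n$ regularisation. A secondary subtlety is the $\limsup$ (rather than $\lim$) in the definition of $\regrobs$: this is precisely what guarantees that the smoothed target can be chosen with controlled robustness for \emph{all} sufficiently large $n$ rather than merely along a subsequence, so that the transformation succeeds uniformly in the tail and the resulting rate is a genuine, if conservative, lower bound.
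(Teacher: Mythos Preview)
Your overall skeleton is right, but the one-shot sufficient condition you invoke is not the one that actually holds in non-affine theories. You claim that $\DD_{\Omega,\FF}(\rho)\geq \DD_{s,\FF}(\omega)$ (equivalently $\Omega_\FF(\rho)\geq 1+R_{s,\FF}(\omega)$) suffices for probabilistic convertibility; but Theorem~7 of~\cite{regula_2021-4}, which is what the paper uses, gives instead
\[
  \DD_{\Omega,\FF}(\rho)\geq \DD_{\Omega,s,\FF}(\omega)\quad\Rightarrow\quad \rho\to\omega\ \text{probabilistically},
\]
where $\DD_{\Omega,s,\FF}(\omega)=\min_{\sigma\in\FF}\big\{\DD_s(\omega\|\sigma)+D_{\max}(\sigma\|\omega)\big\}$. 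The extra $D_{\max}(\sigma\|\omega)$ term is \emph{not} a constant: it is state-dependent, can blow up on low-rank states, and there is no a priori reason it should be $o(n)$ when applied to an arbitrary smoothing $\omega_n$ of $\omega^{\otimes\floor{rn}}$. Your hedge ``up to possible constant corrections that are immaterial after regularisation'' does not cover this.

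The paper closes exactly this gap with a separate lemma (an AEP-type statement for $\DD_{\Omega,s,\FF}$, analogous to Lemma~\ref{lem:omega_reg}): it shows that if one mixes the optimal-$\DD_s$ smoothing $\omega'_n$ with a small amount of the optimal free state $\sigma_n$, the resulting state $\xi_n$ stays in the $\ve$-ball around $\omega^{\otimes\floor{rn}}$ while acquiring $D_{\max}(\sigma_n\|\xi_n)=O(1)$, so that $\frac1n\DD_{\Omega,s,\FF}(\xi_n)\to \DD_{s,\FF}^{\sminfty}(\omega)=\regrobs(\omega)$. Only after this do you get a smoothed target whose $\DD_{\Omega,s,\FF}$ (not merely its $\DD_{s,\FF}$) is controlled, and the argument you outline goes through with $\xi_n$ in place of $\omega_n$. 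Without this mixing step, your chosen $\omega_n$ need not satisfy the one-shot comparison at all.
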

\end{boxed}
}%

The approach to proving this result will be analogous to the proof of the affine case (Proposition~\ref{prop:achievability_full}). A key step in that proof was the fact that
\begin{equation}\begin{aligned}
  \DD_{\Omega,\FF}(\rho) \geq \DD_{\Omega,\FF}(\omega) \Rightarrow \text{$\rho$ can be converted to $\omega$ probabilistically},
\end{aligned}\end{equation}
which we used in Eq.~\eqref{eq:sufficient_cond_transf}--\eqref{eq:sufficient_cond_transf2}. However, this condition is only valid in affine theories. The corresponding condition in non-affine theories is~\cite[Theorem~7]{regula_2021-4}
\begin{equation}\begin{aligned}
  \DD_{\Omega,\FF}(\rho) \geq \DD_{\Omega,s,\FF}(\omega) \Rightarrow \text{$\rho$ can be converted to $\omega$ probabilistically},
\end{aligned}\end{equation}
where $\DD_{\Omega,s,\FF}$ is a slightly different variant of the projective relative entropy (based on the `free projective robustness' $\Omega_\FF^\FF$~\cite{regula_2022}), defined as
\begin{equation}\begin{aligned}
\DD_{\Omega,s,\FF}(\omega) &\coloneqq \min_{\sigma \in \FF}  \DD_{\Omega,s} (\rho \| \sigma),\\
  \DD_{\Omega,s} (\rho \| \sigma) &\coloneqq \DD_{s} (\rho \| \sigma) + D_{\max}(\sigma \| \rho),\\
  \DD_s (\rho \| \sigma) &\coloneqq \inf \lset \lambda \bar \rho \leq_\FF \lambda \sigma \rset,
\end{aligned}\end{equation}
with $\leq_\FF$ denoting inequality with respect to $\cone(\FF)$, i.e.\ $A \leq_\FF B \iff B-A \in \cone(\FF)$. The main point to note is that
\begin{equation}\begin{aligned}
  \DD_{s,\FF} (\rho) \coloneqq \min_{\sigma\in \FF} \DD_s(\rho\|\sigma) = \log\left(1 + R_{s,\FF} (\rho) \right),
\end{aligned}\end{equation}
which justifies the standard robustness's appearance in Proposition~\ref{prop:achievability_Rs_full}.

The proof then proceeds in two steps, which we state as two lemmas for clarity.
\begin{slemma}\label{lem:Rs1}
Consider the smoothed regularisation of $ \DD_{\Omega,s,\FF}$, namely
\begin{equation}\begin{aligned}
  \DD_{\Omega,s,\FF}^\sminfty (\omega) \coloneqq& \lim_{\ve \to 0}\, \limsup_{n\to\infty}\, \min_{\frac12 \norm{\omega'-\omega^{\otimes n}}{1}\leq \ve} \frac{1}{n} \DD_{\Omega,s,\FF} (\omega').
\end{aligned}\end{equation}
It then holds that
\begin{equation}\begin{aligned}
  \DD_{\Omega,s,\FF}^\sminfty (\omega) &= \lim_{\ve \to 0}\, \limsup_{n\to\infty}\, \min_{\frac12 \norm{\omega'-\omega^{\otimes n}}{1}\leq \ve} \frac{1}{n} \log \left( 1 + R_{s,\FF} (\omega') \right)\\
  &= \DD_{s,\FF}^\sminfty (\omega).
\end{aligned}\end{equation}
\end{slemma}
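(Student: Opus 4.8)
The plan is to prove the two inequalities separately. First, the equality on the first line of the statement is purely definitional: since $\DD_{s,\FF}(\omega')=\log(1+R_{s,\FF}(\omega'))$, the right-hand side of that line is exactly $\DD_{s,\FF}^\sminfty(\omega)$ as defined in the divergence table, so the content reduces to showing $\DD_{\Omega,s,\FF}^\sminfty(\omega)=\DD_{s,\FF}^\sminfty(\omega)$. The whole argument runs parallel to the proof of Lemma~\ref{lem:omega_reg_full}, with $\DD_s$ now in the role played there by $D_{\max}$. The direction ``$\geq$'' is immediate: for every free $\sigma$ and every state $\omega'$ one has $D_{\max}(\sigma\|\omega')\geq0$, hence $\DD_{\Omega,s}(\omega'\|\sigma)=\DD_s(\omega'\|\sigma)+D_{\max}(\sigma\|\omega')\geq\DD_s(\omega'\|\sigma)$; minimising over $\sigma\in\FF$ gives the pointwise bound $\DD_{\Omega,s,\FF}(\omega')\geq\DD_{s,\FF}(\omega')$, which survives the minimisation over the $\ve$-ball, the $\limsup_n$, and the limit $\ve\to0$.

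For the opposite inequality I would mimic the smoothing construction of Lemma~\ref{lem:omega_reg_full}. Fixing $\ve\in(0,1)$, for each $n$ let $\omega'_n$ minimise $\DD_{s,\FF}$ over $\frac12\norm{\omega'-\omega^{\otimes n}}{1}\leq\ve/2$, let $\sigma_n\in\FF$ achieve $\DD_{s,\FF}(\omega'_n)=:\lambda_n$ so that $2^{\lambda_n}\sigma_n-\omega'_n\in\cone(\FF)$, put $\eta:=\log(2\ve^{-1}-1)$, and define $\omega_n:=(2^{-\eta}\sigma_n+\omega'_n)/(1+2^{-\eta})$. Exactly as in Lemma~\ref{lem:omega_reg_full}, one gets $\frac12\norm{\omega_n-\omega'_n}{1}\leq\ve/2$, hence $\frac12\norm{\omega_n-\omega^{\otimes n}}{1}\leq\ve$, and the positive-semidefinite estimate $\sigma_n\leq(2^\eta+1)\,\omega_n$ yields $D_{\max}(\sigma_n\|\omega_n)\leq\log(2^\eta+1)$.

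The one genuinely new step---and the main obstacle---is bounding the first term $\DD_s(\omega_n\|\sigma_n)$, which uses the cone order $\leq_\FF$ rather than the positive-semidefinite order. I would verify it by the direct identity
\begin{equation*}
  \frac{2^{\lambda_n}+2^{-\eta}}{1+2^{-\eta}}\,\sigma_n-\omega_n=\frac{2^{\lambda_n}\sigma_n-\omega'_n}{1+2^{-\eta}}\in\cone(\FF),
\end{equation*}
where membership follows because $2^{\lambda_n}\sigma_n-\omega'_n\in\cone(\FF)$ and the cone is invariant under the positive rescaling by $(1+2^{-\eta})^{-1}$; the point is that the mixture is taken with the \emph{free} state $\sigma_n$, so no element outside $\cone(\FF)$ is introduced. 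This gives $\omega_n\leq_\FF\frac{2^{\lambda_n}+2^{-\eta}}{1+2^{-\eta}}\sigma_n$ and hence $\DD_s(\omega_n\|\sigma_n)\leq\log\frac{2^{\lambda_n}+2^{-\eta}}{1+2^{-\eta}}\leq\lambda_n$, the last step using $\lambda_n\geq0$. Since $\sigma_n\in\FF$, combining the two estimates gives $\frac1n\DD_{\Omega,s,\FF}(\omega_n)\leq\frac1n\DD_{\Omega,s}(\omega_n\|\sigma_n)\leq\frac1n\lambda_n+\frac1n\log(2^\eta+1)$.

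Finally I would take limits. As $\omega_n$ lies in the $\ve$-ball, the left side is at least the minimum of $\frac1n\DD_{\Omega,s,\FF}$ over that ball, while $\lambda_n$ is the minimum of $\DD_{s,\FF}$ over the $\ve/2$-ball; taking $\limsup_{n\to\infty}$ removes the vanishing term $\frac1n\log(2^\eta+1)$ (with $\eta$ fixed once $\ve$ is chosen), and letting $\ve\to0$ identifies the two surviving quantities as $\DD_{\Omega,s,\FF}^\sminfty(\omega)$ and $\DD_{s,\FF}^\sminfty(\omega)$. The only delicate bookkeeping is the interplay of the two nested $\ve$-balls and the systematic use of $\limsup$ rather than $\lim$, which is forced here because $\DD_{s,\FF}$ is not known to be subadditive, so Fekete's lemma is unavailable and the ordinary regularisation need not exist.
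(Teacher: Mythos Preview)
Your proposal is correct and follows essentially the same route as the paper's proof: both establish ``$\geq$'' trivially and obtain ``$\leq$'' by mixing an approximate optimiser with its optimal free state $\sigma_n$ to force $D_{\max}(\sigma_n\|\cdot)$ to be $O(1)$, then pass to the limit. The only cosmetic difference is that you take $\omega'_n$ and $\sigma_n$ to be exact minimisers at each $n$ (and compare the $\ve$- and $\ve/2$-ball quantities directly in the $\ve\to0$ limit), whereas the paper introduces an auxiliary $\delta>0$ and a sequence approximating $\DD_{s,\FF}^\sminfty(\omega)$ to within~$\delta$; your bookkeeping is arguably cleaner, and your explicit identity verifying $\omega_n\leq_\FF\frac{2^{\lambda_n}+2^{-\eta}}{1+2^{-\eta}}\sigma_n$ is exactly the paper's step written as an equality rather than an inequality.
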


\begin{proof}
The proof of this statement is completely analogous to that of our Lemma~\ref{lem:omega_reg_full}, with $D_{\max}$ replaced with $\DD_{s}$; for clarity, let us go through the argument in detail.

We start by observing that since $\DD_{\Omega,s}(\rho\|\sigma)\geq \DD_s(\rho\|\sigma)$ by construction, we see that $\DD_{\Omega,s,\FF} (\omega')\geq \DD_{s,\FF}(\omega') = \log\left(1 + R_{s,\FF} (\omega') \right)$ and in turn that $\DD_{\Omega,s,\FF}^\sminfty (\omega)\geq \DD_{s,\FF}^\sminfty (\omega)$. Therefore, it suffices to show the opposite inequality.

Fix $\delta > 0$. For all sufficiently small $\ve > 0$, one can find a sequence of states $(\omega'_n)_n$ with the property that (a)~$\frac12 \norm{\omega'_n-\omega^{\otimes n}}{1}\leq \frac\ve2$, and (b)~there exists a sequence $(\sigma_n)_n$ of states $\sigma_n\in \FF$ such that
\begin{equation}\begin{aligned}
  \left| \lim_{n\to\infty}\, \frac1n \DD_{s}(\omega'_n\|\sigma_n)  - \DD_{s,\FF}^\sminfty (\omega) \right| \leq \delta.
\end{aligned}\end{equation}
For $\eta \coloneqq \log(2 \ve^{-1}-1)$, construct the states
\begin{equation}\begin{aligned}
  \xi_n \coloneqq \frac{2^{-\eta} \sigma_n + \omega'_n}{1+2^{-\eta}} .
\end{aligned}\end{equation}
On the one hand, it holds that
\begin{equation}\begin{aligned}
    \frac12 \norm{\xi_n - \omega'_n}{1} &= \frac12 \norm{ \frac{2^{-\eta} \sigma_n + \omega'_n}{1+2^{-\eta}} - \frac{(1+2^{-\eta}) \omega'_n}{1+2^{-\eta}}}{1}\\
    &= \frac12 \frac{2^{-\eta}}{1+2^{-\eta}} \norm{\sigma_n - \omega'_n}{1}\\
    &\leq \frac{2^{-\eta}}{1+2^{-\eta}}\\
    &= \frac\ve2,
\end{aligned}\end{equation}
entailing that
\begin{equation}\begin{aligned}
    \frac12 \norm{\xi_n - \omega^{\otimes n}}{1} \leq \frac12 \norm{\xi_n - \omega'_n}{1} + \frac12 \norm{\omega'_n - \omega^{\otimes n}}{1} \leq \ve.
\end{aligned}\end{equation}
On the other,
\begin{equation}\begin{aligned}
    \xi_n &\leq_{\FF} \frac{2^{-\eta} \sigma_n + 2^{\DD_s(\omega'_n\|\sigma_n)} \sigma_n}{1+2^{-\eta}} \\
    &= \frac{2^{-\eta} + 2^{\DD_s(\omega'_n\|\sigma_n)}}{1+2^{-\eta}}\, \sigma_n ,
\end{aligned}\end{equation}
and also
\begin{equation}\begin{aligned}
    \sigma_n &\leq \sigma_n + 2^{\eta} \omega'_n\\
    &= (1+2^{-\eta}) \frac{\sigma_n + 2^{\eta} \omega'_n}{1+2^{-\eta}}\\
    &= (2^\eta+1) \, \xi_n.
\end{aligned}\end{equation}
Putting all together,
\begin{equation}\begin{aligned}
    \frac1n\, \DD_{\Omega,s,\FF}(\xi_n) &\leq \frac1n \DD_{\Omega,s}(\xi_n\|\sigma_n) \\
    &= \frac1n \left( \DD_s(\xi_n\|\sigma_n) + D_{\max}(\sigma_n\|\xi_n) \right) \\
    &\leq \frac1n \left( \log \frac{2^{-\eta} + 2^{\DD_s(\omega'_n\|\sigma_n)}}{1+2^{-\eta}} + \log (2^\eta+1) \right) \\
    &= \frac1n \log \left( 1 + 2^\eta\, 2^{\DD_s(\omega'_n\|\sigma_n)} \right) \\
    &\leq \frac1n\, \DD_s(\omega'_n\|\sigma_n) + \frac1n \log(1+2^\eta).
\end{aligned}\end{equation}
Taking the limit then gives
\begin{equation}\begin{aligned}
    \limsup_{n\to\infty}\, \min_{\frac12 \norm{\omega'-\omega^{\otimes n}}{1}\leq \ve} \frac{1}{n} \DD_{\Omega,s,\FF} (\omega') &\leq \limsup_{n\to\infty}\, \frac{1}{n} \DD_{\Omega,s,\FF} (\xi_n) \\
    &\leq \limsup_{n\to\infty} \left(\frac1n\, \DD_s(\omega'_n\|\sigma_n) + \frac1n \log(1+2^\eta)\right) \\
    &= \limsup_{n\to\infty} \frac1n\, \DD_s(\omega'_n\|\sigma_n) \\
    &\leq \DD_{s,\FF}^\sminfty (\omega) + \delta.
\end{aligned}\end{equation}
Since the above holds for all sufficiently small $\ve>0$, we see that
\begin{equation}\begin{aligned}
    \DD_{\Omega,s,\FF}^\sminfty (\omega) &= \lim_{\ve\to 0} \limsup_{n\to\infty}\, \min_{\frac12 \norm{\omega'-\omega^{\otimes n}}{1}\leq \ve} \frac{1}{n} \DD_{\Omega,s,\FF} (\omega') \\
    &\leq \DD_{s,\FF}^\sminfty (\omega) + \delta.
\end{aligned}\end{equation}
As $\delta>0$ is arbitrary, this shows that
\begin{equation}\begin{aligned}
    \DD_{\Omega,s,\FF}^\sminfty (\omega) \leq \DD_{s,\FF}^\sminfty (\omega),
\end{aligned}\end{equation}
completing the proof.
\end{proof}

\begin{slemma}\label{lem:Rs2}
It holds that
\begin{equation}\begin{aligned}
  r_{\rm{prob}}(\rho \to \omega) \geq \frac{ \DD^\infty_{\Omega,\FF}(\rho)}{\DD_{\Omega,s,\FF}^\sminfty (\omega)}.
\end{aligned}\end{equation}
\end{slemma}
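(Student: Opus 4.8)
The plan is to mirror the proof of the affine achievability result, Proposition~\ref{prop:achievability_full}, replacing the monotone $\DD_{\Omega,\FF}$ on the target side by its `free' variant $\DD_{\Omega,s,\FF}$ and invoking the non-affine one-shot conversion criterion of \cite[Theorem~7]{regula_2021-4} in place of the affine one. First I would dispose of the degenerate case: if $\DD^\infty_{\Omega,\FF}(\rho) = \infty$, then already $\DD_{\Omega,\FF}(\rho) = \infty$, so $\rho$ can be converted probabilistically into any state and the rate is unbounded; hence I assume $\DD^\infty_{\Omega,\FF}(\rho) < \infty$ and (so that the bound is nontrivial) $\DD_{\Omega,s,\FF}^\sminfty(\omega) > 0$ throughout.

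Fix $\delta > 0$ and set the candidate rate
\begin{equation}
  r = \frac{\DD_{\Omega,\FF}^\infty(\rho)}{\DD_{\Omega,s,\FF}^\sminfty (\omega)} \, \frac{1}{1+\delta}.
\end{equation}
The core step is to produce a sequence of states $(\omega_n)_n$ that approximates $\omega^{\otimes \floor{rn}}$ to within any prescribed smoothing $\ve$ while simultaneously satisfying $\frac1n \DD_{\Omega,s,\FF_{\floor{rn}}}(\omega_n) \leq \DD_{\Omega,\FF}^\infty(\rho)$ for all large $n$. To obtain this, I would unfold the definition of the smoothed regularisation $\DD_{\Omega,s,\FF}^\sminfty$: taking $\omega_n$ to be the minimiser inside the $\ve$-ball around $\omega^{\otimes \floor{rn}}$, the $\limsup$ in that definition guarantees that for every sufficiently small $\ve > 0$ and all sufficiently large $n$,
\begin{equation}
  \frac{1}{\floor{rn}} \DD_{\Omega,s,\FF_{\floor{rn}}}(\omega_n) \leq (1+\delta)\, \DD_{\Omega,s,\FF}^\sminfty(\omega).
\end{equation}
Multiplying by $\floor{rn}/n \leq r$ and inserting the definition of $r$ then yields $\frac1n \DD_{\Omega,s,\FF_{\floor{rn}}}(\omega_n) \leq (1+\delta)\, r\, \DD_{\Omega,s,\FF}^\sminfty(\omega) = \DD_{\Omega,\FF}^\infty(\rho)$, which is exactly the weak-additivity bookkeeping already carried out in Eq.~\eqref{eq:weak_additivity_thingy}.

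With this sequence in hand, subadditivity of $\DD_{\Omega,\FF}$ gives $\DD_{\Omega,\FF}^\infty(\rho) \leq \frac1n \DD_{\Omega,\FF_n}(\rho^{\otimes n})$, so that
\begin{equation}
  \DD_{\Omega,s,\FF_{\floor{rn}}}(\omega_n) \leq \DD_{\Omega,\FF_n}(\rho^{\otimes n})
\end{equation}
for all large $n$. This is precisely the hypothesis of the non-affine one-shot criterion \cite[Theorem~7]{regula_2021-4}, which then certifies that for every $\zeta > 0$ there is a free probabilistic protocol $\Lambda \in \OO$ taking $\rho^{\otimes n}$ to within trace distance $\zeta$ of $\omega_n$. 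Combining this $\zeta$-approximation with the $\ve$-closeness of $\omega_n$ to $\omega^{\otimes \floor{rn}}$ through the triangle inequality shows that $r$ is achievable; letting $\ve,\zeta \to 0$ and then $\delta \to 0$, and taking the supremum over achievable rates, establishes the claimed bound.

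The main obstacle I anticipate is the careful handling of the $\limsup$-based smoothed regularisation together with the mismatch between the $\floor{rn}$ and $n$ indices: unlike the $\lim$-based quantities used in the affine case, here I can only guarantee the needed upper bound for all sufficiently large $n$ (rather than merely along a subsequence), so I must verify that the nested limits $\ve \to 0$, $n \to \infty$, $\zeta \to 0$, and $\delta \to 0$ can be arranged so that the total error $\ve + \zeta$ vanishes while the rate estimate is preserved. Once this bookkeeping is organised as in Eq.~\eqref{eq:weak_additivity_thingy}, the remaining steps are routine.
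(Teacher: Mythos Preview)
Your proposal is correct and follows essentially the same route as the paper: fix the discounted rate $r$, choose $\omega_n$ as an $\ve$-ball minimiser of $\DD_{\Omega,s,\FF}$ around $\omega^{\otimes\floor{rn}}$, use the definition of the smoothed regularisation together with $\floor{rn}/n\leq r$ to bound $\frac1n\DD_{\Omega,s,\FF_{\floor{rn}}}(\omega_n)\leq\DD_{\Omega,\FF}^\infty(\rho)\leq\frac1n\DD_{\Omega,\FF_n}(\rho^{\otimes n})$, and invoke \cite[Theorem~7]{regula_2021-4}. Your handling of the $\limsup$ (that $\limsup_n a_n\leq L$ gives $a_n\leq(1+\delta)L$ for all large $n$, and that passing to the subsequence $\floor{rn}$ only decreases the $\limsup$) is in fact slightly more explicit than the paper's own write-up, which states the argument tersely as ``analogous to Proposition~\ref{prop:achievability_full}'' and writes $\lim$ where $\limsup$ is meant.
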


\begin{proof}
The proof of this statement is analogous to the proof of Proposition~\ref{prop:achievability_full} with $\DD^\sminfty_{\Omega,\FF}(\omega)$ replaced with $\DD^\sminfty_{\Omega,s,\FF}(\omega)$. Explicitly, fix $\delta > 0$ and consider any rate
\begin{equation}\begin{aligned}
  r = \frac{\DD^\infty_{\Omega,\FF}(\rho)}{\DD^\sminfty_{\Omega,s,\FF}(\omega)} \frac{1}{1+\delta}.
\end{aligned}\end{equation}
Let $(\omega_n)_n$ be a sequence of states such that $\frac12 \norm{\omega_n - \omega^{\otimes \floor{rn}}}{1} \leq \ve$ and such that
\begin{equation}\begin{aligned}
  \frac1n \DD_{\Omega,s,\FF_{\floor{rn}}}(\omega_n) &\leq (1+\delta) \lim_{\ve \to 0}\, \lim_{n\to\infty}\, \min_{\frac12\norm{\omega' - \omega^{\otimes \floor{rn}}}{1} \leq \ve}  \frac{1}{n} \DD_{\Omega,s,\FF_{\floor{rn}}} \left( \omega' \right)\\
&\leq (1+\delta) \, r \,\DD^\sminfty_{\Omega,s,\FF}(\omega)\\
&= \DD_{\Omega,\FF}^\infty(\rho),
\end{aligned}\end{equation}
where the second inequality follows from the weak sub-additivity of $\DD^\sminfty_{\Omega,s,\FF}$, specifically the fact that
\begin{equation}\begin{aligned}\label{eq:limsup_thingy}
  \lim_{\ve \to 0}\, \limsup_{n\to\infty}\, \min_{\frac12 \norm{\omega'-\omega^{\otimes \floor{rn}}}{1}\leq \ve} \frac{1}{n} \DD_{\Omega,s,\FF} (\omega') &= \lim_{\ve \to 0}\, \limsup_{n\to\infty}\, \frac{\floor{rn}}{n} \min_{\frac12 \norm{\omega'-\omega^{\otimes \floor{rn}}}{1}\leq \ve}  \frac{1}{\floor{rn}} \DD_{\Omega,s,\FF} (\omega')\\
  &\leq r \lim_{\ve \to 0}\, \limsup_{n\to\infty}\, \min_{\frac12 \norm{\omega'-\omega^{\otimes n}}{1}\leq \ve} \frac{1}{n} \DD_{\Omega,s,\FF} (\omega')
\end{aligned}\end{equation}
which follows from the definition of $\limsup$\footnote{The quantity $\DD_{\Omega,s,\FF}^\sminfty$ can actually be shown to be weakly additive and not merely sub-additive --- this can be proved as in the derivation of~\cite[Eq.~(41)]{brandao_2010-1}, but we do not need this fact here.}. This implies that
\begin{equation}\begin{aligned}
  \DD_{s,\FF_{\floor{rn}}}(\omega_n) \leq \DD_{\Omega,s,\FF_n}(\rho^{\otimes n}),
\end{aligned}\end{equation}
which means that the transformation from $\rho^{\otimes n}$ to $\omega_n$ can be realised probabilistically, up to an arbitrarily small error~\cite{regula_2021-4}.
\end{proof}

Lemmas~\ref{lem:Rs1} and~\ref{lem:Rs2} combined give that
\begin{equation}\begin{aligned}
r_{\rm{prob}}(\rho \to \omega) \geq \frac{ \DD^\infty_{\Omega,\FF}(\rho)}{\DD_{\Omega,s,\FF}^\sminfty (\omega)} = \frac{ \DD^\infty_{\Omega,\FF}(\rho)}{\DD_{s,\FF}^\sminfty (\omega)},
\end{aligned}\end{equation}
which is precisely the statement of Proposition~\ref{prop:achievability_Rs_full}.


\section{Achievability for distillation}\label{app:ach_dist}

We provide an alternative proof of an achievable bound for distillation that, although slightly weaker than Proposition~\ref{prop:achievability_Rs_full}, gives more insight into the transformation errors of the protocol. Here we employ the (non-smoothed) regulaized standard robustness
\begin{equation}\begin{aligned}
  \regrob(\rho) = \lim_{n\to\infty} \frac1n \log \left( 1 + R_{s,\F}(\rho^{\otimes n}) \right).
\end{aligned}\end{equation}

\begin{boxed}{white}
\begin{sproposition}\label{prop:dist_achiev}
Consider any resource theory in which $R_{s,\FF}(\rho) < \infty$ for every state $\rho$.
Then, for every pure target state $\psi$, it holds that
\begin{equation}\begin{aligned}
  r_{\mathrm{prob}}(\rho \to \psi) \geq \frac{ \DD_{\Omega,\FF}^\infty(\rho)}{\regrob(\psi)},
\end{aligned}\end{equation}
and the rate is achievable with error $\ve_n = 2^{-\Omega(n)}$.
\end{sproposition}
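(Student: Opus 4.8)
The plan is to prove achievability through an explicit one-shot construction fed into the probabilistic conversion criterion of~\cite[Thm.~7]{regula_2021-4} (the implication $\DD_{\Omega,\FF}(\rho)\geq \DD_{\Omega,s,\FF}(\omega) \Rightarrow \rho$ converts to $\omega$ probabilistically, recalled just above Lemma~\ref{lem:Rs1}). The difficulty specific to a pure target is that $\DD_{\Omega,s,\FF}(\psi^{\otimes m})=\infty$: since $\psi$ is pure and not free, $D_{\max}(\sigma\|\psi^{\otimes m})=\infty$ for every $\sigma\in\FF_m$, so the criterion cannot be applied to the exact target. The remedy is to smooth $\psi^{\otimes m}$ into a nearby state $\omega_n$ of enlarged support, and the whole argument rests on balancing the smoothing strength against the rate.

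Concretely, I would fix $\delta>0$, set $r=\DD_{\Omega,\FF}^\infty(\rho)/[(1+\delta)\regrob(\psi)]$ and $m=\floor{rn}$, treating the trivial cases $\DD_{\Omega,\FF}^\infty(\rho)\in\{0,\infty\}$ separately (the former making the bound vacuous, the latter following from unbounded one-shot convertibility) and assuming $\regrob(\psi)>0$ as holds for any resourceful target. Writing $\lambda_m\coloneqq R_{s,\FF}(\psi^{\otimes m})$ with its optimal standard-robustness decomposition $\psi^{\otimes m}+\lambda_m\sigma=(1+\lambda_m)\sigma'$, $\sigma,\sigma'\in\FF_m$, I would define the smoothed target $\omega_n\coloneqq(\psi^{\otimes m}+t_n\sigma)/(1+t_n)$ with $t_n\coloneqq 2^{-cn}$ for a small constant $c>0$ to be fixed. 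The trace-distance error is then immediate, $\tfrac12\norm{\omega_n-\psi^{\otimes m}}{1}\leq t_n=2^{-cn}$, which already supplies the claimed exponential decay once the conversion is established.

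The core estimate is to bound $\DD_{\Omega,s,\FF}(\omega_n)\leq \DD_{\Omega,s}(\omega_n\|\sigma')=\DD_s(\omega_n\|\sigma')+D_{\max}(\sigma'\|\omega_n)$. Using the decomposition one checks $\omega_n\leq_\FF \mu\sigma'$ with $\mu=(1+\lambda_m)/(1+t_n)$, giving $\DD_s(\omega_n\|\sigma')\leq\log\mu$; and, since $\supp\sigma'\subseteq\supp\omega_n$, that $\sigma'\leq\nu\omega_n$ with $\nu=\lambda_m(1+t_n)/[t_n(1+\lambda_m)]$, giving $D_{\max}(\sigma'\|\omega_n)\leq\log\nu$. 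Adding the two, the awkward factors cancel because $\mu\nu=\lambda_m/t_n$, leaving the clean bound $\DD_{\Omega,s,\FF}(\omega_n)\leq\log\lambda_m-\log t_n=\log\lambda_m+cn$. Invoking the definition of $\regrob$ via $\tfrac1m\log(1+\lambda_m)\to\regrob(\psi)$ together with $\DD_{\Omega,\FF}(\rho^{\otimes n})\geq n\DD_{\Omega,\FF}^\infty(\rho)$ (Fekete), the required inequality $\DD_{\Omega,s,\FF}(\omega_n)\leq\DD_{\Omega,\FF}(\rho^{\otimes n})$ holds for all large $n$ provided $c$ is chosen small enough that the linear penalty $cn$ fits inside the slack $\delta\,r\,\regrob(\psi)\,n$ afforded by the factor $(1+\delta)$. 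This is the crux: the exponentially small error forces a linear-in-$n$ cost $\log(1/t_n)=cn$, and the rate slack is precisely what absorbs it.

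Finally, I would apply~\cite[Thm.~7]{regula_2021-4}: the inequality $\DD_{\Omega,\FF}(\rho^{\otimes n})\geq\DD_{\Omega,s,\FF}(\omega_n)$ yields a free probabilistic map sending $\rho^{\otimes n}$ to $\omega_n$ up to an arbitrarily small error, which I pick to be $2^{-cn}$ as well, so that the total error against $\psi^{\otimes m}$ is $2^{-\Omega(n)}$ by the triangle inequality. Letting $\delta\to0$ and taking the supremum over achievable rates gives $r_{\mathrm{prob}}(\rho\to\psi)\geq\DD_{\Omega,\FF}^\infty(\rho)/\regrob(\psi)$. The main obstacle is exactly the $t_n\to0$ versus rate tension resolved above; the supporting operator inequalities defining $\DD_s$ and $D_{\max}$ are routine once the robustness decomposition of $\psi^{\otimes m}$ is in hand.
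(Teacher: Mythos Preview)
Your proposal is correct and arrives at exactly the same underlying inequality as the paper, but via a different cited lemma. The paper invokes~\cite[Thm.~12]{regula_2021-4}, a distillation-specific criterion stating that a probabilistic protocol $\rho^{\otimes n}\to\psi^{\otimes m}$ with fidelity error $\ve_n$ exists whenever $R_{s,\FF}(\psi^{\otimes m})\leq(\ve_n^{-1}-1)^{-1}\,\Omega_{\FF}(\rho^{\otimes n})$; it then sets $\ve_n^{-1}=\Omega_{\FF}(\rho^{\otimes n})/R_{s,\FF}(\psi^{\otimes m})+1$ and verifies the exponential decay directly. You instead work from the general conversion criterion~\cite[Thm.~7]{regula_2021-4} (already used for Propositions~\ref{prop:achievability_full} and~\ref{prop:achievability_Rs_full}) and supply the smoothing by hand: the state $\omega_n=(\psi^{\otimes m}+t_n\sigma)/(1+t_n)$ built from the optimal standard-robustness decomposition gives $\DD_{\Omega,s,\FF}(\omega_n)\leq\log(\lambda_m/t_n)$, which with $t_n=2^{-cn}$ is precisely the inequality $\log R_{s,\FF}(\psi^{\otimes m})+\log t_n^{-1}\leq\DD_{\Omega,\FF}(\rho^{\otimes n})$ that sits inside the cited Thm.~12. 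In effect you have re-derived the distillation achievability criterion from the general one; this is more self-contained and makes the error/rate trade-off transparent, whereas the paper's route is shorter because that work is packaged into the black box. Both handle the edge cases ($\DD_{\Omega,\FF}^\infty(\rho)\in\{0,\infty\}$, implicit positivity of $\regrob(\psi)$) in the same way.
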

\end{boxed}

\begin{proof}
By~\cite[Thm.~12]{regula_2021-4}, for every $\ve_n$ such that
\begin{equation}\begin{aligned}\label{eq:dist_achiev_eq}
   R_{s,\FF_{\floor{rn}}}(\psi^{\otimes \floor{rn}}) \leq \left( \ve_n^{-1} - 1 \right)^{-1} \Omega_{\FF_n}(\rho^{\otimes n})
\end{aligned}\end{equation}
there exists a one-shot probabilistic protocol $\rho^{\otimes n} \to \psi^{\otimes \floor{rn}}$ with error (in fidelity) at most $\ve_n$. Any such protocol may 
consist of a sequence of free operations, but we can simply assume that for every $\delta_n > 0$, there exists a map $\Lambda_n \in \O$ such that $\frac{\Lambda_n(\rho^{\otimes})}{\Tr \Lambda_n(\rho^{\otimes n})} = \tau_n$ for some $\tau_n$ with $\braket{\psi^{\otimes \floor{rn}} | \tau_n | \psi^{\otimes \floor{rn}}} \geq 1-\ve_n - \delta_n$.

Let us fix some $\eta > 0$ and define $r = \frac{\DD_{\Omega,\FF}^\infty(\rho)}{\regrob(\psi)} - \eta$. 
Then, choosing
\begin{equation}\begin{aligned}
  \ve_n^{-1} = \frac{\Omega_{\FF_n}(\rho^{\otimes n})}{R_{s,\FF_{\floor{rn}}}(\psi^{\otimes \floor{rn}})} + 1
\end{aligned}\end{equation}
so that \eqref{eq:dist_achiev_eq} is satisfied, we ensure that there exists a protocol that takes $\rho$ to $\psi$ at a rate equal to $r$, with the transformation error for each $n$ given by $\ve_n + \delta_n$. Since the $\delta_n$ are arbitrary, it thus remains to show that $\ve_n \to 0$. This is ensured by the fact that
\begin{equation}\begin{aligned}
  \liminf_{n\to\infty} \ve_n^{-1} &\texteq{(i)} \liminf_{n\to\infty} \frac{\Omega_{\FF_n}(\rho^{\otimes n})}{R_{s,\FF_{\floor{rn}}}(\psi^{\otimes \floor{rn}}) + 1} + 1\\
  &\textgeq{(ii)} \liminf_{n\to\infty} \frac{\Omega_{\FF_n}(\rho^{\otimes n})}{ 2^{n r \regrob(\psi) + n \mu} } + 1\\
  &= \liminf_{n\to\infty} \frac{\Omega_{\FF_n}(\rho^{\otimes n})}{ 2^{n \DD_\Omega^\infty(\rho) - n \regrob(\psi) \eta + n \mu} } + 1\\
  &\textgeq{(iii)} \liminf_{n\to\infty} \frac{\Omega_{\FF_n}(\rho^{\otimes n})}{ \Omega_{\FF_n}(\rho^{\otimes n}) \, 2^{- n \regrob(\psi) \eta + n \mu} } + 1\\
  &= \liminf_{n\to\infty} \, 2^{n \left(\regrob(\psi) \eta - \mu\right)} + 1\\
  &\texteq{(iv)} \infty,
\end{aligned}\end{equation}
where in (i) we were free to add the constant term $+1$ in the denominator as it is irrelevant asymptotically, in (ii) we picked some $\mu > 0$ and took $n$ large enough so that
\begin{equation}\begin{aligned}
  \frac1n \log \left( 1 + R_{s,\FF_{\floor{rn}}}(\psi^{\otimes \floor{rn}}) \right) \leq r \, \regrob(\psi) + \mu,
\end{aligned}\end{equation}
in (iii) we used the fact that $\DD^\infty_{\Omega,\FF}(\rho) \leq \frac{1}{n} \log \RW(\rho^{\otimes n})$ for every $n$ by the subadditivity of $\DD_{\Omega,\FF}$, and in (iv) we observed that by picking $\mu < \regrob(\psi) \eta$ we can ensure that the term is unbounded. We thus have that the rate $\frac{\DD^\infty_{\Omega,\FF}(\rho)}{\regrob(\psi)} - \eta$ is achievable for all $\eta > 0$, and taking the supremum over such rates yields the stated result.
\end{proof}

It then follows immediately from the upper bound of Proposition~\ref{prop:dist_bound_rate_full} coupled with the achievability bounds of Propositions~\ref{prop:achievability_full} and \ref{prop:dist_achiev} that, when the given theory is affine and $D_{\min,\FF}^\infty(\psi) = D_{\FF}^\infty(\psi)$, 
or when $D_{\min,\FF}^\infty(\psi) = \regrob(\psi)$, then
\begin{equation}\begin{aligned}
  r_{\mathrm{prob}}(\rho \to \psi) = r^\dagger_{\mathrm{prob}}(\rho \to \psi) = \frac{ \DD_{\Omega,\FF}^\infty(\rho)}{D_{\min,\FF}^\infty(\psi)}.
\end{aligned}\end{equation}
An easier to verify condition for the upper and lower bounds to coincide is as follows.

\begin{boxed}{white}
\begin{scorollary}
Consider any state $\psi$ such that
\begin{enumerate}[(1)]
\item $\psi$ maximizes the max-relative entropy measure: for all $n$, $D_{\max,\FF_n}(\psi^{\otimes n})$ is maximal among all states of the same dimension. 
\item $D_{\min,\FF_n}(\psi^{\otimes n}) = n\, D_{\min,\FF}(\psi) \; \forall n$,
\end{enumerate}
and either
\begin{enumerate}
\item[(3a)] the given resource theory is affine, or
\item[(3b)] the 
logarithmic standard robustness equals the max-relative entropy for $\psi^{\otimes n}$: specifically,
\begin{equation}\begin{aligned}
   D_{\max,\FF_n}(\psi^{\otimes n}) = \log \left( 1 + R_{s,\FF_n}(\psi^{\otimes n})\right) \; \forall n
 \end{aligned}\end{equation} (e.g.\ in the resource theory of entanglement).
\end{enumerate}
Then,
\begin{equation}\begin{aligned}
  r_{\mathrm{prob}}(\rho \to \psi) = r^\dagger_{\mathrm{prob}}(\rho \to \psi) = \frac{ \DD_{\Omega,\FF}^\infty(\rho)}{D_{\min,\FF}(\psi)}.
\end{aligned}\end{equation}
\end{scorollary}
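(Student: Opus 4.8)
The plan is to sandwich the probabilistic rate between the strong-converse upper bound of Proposition~\ref{prop:dist_bound_rate_full} and the appropriate achievability lower bound, and then to show that under hypotheses (1)--(3) every entropic quantity appearing in a denominator collapses to the single number $D_{\min,\FF}(\psi)$. Explicitly, Proposition~\ref{prop:dist_bound_rate_full} gives $r^\dagger_{\mathrm{prob}}(\rho\to\psi)\leq \DD^\infty_{\Omega,\FF}(\rho)/D^\infty_{\min,\FF}(\psi)$; in the affine case (3a) Proposition~\ref{prop:achievability_full} gives $r_{\mathrm{prob}}(\rho\to\psi)=\DD^\infty_{\Omega,\FF}(\rho)/D^\infty_\FF(\psi)$; and in case (3b) (understood in the full-dimensional setting, where $R_{s,\FF}<\infty$ for all states so that the proposition applies) Proposition~\ref{prop:dist_achiev} gives $r_{\mathrm{prob}}(\rho\to\psi)\geq \DD^\infty_{\Omega,\FF}(\rho)/\regrob(\psi)$. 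Since $r_{\mathrm{prob}}\leq r^\dagger_{\mathrm{prob}}$ by definition, it suffices to show that the relevant denominator equals $D_{\min,\FF}(\psi)$ in each case.

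The technical heart is to show that hypothesis (1) forces the chain of divergences to collapse at every level $n$:
\begin{equation}\begin{aligned}
  D_{\min,\FF_n}(\psi^{\otimes n}) = D_{\FF_n}(\psi^{\otimes n}) = D_{\max,\FF_n}(\psi^{\otimes n}).
\end{aligned}\end{equation}
As the ordering $D_{\min,\FF}\leq D_\FF\leq D_{\max,\FF}$ holds in general (apply the pointwise ordering $D_{\min}(\cdot\|\sigma)\leq D(\cdot\|\sigma)\leq D_{\max}(\cdot\|\sigma)$ and minimise over $\sigma\in\FF$), it is enough to prove $D_{\max,\FF_n}(\psi^{\otimes n})\leq D_{\min,\FF_n}(\psi^{\otimes n})$. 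I would do this via the witness (SDP dual) representation $2^{D_{\max,\FF}(\phi)}=\max\lset \Tr(W\phi)\bar W\geq 0,\ \Tr(W\sigma)\leq 1\ \forall\sigma\in\FF\rset$, applied to $\phi=\psi^{\otimes n}$. Let $W^\star$ be an optimal witness, so $\Tr(W^\star\phi)=2^{D_{\max,\FF_n}(\phi)}=:G$. Because $\phi$ maximises $D_{\max,\FF_n}$ over all states of its dimension (this is exactly hypothesis (1)) and $W^\star$ is feasible for every state, one has $\Tr(W^\star\rho)\leq 2^{D_{\max,\FF_n}(\rho)}\leq G$ for all $\rho$, so $\lambda_{\max}(W^\star)=\max_\rho\Tr(W^\star\rho)=G=\Tr(W^\star\phi)$. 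Hence $\ket\phi$ lies in the top eigenspace of $W^\star$, giving $W^\star\geq G\proj\phi$, and feeding this into the witness constraint yields $\braket{\phi|\sigma|\phi}\leq 1/G=2^{-D_{\max,\FF_n}(\phi)}$ for every $\sigma\in\FF_n$. Taking the maximum over $\sigma$ gives $D_{\min,\FF_n}(\phi)=-\log\max_\sigma\braket{\phi|\sigma|\phi}\geq D_{\max,\FF_n}(\phi)$, the desired inequality. I expect this duality-plus-eigenspace step to be the main obstacle, since it is where hypothesis (1) genuinely does its work.

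With the collapse established, the rest is bookkeeping. Hypothesis (2) yields additivity of $D_{\min}$, so $D^\infty_{\min,\FF}(\psi)=D_{\min,\FF}(\psi)$ and the converse reads $\DD^\infty_{\Omega,\FF}(\rho)/D_{\min,\FF}(\psi)$. In the affine case (3a), the collapse together with (2) gives $D_{\FF_n}(\psi^{\otimes n})=D_{\min,\FF_n}(\psi^{\otimes n})=n\,D_{\min,\FF}(\psi)$, hence $D^\infty_\FF(\psi)=D_{\min,\FF}(\psi)$, so Proposition~\ref{prop:achievability_full} matches the converse. In case (3b), hypothesis (3b) identifies $\log\!\left(1+R_{s,\FF_n}(\psi^{\otimes n})\right)$ with $D_{\max,\FF_n}(\psi^{\otimes n})$, which by the collapse and (2) equals $n\,D_{\min,\FF}(\psi)$; dividing by $n$ and letting $n\to\infty$ gives $\regrob(\psi)=D_{\min,\FF}(\psi)$, so Proposition~\ref{prop:dist_achiev} matches the converse. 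In either case the achievable lower bound coincides with the strong-converse upper bound, and the sandwich $\DD^\infty_{\Omega,\FF}(\rho)/D_{\min,\FF}(\psi)\leq r_{\mathrm{prob}}\leq r^\dagger_{\mathrm{prob}}\leq \DD^\infty_{\Omega,\FF}(\rho)/D_{\min,\FF}(\psi)$ closes the proof.
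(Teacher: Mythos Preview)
Your argument is correct and follows the same sandwich strategy as the paper: bound $r^\dagger_{\mathrm{prob}}$ from above via Proposition~\ref{prop:dist_bound_rate_full}, bound $r_{\mathrm{prob}}$ from below via Proposition~\ref{prop:achievability_full} (case (3a)) or an achievability result with the standard robustness in the denominator (case (3b)), and then show that all the denominators collapse to $D_{\min,\FF}(\psi)$ thanks to hypotheses (1)--(2). The one notable difference is that the paper obtains the key collapse $D_{\min,\FF_n}(\psi^{\otimes n})=D_{\FF_n}(\psi^{\otimes n})=D_{\max,\FF_n}(\psi^{\otimes n})$ by citing it as Eq.~\eqref{eq:golden} from~\cite{regula_2020}, whereas you give a clean self-contained proof via the dual witness formulation and the observation that the optimal witness $W^\star$ must have $\ket{\psi^{\otimes n}}$ in its top eigenspace --- this is a nice addition that makes the argument independent of the external reference. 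A minor cosmetic difference is that for (3b) you invoke Proposition~\ref{prop:dist_achiev} (with the unsmoothed $\regrob$) rather than Proposition~\ref{prop:achievability_Rs_full} as the paper does; since hypothesis (3b) fixes $\log(1+R_{s,\FF_n}(\psi^{\otimes n}))$ exactly at every $n$, both choices lead to the same conclusion.
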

\end{boxed}

\begin{proof}
The reason for introducing condition~(1) is perhaps not immediately clear: we do it because any state $\psi$ which maximises the max-relative entropy among all state of some dimension necessarily satisfies~\cite{regula_2020}
\begin{equation}\begin{aligned}\label{eq:golden}
  D_{\max,\FF}(\psi) = D_{\F}(\psi) = D_{\min,\FF}(\psi),
\end{aligned}\end{equation}
which is helpful in establishing an equality between our upper and lower bounds.

On the one hand, Proposition~\ref{prop:dist_bound_rate_full} ensures that
\begin{equation}\begin{aligned}
  r_{\rm prob}(\rho \to \psi) \leq r^\dagger_{\rm prob}(\rho \to \psi) \leq \frac{\DD^\infty_{\Omega,\FF} (\rho)}{D^\infty_{\min,\FF}(\psi)}.
\end{aligned}\end{equation}
On the other hand, condition~(3a) gives
\begin{equation}\begin{aligned}
  r_{\rm prob}(\rho \to \psi) \textgeq{(i)} \frac{\DD^\infty_{\Omega,\FF} (\rho)}{D^\infty_{\FF}(\psi)} \textgeq{(ii)} \frac{\DD^\infty_{\Omega,\FF} (\rho)}{D^\infty_{\max,\FF}(\psi)} \textgeq{(iii)} \frac{\DD^\infty_{\Omega,\FF} (\rho)}{D^\infty_{\min,\FF}(\psi)}
\end{aligned}\end{equation}
where we used: (i)~Proposition~\ref{prop:achievability_full}, (ii)~the fact that $D(\rho\|\sigma) \leq D_{\max}(\rho\|\sigma)$ for all states~\cite{datta_2009}, (iii)~condition~(1) and the ensuing Eq.~\eqref{eq:golden}.
Similarly, in the case of condition~(3b) we get
\begin{equation}\begin{aligned}
  r_{\rm prob}(\rho \to \psi) \textgeq{(i)} \frac{\DD^\infty_{\Omega,\FF} (\rho)}{\DD^\infty_{s,\FF}(\psi)} \texteq{(ii)} \frac{\DD^\infty_{\Omega,\FF} (\rho)}{D^\infty_{\max,\FF}(\psi)} \textgeq{(iii)} \frac{\DD^\infty_{\Omega,\FF} (\rho)}{D^\infty_{\min,\FF}(\psi)}
\end{aligned}\end{equation}
where we used: (i)~Proposition~\ref{prop:achievability_Rs_full}, (ii)~condition (3b), (iii)~Eq.~\eqref{eq:golden}. Invoking condition~(2) concludes the proof.
\end{proof}

We note that a state does not need to be a maximally resourceful one in order to satisfy $D_{\max,\FF}(\psi) = D_{\min,\FF}(\psi)$ (a counterexample being e.g.\ the Clifford magic states discussed in~\cite{bravyi_2019} in the resource theory of non-stabiliser quantum computation), but (1) is nevertheless a useful assumption to state as it can be satisfied in any given resource theory.

The additivity of the measures (condition (2)) and the equality between the standard and generalised robustness measures (condition (3b)) may be more difficult to ensure, depending on the given theory. We state them here because they are satisfied for some of the most important states of interest, and notably for the maximally entangled state in entanglement theory~\cite{vidal_1999,harrow_2003}.


\section{On the probability of success in the asymptotic limit}\label{sec:app_vanishing}

Recall that
\begin{equation}\begin{aligned}
  r_{\mathrm{prob} > 0}(\rho \to \omega) \coloneqq \sup \lsetr r\! \barr (\Lambda_n)_n \in \OO,\; \lim_{n \to \infty}  \frac12 \norm{\frac{\Lambda_n(\rho^{\otimes n})}{\Tr \Lambda_n(\rho^{\otimes n})} - \omega^{\otimes \floor{rn}}}{1} \!\!= 0,\; \liminf_{n \to \infty} \Tr \Lambda_n(\rho^{\otimes n}) >\! 0 \rsetr\!.
\end{aligned}\end{equation}

{
\renewcommand{\thetheorem}{\getrefnumber{prop:vanishing_success}}
\begin{boxed}{white}
\begin{proposition}
The rate of any state transformation with a probability of success that does not asymptotically vanish satisfies
\begin{equation}\begin{aligned} \label{eq:r_prob}
  r_{\rm{prob} >0}(\rho \to \omega) \leq \frac{D^\infty_\FF(\rho)}{D_\FF^\infty(\omega)}.
\end{aligned}\end{equation}
\end{proposition}
\end{boxed}
}

\begin{remark}
Before delving into the proof of the above result, it is instructive to try to use standard techniques (see e.g.~\cite{winter_2016} or~\cite[Eq.~(16)]{ferrari_2020}) and see exactly how and why they fail when applied to the probabilistic case. To this end, consider a sequence of transformations $(\Lambda_n)_n\in \OO$ with the property that $p_n\coloneqq \Tr \Lambda_n(\rho^{\otimes n})$ satisfies that $\liminf_{n\to\infty} p_n \eqqcolon p > 0$, and moreover $\omega_n \coloneqq \frac{1}{p_n}\, \Lambda(\rho^{\otimes n})$ has the property that $\ve_n \coloneqq \frac12 \norm{\omega_n - \omega^{\otimes \floor{rn}}}{1}$ obeys $\lim_{n\to\infty} \ve_n = 0$. Then one can pick an arbitrary sequence of free states $\sigma_n\in \FF$, and define the deterministic maps $\Lambda'_n(X) \coloneqq \Lambda_n(X) + (1-\Tr \Lambda_n(X))\sigma_n$. Since $\Lambda'_n\in \OO$ is a deterministic free operation for all $n$, we can try to apply the standard procedure to this new object. Doing so gives
\begin{equation}\begin{aligned} \label{eq:standard_techniques_fail}
D_\FF(\rho^{\otimes n}) &\geq D_\FF\left(\Lambda'_n(\rho^{\otimes n})\right) \\
&= D_\FF\left( p_n \omega_n + (1-p_n)\sigma_n\right) \\
&\geq p_n D_\FF\left( \omega_n \right) + (1-p_n)\, D_\FF\left( \sigma_n\right) - h_2(p_n) \\
&= p_n D_\FF\left( \omega_n \right) - h_2(p_n) \\
&\geq p_n \left[ D_\FF\left(\omega^{\otimes \floor{rn}}\right) - \floor{rn} \zeta \ve_n - (1+\ve_n) h_2\left(\frac{\ve_n}{1+\ve_n}\right) \right] - h_2(p_n) .
\end{aligned}\end{equation}
Here, the first line is by monotonicity of $D_\FF$, while the third descends from the observation that $D_\FF$ is `not too convex' (see e.g.~\cite[proof of Lemma~7]{winter_2016}), in turn an elementary consequence of the `not-too-concavity' of the von Neumann entropy, seemingly first established by Lanford and Robinson~\cite[Theorem~1]{Lanford68}. Finally, the last line of~\eqref{eq:standard_techniques_fail} is the most delicate. It follows from the asymptotic continuity of $D_\FF$ as established in~\cite[Lemma~7]{winter_2016}, which states that
\begin{equation}\begin{aligned}
\left|D_\FF(\rho) - D_\FF(\rho')\right| \leq \kappa \ve + (1+\ve)h_2\left(\frac{\ve}{1+\ve}\right)
\end{aligned}\end{equation}
for every pair of states $\rho,\rho'\in \D(\H)$ at trace distance $\ve \coloneqq \frac12 \norm{\rho-\rho'}{1}$, where $\kappa\coloneqq \sup_{\xi\in \D(\H)} D_\FF(\xi)$, the supremum being over all states $\xi$ on the same space $\H$ as $\rho$ and $\rho'$. In our case, since we know by Axiom~II in Section~\ref{sec:prelim} that there exists a full-rank free state $\sigma\geq \mu \id>0$ on the same space as $\omega$ such that $\sigma^{\otimes m}$ is also free for every $m$, calling $\H$ the Hilbert space pertaining to $\omega$ we see that
\begin{equation}\begin{aligned}
\kappa = \sup_{\xi\in \D(\H^{\otimes m})} D_\FF(\xi) \leq \sup_{\xi\in \D(\H^{\otimes m})} D(\xi\|\sigma^{\otimes m}) \leq -m \log \mu\, ,
\end{aligned}\end{equation}
which yields the desired inequality if one sets $m=\floor{rn}$ and $\zeta \coloneqq -\log \mu$.

Now, we can divide both sides of~\eqref{eq:standard_techniques_fail} by $n$ and take the limit $n\to\infty$. We obtain that
\begin{equation}\begin{aligned}
D_\FF^\infty(\rho) \geq p r D_\FF^\infty(\omega)\, ,
\end{aligned}\end{equation}
which does translate indeed into an upper bound on $r(\rho \to \omega)$, but still features an explicit dependence on $p$ and therefore yields no non-trivial upper bound on $r_{\rm{prob} >0}(\rho \to \omega)$.

The fundamental problem with the above technique is in the application of the not-too-concavity of $D_\FF$, which makes a coefficient $p_n$ appear in front of the relative entropy on the right-hand side. What we do below, instead, is substantially different: instead of leveraging the monotonicity of $D_\FF$ directly, we look at the projected relative entropy. As we saw, the key properties of this quantity is that, being invariant under rescaling, it can remove the explicit dependence on probabilities, and furthermore it is connected with the standard relative entropy of resource via our asymptotic equipartition property (Lemma~\ref{lem:omega_reg}).  
\end{remark}

\begin{proof}[Proof of Proposition~\ref{prop:vanishing_success}]
Assume that $r$ is an achievable rate for~\eqref{eq:r_prob}; that is, there exists a sequence of protocols $(\Lambda_n)_n \in \OO$ such that $\frac{\Lambda_n(\rho^{\otimes n})}{\Tr \Lambda_n(\rho^{\otimes n})} = \tau_n$ with $\frac12 \norm{\tau_n - \omega^{\otimes \floor{rn}}}{1} = \ve_n \to 0$, and furthermore $\liminf_{n\to \infty} \Tr \Lambda_n(\rho^{\otimes n}) \eqqcolon p > 0$. Define
\begin{equation}\begin{aligned}
  \DD_{\Omega,\FF}^\delta(\rho) \coloneqq \min_{\frac12 \norm{\rho-\rho'}{1} \leq \delta} \DD_{\Omega,\FF}(\rho'),
\end{aligned}\end{equation}
where the smoothing is over normalised density matrices $\rho'$. 
Defining the generalised trace distance as $D_1(X,Y) \coloneqq \frac12 \norm{X-Y}{1} + \frac12 \left|\Tr(X-Y)\right|$, this quantity satisfies that $D_1(\Lambda(\rho),\Lambda(\rho')) \leq D_1(\rho,\rho')$ under the action of any completely positive and trace non-increasing map $\Lambda$~\cite[Prop.~3.8]{tomamichel_2016}. We thus have that, for all states $\rho$ and $\rho'$, it holds that
\begin{equation}\begin{aligned}
  \frac12 \norm{\frac{\Lambda(\rho)}{\Tr \Lambda(\rho)} - \frac{\Lambda(\rho')}{\Tr \Lambda(\rho')} }{1} &\leq  \frac12 \norm{\frac{\Lambda(\rho)}{\Tr \Lambda(\rho)} - \frac{\Lambda(\rho')}{\Tr \Lambda(\rho)} }{1} +  \frac12 \norm{\frac{\Lambda(\rho')}{\Tr \Lambda(\rho)} - \frac{\Lambda(\rho')}{\Tr \Lambda(\rho')} }{1}\\
  &= \frac12 \norm{ \frac{ \Lambda(\rho) - \Lambda(\rho')}{\Tr \Lambda(\rho)}}{1} + \frac12 \norm{ \frac{\Lambda(\rho') \Tr\Lambda(\rho') - \Lambda(\rho') \Tr \Lambda(\rho)}{\Tr \Lambda(\rho) \Tr \Lambda(\rho')}}{1}\\
  &= \frac12 \norm{ \frac{ \Lambda(\rho) - \Lambda(\rho')}{\Tr \Lambda(\rho)}}{1} + \frac12 \left| \frac{\Tr \Lambda(\rho') - \Tr \Lambda(\rho)}{\Tr \Lambda(\rho)} \right|\\
  &\leq \frac{\frac12\norm{\rho-\rho'}{1}}{\Tr\Lambda(\rho)}.
\end{aligned}\end{equation}
From this, we immediately have that
\begin{equation}\begin{aligned}
  \DD_{\Omega,\FF}^\delta(\rho) \geq \DD_{\Omega,\FF}^{\delta/\Tr \Lambda(\rho)} \left(\frac{\Lambda(\rho)}{\Tr\Lambda(\rho)}\right)
\end{aligned}\end{equation}
for all free operations $\Lambda \in \OO$, using the monotonicity of $\DD_{\Omega,\FF}$.
Now, observe that for any $\delta$ such that $0<\delta < p$ it holds that
\begin{equation} \begin{aligned}
\limsup_{n\to\infty} \left(\frac{\delta}{\Tr\Lambda_n(\rho^{\otimes n})} + \ve_n\right) = \frac{\delta}{p} < 1\, .
\end{aligned}\end{equation}
Therefore, for all sufficiently large integers $n$ we have that $\delta/\Tr\Lambda_n(\rho^{\otimes n}) + \ve_n < 1$. Hence,
\begin{equation}\begin{aligned}
  \DD_{\Omega,\FF}^\delta(\rho^{\otimes n}) &\geq \DD_{\Omega,\FF}^{\delta/\Tr\Lambda_n(\rho^{\otimes n})}(\tau_n)\\
  &\geq \min_{\frac12 \norm{\pi_n-\omega^{\otimes \floor{r n}}}{1}\leq \ve_n} \DD_{\Omega,\FF}^{\delta/\Tr\Lambda_n(\rho^{\otimes n})}(\pi_n)\\
  &\geq \DD_{\Omega,\FF}^{\delta/\Tr\Lambda_n(\rho^{\otimes n}) + \ve_n} (\omega^{\otimes \floor{rn}}),
\end{aligned}\end{equation}
where we used that for any state $\pi'_n$ such that $\frac12\norm{\pi'_n - \pi_n}{1} \leq \delta/\Tr\Lambda_n(\rho^{\otimes n})$ it holds that
\begin{equation}\begin{aligned}
  \frac12 \norm{\omega^{\otimes \floor{rn}} - \pi'_n}{1} \leq \frac12 \norm{\omega^{\otimes \floor{rn}} - \pi_n}{1} + \frac12 \norm{\pi_n - \pi'_n}{1} \leq \ve_n + 
  \frac{\delta}{\Tr\Lambda_n(\rho^{\otimes n})}.
\end{aligned}\end{equation}
This means that
\begin{equation}\begin{aligned}
  \lim_{\delta \to 0} \liminf_{n \to \infty} \frac1n \DD_{\Omega,\FF}^\delta(\rho^{\otimes n}) &\geq \lim_{\delta \to 0}  \liminf_{n \to \infty} \frac1n \DD_{\Omega,\FF}^{\delta/\Tr\Lambda_n(\rho^{\otimes n}) + \ve_n}(\omega^{\otimes \floor{rn}})\\
  & \geq \lim_{\delta \to 0} \, r\,  \liminf_{n \to \infty} \frac1n \DD_{\Omega,\FF}^{\delta/\Tr\Lambda_n(\rho^{\otimes n}) + \ve_n}(\omega^{\otimes n})\\
  & \geq r\, \lim_{\eta \to 0} \liminf_{n \to \infty} \frac1n \DD_{\Omega,\FF}^{\eta}(\omega^{\otimes n}),
\end{aligned}\end{equation}
where the second line follows analogously to \eqref{eq:limsup_thingy} and the last line follows since $\delta/\Tr\Lambda_n(\rho^{\otimes n}) + \ve_n \to 0$ as $n\to\infty$ and $\delta \to 0$.
Using Lemma~\ref{lem:omega_reg_full} gives the statement of the Proposition.
\end{proof}


\section{Isotropic states}\label{app:iso}

We consider a bipartite quantum system with two subsystems of dimension $d$. 
Below we will refer to two different sets of quantum states: the set of separable states $\SEP$, and the set of PPT (positive partial transpose) states $\PPT \coloneqq \lset \sigma \in \D(\H) \bar \sigma^\Gamma \in \D(\H) \rset$,  with $\Gamma$ denoting partial transposition over either of the two subsystems. It is well known that the partial transpose of any separable state is positive~\cite{horodecki_1996-1}.

Recall that isotropic states are defined as
\begin{equation}\begin{aligned}
  \rho_p = p \Phi_d + (1-p) \frac{\id - \Phi_d}{d^2-1}
\end{aligned}\end{equation}
and that $\rho_p \in \SEP \iff \rho_p \in \PPT \iff p \in \left[0, \frac1d\right]$~\cite{horodecki_1999-1}.

\enlargethispage{\baselineskip}
\begin{boxed}{white}
\begin{sproposition}\label{lem:isotropic}
For every isotropic state $\rho_p$ with local dimension $d$ and all $n \in \mathbb{N}$,
\begin{equation}\begin{aligned}
  \frac1n \DD_{\Omega,\SEP}(\rho_p^{\otimes n}) = \DD_{\Omega,\SEP}(\rho_p) = \begin{cases} \log \frac{p(d - 1)}{1 - p} & p \geq \frac1d,\\
  0 & p \leq \frac1d. \end{cases}
\end{aligned}\end{equation}
The same is true if $\SEP$ is replaced by the set $\PPT$.
\end{sproposition}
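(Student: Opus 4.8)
The plan is to pass to the multiplicative quantity $\Omega_{\FF}(\rho) = 2^{\DD_{\Omega,\FF}(\rho)}$ and to exploit the two convex-optimization forms recorded among the properties of $\DD_{\Omega,\FF}$: the primal form $\Omega_{\FF}(\rho) = \inf\{\gamma : \rho\leq\widetilde\sigma\leq\gamma\rho,\ \widetilde\sigma\in\cone(\FF)\}$ for the upper bound, and the dual form $\Omega_{\FF}(\rho) = \sup\{\Tr[A\rho]/\Tr[B\rho] : \Tr[A\sigma]\leq\Tr[B\sigma]\ \forall\sigma\in\FF,\ A,B\geq 0\}$ for the lower bound. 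The statement then reduces to exhibiting one explicit feasible point for each form and checking that the two values coincide and are \emph{exactly} $(\tfrac{p(d-1)}{1-p})^n$ for every $n$. Throughout I treat $p\geq 1/d$; the case $p\leq 1/d$ is immediate, since then $\rho_p$ (hence $\rho_p^{\otimes n}$) is separable and PPT, so both quantities vanish by faithfulness, consistently with the claimed value $\log\tfrac{p(d-1)}{1-p}$ being $0$ at $p=1/d$.

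For the upper bound I would take the boundary separable isotropic state $\sigma_\star = \tfrac1d\Phi_d + \tfrac{1}{d(d+1)}(\id-\Phi_d)\in\SEP$ and verify at the single-copy level the sandwich $\rho_p\leq pd\,\sigma_\star\leq\tfrac{p(d-1)}{1-p}\,\rho_p$, which is a two-eigenvalue check on the blocks $\Phi_d$ and $\id-\Phi_d$ (for $p\geq 1/d$ the $\Phi_d$ block binds on the left and the complement block binds on the right). Tensoring these operator inequalities between positive operators gives $\rho_p^{\otimes n}\leq(pd)^n\sigma_\star^{\otimes n}\leq(\tfrac{p(d-1)}{1-p})^n\rho_p^{\otimes n}$, so $\widetilde\sigma = (pd)^n\sigma_\star^{\otimes n}\in\cone(\SEP)\subseteq\cone(\PPT)$ is primal-feasible and yields $\Omega_{\PPT}(\rho_p^{\otimes n})\leq\Omega_{\SEP}(\rho_p^{\otimes n})\leq(\tfrac{p(d-1)}{1-p})^n$ for all $n$ simultaneously.

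For the lower bound I would use the dual witness $A = \Phi_d^{\otimes n}$ and $B = (d-1)^{-n}(\id-\Phi_d)^{\otimes n}$. Since $\Tr[\Phi_d^{\otimes n}\rho_p^{\otimes n}] = p^n$ and $\Tr[(\id-\Phi_d)^{\otimes n}\rho_p^{\otimes n}] = (1-p)^n$, the objective is precisely $(\tfrac{p(d-1)}{1-p})^n$, so everything hinges on dual feasibility over $\PPT$, i.e.\ on $\Tr[\Phi_d^{\otimes n}\sigma]\leq(d-1)^{-n}\Tr[(\id-\Phi_d)^{\otimes n}\sigma]$ for all $\sigma\in\PPT$. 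I expect this to be the main obstacle, and the route through it is the partial transpose: setting $\tau\coloneqq\sigma^\Gamma\geq 0$ and using $\Phi_d^\Gamma = \tfrac1d F$ (with $F$ the swap operator) and $(\id-\Phi_d)^\Gamma = \id - \tfrac1d F$, the feasibility condition collapses to the single operator inequality $(d\,\id - F)^{\otimes n}\geq(d-1)^n F^{\otimes n}$. This I would verify by simultaneous diagonalization: on the joint eigenspace with $a$ copies symmetric ($F=+1$) and $b=n-a$ copies antisymmetric ($F=-1$), the left side has eigenvalue $(d-1)^a(d+1)^b$ and the right side $(d-1)^n(-1)^b$, and the difference is nonnegative for both parities of $b$ (using $(d+1)^b\geq(d-1)^b$ when $b$ is even, and positivity of both terms when $b$ is odd). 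This establishes $\Omega_{\PPT}(\rho_p^{\otimes n})\geq(\tfrac{p(d-1)}{1-p})^n$.

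Finally, combining the two bounds with the monotonicity $\Omega_{\SEP}\geq\Omega_{\PPT}$ (which holds because $\SEP\subseteq\PPT$ and $\DD_{\Omega,\FF}$ minimizes over the free set) sandwiches both measures between identical quantities, $(\tfrac{p(d-1)}{1-p})^n\leq\Omega_{\PPT}(\rho_p^{\otimes n})\leq\Omega_{\SEP}(\rho_p^{\otimes n})\leq(\tfrac{p(d-1)}{1-p})^n$. Taking logarithms and dividing by $n$ then gives the single-letter value $\tfrac1n\DD_{\Omega,\SEP}(\rho_p^{\otimes n}) = \DD_{\Omega,\SEP}(\rho_p) = \log\tfrac{p(d-1)}{1-p}$ for every $n$, identically for $\SEP$ and $\PPT$. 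As an optional remark, one could instead first twirl by $(U\otimes\overline{U})^{\otimes n}$ and by permutations of the copies to reduce any candidate $\sigma$ to the commutative algebra spanned by the projectors onto "$k$ copies maximally entangled", making the optimization manifestly classical; the explicit witness above, however, bypasses this reduction entirely.
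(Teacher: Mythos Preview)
Your proposal is correct and follows essentially the same route as the paper: the primal sandwich with the boundary isotropic state $\sigma_\star=\rho_{1/d}$ for the upper bound, and the dual witness $(\Phi_d,\,\id-\Phi_d)$ (up to scaling) checked via partial transpose for the lower bound. The only cosmetic differences are that the paper invokes submultiplicativity of $\Omega_\FF$ rather than tensoring the operator sandwich, and for the $n$-copy dual feasibility it separately bounds $\max_{\sigma\in\PPT}\Tr A^{\otimes n}\sigma$ and $\min_{\sigma\in\PPT}\Tr B^{\otimes n}\sigma$ via $\lambda_{\max}$ and $\lambda_{\min}$ of the partial transposes, whereas you package the same eigenvalue check into the single operator inequality $(d\,\id-F)^{\otimes n}\geq (d-1)^n F^{\otimes n}$.
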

\end{boxed}
\begin{proof}
The separable case follows from the faithfulness of $\Omega_\SEP$, so assume that $p \geq 1/d$. 
The fact that $\Omega_\SEP(\rho_p) \leq \frac{p(d - 1)}{1 - p}$ can then be seen from the feasible solution
\begin{equation}\begin{aligned}
  \rho_p &\leq p d \,\rho_{1/d},\\
  \rho_{1/d} &\leq \frac{d-1}{(1-p) d} \,\rho_p.
\end{aligned}\end{equation}
Submultiplicativity of $\Omega_\SEP$ and the inclusion $\SEP \subseteq \PPT$ then immediately implies that
\begin{equation}\begin{aligned}\label{eq:iso_upper}
\Omega_{\PPT_n}(\rho_p^{\otimes n})^{1/n} \leq \Omega_{\SEP_n}(\rho_p^{\otimes n})^{1/n} \leq \frac{p(d - 1)}{1 - p}.
\end{aligned}\end{equation}

For the other inequality, consider the dual form of $\Omega_\PPT$ (and analogously $\Omega_\SEP$):
\begin{align}\label{eq:dual}
  \Omega_\PPT(\rho) &= \sup \lset \Tr A \rho \bar \Tr B \rho = 1,\; \Tr (B-A) \sigma \geq 0 \;\forall \sigma \in \PPT,\; A, B \geq 0\rset.
 \end{align}
 Constructing the feasible solutions
\begin{equation}\begin{aligned}
  A &= \frac{d-1}{1-p} \, \Phi_d,\\
  B &= \frac{\id - \Phi_d}{1-p}
\end{aligned}\end{equation}
we see that $\Tr A \rho_p = \frac{p(d - 1)}{1 - p}$ and $\Tr B \rho_p = 1$, so it remains to show that $B - A$ has a non-negative overlap with any PPT (or separable) state. Consider first that
\begin{equation}\begin{aligned}
  \max_{\sigma \in \PPT} \Tr A \sigma &= \max_{\sigma \in \PPT} \Tr A^\Gamma \sigma^\Gamma\\
  &\leq \max_{\rho \in \D(\H)} \Tr A^\Gamma \rho\\
  &= \lambda_{\max} \big(A^\Gamma\big)\\
  &= \frac{d-1}{1-p} \,\lambda_{\max}\left(\frac1d F\right)\\
  &= \frac{d-1}{d (1-p)},
  \end{aligned}\end{equation}
  where $\lambda_{\max}$ denotes the largest eigenvalue and $F$ is the swap operator, the eigenvalues of which are $\pm 1$. We also see that
\begin{equation}\begin{aligned}
   \min_{\sigma \in \PPT} \Tr B \sigma &= \min_{\sigma \in \PPT} B^\Gamma \sigma^\Gamma\\
  &\geq \min_{\rho \in \D(\H)} \Tr B^\Gamma \rho\\
  &= \frac{1}{1-p} \lambda_{\min} \left(\id - \frac1d F\right)\\
  &= \frac{1}{1-p} \left(1-\frac1d\right),
\end{aligned}\end{equation}
where $\lambda_{\min}$ stands for the smallest eigenvalue. We thus have $\Tr B \sigma \geq \Tr A \sigma$ for every PPT state $\sigma$, and hence $\Omega_\PPT(\rho_p) \geq \frac{p(d - 1)}{1 - p}$ which means that equality must hold. To conclude the $n$-copy result, it suffices to notice that $A^{\otimes n}$ and $B^{\otimes n}$ are feasible solutions for $\rho_p^{\otimes n}$: the sufficiency of $A^{\otimes n}$ follows from the fact that $(A^{\otimes n})^\Gamma = (A^\Gamma)^{\otimes n}$ and the eigenvalues of $F^{\otimes n}$ are clearly $\pm 1$, and the sufficiency of $B^{\otimes n}$ follows analogously by noting that $\lambda_{\min}(P)$ is a multiplicative quantity for every positive semidefinite~$P$. This implies that
\begin{equation}\begin{aligned}
  \Omega_\SEP(\rho_p^{\otimes n}) \geq \Omega_\PPT(\rho_p^{\otimes n}) \geq \Tr \rho_p^{\otimes n} A^{\otimes n} = \left(\frac{p(d - 1)}{1 - p}\right)^n,
\end{aligned}\end{equation}
which together with~\eqref{eq:iso_upper} means that equality must hold in the above.
\end{proof}


 \bibliographystyle{apsc}
 \bibliography{main}
 
\end{document}